\theoremstyle{definition}
\theoremstyle{theorem}
\newtheorem{theorem}{Theorem}
\theoremstyle{proposition}
\theoremstyle{lemma}
\newtheorem{lemma}{Lemma}
\theoremstyle{claim}
\newtheorem*{claim}{Claim}
\theoremstyle{corollary}
\newtheorem*{corollary}{Corollary}
\def\bbl@set@language#1{%
  \edef\languagename{%
    \ifnum\escapechar=\expandafter`\string#1\@empty
    \else\string#1\@empty\fi}%
  \@ifundefined{babel@language@alias@\languagename}{}{%
    \edef\languagename{\@nameuse{babel@language@alias@\languagename}}%
  }%
  \select@language{\languagename}%
  \expandafter\ifx\csname date\languagename\endcsname\relax\else
    \if@filesw
      \protected@write\@auxout{}{\string\select@language{\languagename}}%
      \bbl@for\bbl@tempa\BabelContentsFiles{%
        \addtocontents{\bbl@tempa}{\xstring\select@language{\languagename}}}%
      \bbl@usehooks{write}{}%
    \fi
  \fi}
\newcommand{\DeclareLanguageAlias}[2]{%
  \global\@namedef{babel@language@alias@#1}{#2}%
}
\newcolumntype{C}{>{$\displaystyle} c <{$}}
\newcommand{\ie}{\textit{i.e.} }
\newcommand{\eps}{\epsilon}
\newcommand{\ot}{\otimes}
\newcommand{\ra}{\rightarrow}
\newcommand{\cO}{\mathcal{O}}
\newcommand{\I}{\mathbb{I}}
\newcommand\appref[1]{App.~\ref{#1}}
\newcommandx{\drawbox}[6][1=0,2=0,3=1,4=1,5=,6=]{
  \ifthenelse{\equal{#5}{}}{
    \draw[line width = 0.5pt] (#1,#2) rectangle (#3,#4);
  }{
    \draw[line width = 0.5pt, fill = #5] (#1,#2) rectangle (#3,#4);
  }
  \node () at (#1*0.5+#3*0.5,#2*0.5+#4*0.5) {#6};
}
\newcommandx{\regbox}[4][1=0,2=0,3=,4=]{
  \begin{scope}[shift={(#1,#2)}]
    \drawbox[0][0][1.5][1][#3][#4];
  \end{scope}
}
\newcommandx{\regboxt}[8][1=0,2=0,3=,4=,5=,6=,7=,8=]{
  \begin{scope}[shift={(#1,#2)}]
    \drawbox[0][0][1.5][1][#3][#4];
    \node[above] () at (0,1) {#5};
    \node[above] () at (1.5,1) {#6};
    \node[below] () at (1.5,0) {#7};
    \node[below] () at (0,0) {#8};
  \end{scope}
}
\newcommandx{\sqzbox}[5][1=0,2=0,3=,4=,5=]{
  \begin{scope}[shift={(#1,#2)}]
    \drawbox[0][0][2.25][1][#3][#4];
    \ifthenelse{\equal{#5}{s}}{
      \draw (0,0.1)--++(-0.1,0)--++(0,1)--++(2.25,0)--++(0,-0.1);
    }{}
  \end{scope}
}
\newcommandx{\ballon}[5][1=0,2=0,3=0.25,4=0.5,5=]{
  \begin{scope}[shift={(#1,#2)}]
    \pgfmathsetmacro{\r}{#4};
    \pgfmathsetmacro{\cent}{#3+#4};
    \draw (0,0)--++(0,#3);
    \draw (0,\cent) circle (\r);
    \draw (0,\cent+\r)--++(0,#3);
    \ifthenelse{\equal{#5}{}}{}{
      \node () at (0,\cent) {#5};
    }
  \end{scope}
}
\newcommandx{\wallebox}[5][1=,2=,3=,4=,5=]
{
  \fineq[-0.8ex][0.8][0.8]{
    \sqzbox[0][0];
    \ifthenelse{\equal{#1}{t}}{
      \ballon[0.5][1][0.25][0.5][$#2$];
      \ballon[1.75][1][0.25][0.5][$#3$];
    }{}
    \ifthenelse{\equal{#1}{b}}{
      \ballon[0.5][-1][0.25][0.5][$#4$];
      \ballon[1.75][-1][0.25][0.5][$#5$];
    }{}
    \ifthenelse{\equal{#1}{tb}}{
      \ballon[0.5][1][0.25][0.5][$#2$];
      \ballon[1.75][1][0.25][0.5][$#3$];
      \ballon[0.5][-1.5][0.25][0.5][$#4$];
      \ballon[1.75][-1.5][0.25][0.5][$#5$];
    }{}
  }
}
\newcommandx{\sufourpart}[5][1=0,2=0,3=0.25,4=0.5,5=]{
  \begin{scope}[shift={(#1,#2)}]
    \pgfmathsetmacro{\r}{#4};
    \pgfmathsetmacro{\cent}{#3+#4};
    \draw (0,0)--++(0,#3);
    \draw (-\r,\cent-\r) rectangle (\r,\cent+\r);
    \draw (0,\cent+\r)--++(0,#3);
    \ifthenelse{\equal{#5}{}}{}{
      \node () at (0,\cent) {#5};
    }
  \end{scope}
}
\newcommandx{\sufour}[5][1=,2=,3=,4=,5=]
{
  \fineq[-0.8ex][0.6][1]{
    \sqzbox[0][0][][$#1$];
    \sufourpart[0.5][1][0.25][0.5][$#2$];
    \sufourpart[1.75][1][0.25][0.5][$#3$];
    \sufourpart[0.5][-1.5][0.25][0.5][$#4$];
    \sufourpart[1.75][-1.5][0.25][0.5][$#5$];
  }
}
\newcommandx{\squbox}[4][1=0,2=0,3=,4=]{
  \begin{scope}[shift={(#1,#2)}]
    \drawbox[0][0][1][1][#3][#4];
  \end{scope}
}
\newcommandx{\uonesite}[7][1=0,2=0,3=,4=,5=,6=,7=]{
  \begin{scope}[shift={(#1,#2)}]
    \draw[line width = 0.5pt] (0.5,0.25)--++(0,-0.25) coordinate (A);
    \draw[line width = 0.5pt] (0.5,1.25)--++(0,0.25) coordinate (B);
    \ifthenelse{\equal{#7}{s}}{
      \squbox[0][0.25][#3][#4];
    }{
      \regbox[-0.25][0.25][#3][#4];      
    }
    \ifthenelse{\equal{#5}{}}{}{
      \node[below] () at (A) {#5};
    }
    \ifthenelse{\equal{#6}{}}{}{
      \node[above] () at (B) {#6};
    }
  \end{scope}
}
\newcommandx{\ugate}[5][1=0,2=0,3=,4=,5=]{
  \begin{scope}[shift={(#1,#2)}]
      \draw[line width = 0.5pt] (0,0)--++(0,0.25);
      \draw[line width = 0.5pt] (0,1.25)--++(0,0.25);
      \draw[line width = 0.5pt] (1,0)--++(0,0.25);
      \draw[line width = 0.5pt] (1,1.25)--++(0,0.25);
      \regbox[-0.25][0.25][#3][#4];
      \ifthenelse{\equal{#5}{p}}{
        \draw[line width = 2pt] (-0.1,0)--(1.1,0);
        \draw (0,0)--++(0,-0.1);
        \draw (1,0)--++(0,-0.1);
      }{}
  \end{scope}
}
\newcommandx{\dualgate}[8][1=0,2=0,3=,4=,5=,6=,7=,8=]{
  \begin{scope}[shift={(#1,#2)}]
    \ifthenelse{\equal{#3}{l} \OR \equal{#3}{}}{
      \draw (0,-0.2)--(0.2,-0.2)--(0.2,0.2)--(0,0.2);
      \draw (0.2,0.2)--(0.5,0.5);
      \node () at (0.6,0.6) {$#5$};
      \draw (0.2,-0.2)--(0.5,-0.5);
      \node () at (0.6,-0.6) {$#7$};
      \node () at (0,0) {$#8$};
    }{}
    \ifthenelse{\equal{#3}{r} \OR \equal{#3}{}}{
      \draw (0,-0.2)--(-0.2,-0.2)--(-0.2,0.2)--(0,0.2);
      \draw (-0.2,0.2)--(-0.5,0.5);
      \node () at (-0.6,0.6) {$#4$};
      \draw (-0.2,-0.2)--(-0.5,-0.5);
      \node () at (-0.6,-0.6) {$#6$};
      \node () at (0,0) {$#8$};
    }{}
  \end{scope}
}
\newcommandx{\ugater}[3][1=0,2=0,3=]{
  \begin{scope}[shift={(#1,#2)}]
      \draw[line width = 0.5pt] (1,0)--++(0,0.25);
      \draw[line width = 0.5pt] (1,1.25)--++(0,0.25);
      \ifthenelse{\equal{#3}{}}{
        \draw[line width = 0.5pt] (0.75,0.25)--++(0.5,0)--++(0,1)--++(-0.5,0);
      }{
        \draw[line width = 0.5pt, fill=#3] (0.75,0.25)--++(0.5,0)--++(0,1)--++(-0.5,0);
      }
  \end{scope}
}
\newcommandx{\ugatel}[3][1=0,2=0,3=]{
  \begin{scope}[shift={(#1,#2)}]
      \draw[line width = 0.5pt] (0,0)--++(0,0.25);
      \draw[line width = 0.5pt] (0,1.25)--++(0,0.25);
      \ifthenelse{\equal{#3}{}}{
        \draw[line width = 0.5pt] (0.25,0.25)--++(-0.5,0)--++(0,1)--++(0.5,0);
      }{
        \draw[line width = 0.5pt, fill=#3] (0.25,0.25)--++(-0.5,0)--++(0,1)--++(0.5,0);
      }
  \end{scope}
}
\newcommandx{\shortarc}[4][1=0,2=0,3=,4=]{
  \begin{scope}[shift={(#1,#2)}]
    \ifthenelse{\equal{#3}{r}}{
      \draw[line width = 0.5pt] (0,0) to[out=-90,in=0] (-0.1+0.06,0.1-0.18) to[out=180,in=-90] (-0.1,0.1);
      \pgfmathsetmacro{\flag}{-1};
    }{
      \draw[line width = 0.5pt] (0,0) to[out=90,in=0] (-0.06,0.18) to[out=180,in=90] (-0.1,0.1);
      \pgfmathsetmacro{\flag}{1};
    }
    \ifthenelse{\equal{#4}{l}}{
      \draw (0,0)--++(0,-0.2);
      \draw (-0.1,0.1)--++(0,-0.2);
    }{}
  \end{scope}
}
\newcommandx{\longarc}[4][1=0,2=0,3=,4=]{
  \begin{scope}[shift={(#1,#2)}]
    \ifthenelse{\equal{#3}{r}}{
      \draw[line width = 0.5pt] (0,0) to[out=-90,in=0] (-0.1*3+0.06*3,0.1*3-0.14*3) to[out=180,in=-90] (-0.1*3,0.1*3);
      \pgfmathsetmacro{\flag}{-1};
    }{
      \draw[line width = 0.5pt] (0,0) to[out=90,in=0] (-0.06*3,0.14*3) to[out=180,in=90] (-0.1*3,0.1*3);
      \pgfmathsetmacro{\flag}{1};
    }
    \ifthenelse{\equal{#4}{l}}{
      \draw (0,0)--++(0,-0.2);
      \draw (-0.1*3,0.1*3)--++(0,-0.2);
    }{}
  \end{scope}
}
\newcommandx{\idarc}[4][1=0,2=0,3=,4=]{
  \begin{scope}[shift={(#1,#2)}]
    \shortarc[0][0][#3][#4];
    \shortarc[-0.2][0.2][#3][#4];
  \end{scope}
}
\newcommandx{\swaparc}[4][1=0,2=0,3=,4=]{
  \begin{scope}[shift={(#1,#2)}]
    \longarc[0][0][#3][#4];
    \shortarc[-0.1][0.1][#3][#4];
  \end{scope}
}
\newcommandx{\idst}[3][1=0,2=0,3=]{
  \ifthenelse{\equal{#3}{r}}{
    \pgfmathsetmacro{\flag}{-1};
  }{
    \pgfmathsetmacro{\flag}{1};
  }
  \begin{scope}[shift={(#1,#2)}]
    \draw[line width = 0.5pt] (0,0) to[out=\flag*90,in=180] (0.15,\flag*0.2) to[out=0,in=\flag*90] (0.3,0);
    \draw[line width = 0.5pt] (0.4,0) to[out=\flag*90,in=180] (0.4+0.15,\flag*0.2) to[out=0,in=\flag*90] (0.7,0);
  \end{scope}
}
\newcommandx{\swapst}[3][1=0,2=0,3=]{
  \ifthenelse{\equal{#3}{r}}{
    \pgfmathsetmacro{\flag}{-1};
  }{
    \pgfmathsetmacro{\flag}{1};
  }
  \begin{scope}[shift={(#1,#2)}]
      \draw[line width = 0.5pt] (0,0) to[out=\flag*90,in=180] (0.4,\flag*0.25) to[out=0,in=\flag*90] (0.8,0);
      \draw[line width = 0.5pt] (0.3,0) to[out=\flag*90,in=180] (0.4,\flag*0.15) to[out=0,in=\flag*90] (0.5,0);
  \end{scope}
}
\newcommandx{\uonesitestack}[4][1=,2=,3=1,4=]
{
  \begin{scope}[shift={(#1,#2)}]
    \foreach \x in {#3,...,1}{
      \pgfmathsetmacro{\shiftx}{-0.1*(2*\x-2)};
      \pgfmathsetmacro{\shifty}{0.1*(2*\x-2)};
      \uonesite[\shiftx-0.1][\shifty+0.1][blue!50];
      \uonesite[\shiftx][\shifty][red!50];
    }
    \ifthenelse{\equal{#4}{id}}{
      \idarc[0.5][1.5];
    }{}
    \ifthenelse{\equal{#4}{swap}}{
      \swaparc[0.5][1.5];
    }{}
  \end{scope}
}
\newcommandx{\ugatestack}[5][1=0,2=0,3=1,4=,5=]
{
  \begin{scope}[shift={(#1,#2)}]
    \foreach \x in {#3,...,1}{
      \pgfmathsetmacro{\shiftx}{-0.1*(2*\x-2)};
      \pgfmathsetmacro{\shifty}{0.1*(2*\x-2)};
      \ugate[\shiftx-0.1][\shifty+0.1][blue!50];
      \ugate[\shiftx][\shifty][red!50];
    }
    \ifthenelse{\equal{#4}{id}}{
      \idarc[0][1.5];
    }{}
    \ifthenelse{\equal{#4}{swap}}{
      \swaparc[0][1.5];
    }{}
    \ifthenelse{\equal{#5}{id}}{
      \idarc[1][1.5];
    }{}
    \ifthenelse{\equal{#5}{swap}}{
      \swaparc[1][1.5];
    }{}
  \end{scope}
}
\newcommandx{\mpsa}[7][1=0,2=0,3=,4=,5=,6=,7=]
{
  \begin{scope}[shift={(#1,#2)}]
    \ifthenelse{\equal{#3}{}}{
      \draw (0,0) circle (0.5);
    }{
      \draw[fill=#3] (0,0) circle (0.5);
    }
    \node () at (0,0) {#4};
    \draw (-0.5,0)--++(-0.5,0);
    \draw (0.5,0)--++(0.5,0);
    \draw (0,0.5)--++(0,0.5);
    \node () at (-1.5,0) {#5};
    \node () at (0,1.5) {#6};
    \node () at (1.5,0) {#7};
  \end{scope}
}
\newcommandx{\mpsastack}[7][1=0,2=0,3=,4=,5=,6=,7=]
{
  \begin{scope}[shift={(#1,#2)}]
    \mpsa[-0.3][0.3][blue!50][#4][#5][#6][#7];
    \mpsa[-0.2][0.2][red!50][#4][#5][#6][#7];
    \mpsa[-0.1][0.1][blue!50][#4][#5][#6][#7];
    \mpsa[0][0][red!50][#4][#5][#6][#7];
  \end{scope}
}
\newcommandx{\fineq}[4][1=-.8ex,2=1,3=1]{
  \begin{tikzpicture}[baseline={([yshift=#1]current  bounding  box.center)}, scale = #2, every node/.style={scale = #3}]
    #4
  \end{tikzpicture}
}
\newcommandx{\ideq}[1][1=]{
  \fineq{
    \idst[0][0][#1]
  }
}
\newcommandx{\swapeq}[1][1=]{
  \fineq{
    \swapst[0][0][#1]
  }
}
\newcommandx{\idket}[0]{
  |
  \fineq[-0.6ex]{
    \idst[0][0][r]
  } \rangle 
}
\newcommandx{\swapket}[0]{
  |
  \fineq[-0.4ex]{
    \swapst[0][0][r]
  } \rangle 
}
\newcommandx{\idbra}[0]{
  \langle 
  \fineq{
    \idst[0][0][]
  } |
}
\newcommandx{\swapbra}[0]{
  \langle 
  \fineq{
    \swapst[0][0][]
  } |
}
\newcommandx{\uonesiteeq}[3][1=,2=,3=]
{
  \fineq{
    \uonesite[0][0][][#1][#2][#3];
    \node () at (-0.5,0.75) {};
    \node () at (1.5,0.75) {};
  }
}
\newcommandx{\ugateeq}[3][1=,2=,3=]
{
  \fineq{
    \ugate[0][0][][#1][#2][#3];
    \node () at (-0.5,0.75) {};
    \node () at (1.5,0.75) {};
  }
}
\newcommandx{\uonesitetwoeq}[4][1=,2=,3=,4=]
{
  \fineq{
    \uonesite[0][0][][#1][#3];
    \uonesite[0][1.5][][#2][#4];
    \node () at (-0.5,0.75) {};
    \node () at (1.5,0.75) {};
  }
}
\newcommandx{\hfbox}[5][1=0,2=0,3=,4=l,5=]{
  \ifthenelse{\equal{#4}{r}}{
    \pgfmathsetmacro{\flag}{-1};
  }{
    \pgfmathsetmacro{\flag}{1};
  }
  \begin{scope}[shift={(#1,#2)}]
    \draw (0,0) --++ (\flag*1,0) --++ (0,1) --++(-\flag*1,0);
    \node () at (\flag*0.5,0.5) {#3};
    \ifthenelse{\equal{#5}{s}}{
      \ifthenelse{\equal{#4}{l}}{
        \draw (0.9,1)--++(0,0.1) --++ (-1,0);
      }{}
      \ifthenelse{\equal{#4}{r}}{
        \draw (-1,0.1)--++(-0.1,0)--++(0,1) --++ (1,0);
      }{}
    }{}
  \end{scope}
}
\newcommandx{\tribox}[4][1=,2=,3=,4=]{
  \fineq[-0.8ex][0.5][0.8]{
    \hfbox[0][1.25][$#1$][l][#4];
    \hfbox[2.25][1.25][$#2$][r][#4];
    \sqzbox[0][0][][$#3$][#4];
    \node () at (-0.25,1.25) {};
    \node () at (2.5,1.25) {};
  }
}
\newcommandx{\lshapebox}[4][1=,2=,3=l,4=]{
  \fineq[-0.8ex][0.5][0.8]{
    \ifthenelse{\equal{#3}{l}}{
      \hfbox[0][1.25][$#1$][l][#4];
    }{
      \hfbox[2.25][1.25][$#1$][r][#4];
    }
    \sqzbox[0][0][][$#2$][#4];
    \node () at (-0.25,1.25) {};
    \node () at (2.5,1.25) {};
  }
}
\newcommandx{\hexbox}[8][1=,2=,3=,4=,5=,6=,7=,8=]{
  \fineq[-0.8ex][0.5][0.8]{
    \sqzbox[0][2.5][][$#1$][#8];
    \sqzbox[2.5][2.5][][$#2$][#8];
    \hfbox[0][1.25][$#3$][l][#8];
    \sqzbox[1.25][1.25][][$#4$][#8];
    \hfbox[2.5+2.25][1.25][$#5$][r][#8];
    \sqzbox[0][0][][$#6$][#8];
    \sqzbox[2.5][0][][$#7$][#8];
    \node () at (-0.25,1.25) {};
    \node () at (5,1.25) {};
  }
}
\newcommandx{\uucontract}[5][1=,2=,3=,4=,5=]{
  \fineq[-0.8ex][0.5][0.8]{
    \regbox[-0.35][0.35][blue!50];
    \ifthenelse{\equal{#1}{tr}}{
      \idline[0][1.25][r]
    }{
      \idline[0][1.25]
    }
    \ifthenelse{\equal{#2}{tr}}{
      \idline[1][1.25][r]
    }{
      \idline[1][1.25]
    }
    \ifthenelse{\equal{#3}{tr}}{
      \idline[0][0.25][u]
    }{}
    \ifthenelse{\equal{#3}{epr}}{
      \idline[0][0.25][epr]
    }{}
    \ifthenelse{\equal{#3}{}}{
      \idline[0][0]
    }{}
    
    \ifthenelse{\equal{#4}{tr}}{
      \idline[1][0.25][u]
    }{}
    \ifthenelse{\equal{#4}{epr}}{
      \idline[2][0.25][epr]
    }{}
    \ifthenelse{\equal{#4}{eprtr}}{
      \idline[1][0.25][eprtr]
    }{}
    \ifthenelse{\equal{#4}{}}{
      \idline[1][0]
    }{}
    \regbox[-0.25][0.25][red!50];
    \ifthenelse{\equal{#5}{}}{
    }{
      \node () at (0.5,0.75) {#5};
    }
  }
}
\newcommandx{\idline}[3][1=0,2=0,3=]{
  \begin{scope}[shift={(#1,#2)}]
    \ifthenelse{\equal{#3}{r}}{
      \begin{scope}[shift={(0,0.25)}]
        \draw[line width = 0.5pt] (0,-0.25)-- (0,0) to[out=90,in=0] (-0.06,0.18) to[out=180,in=90] (-0.1,0.1) --++ (0,-0.25);
      \end{scope} 
    }{}   
    \ifthenelse{\equal{#3}{u}}{
      \begin{scope}[shift={(0,-0.25)}]
      \draw[line width = 0.5pt] (0,0.25)--(0,0) to[out=-90,in=0] (-0.1+0.06,0.1-0.18) to[out=180,in=-90] (-0.1,0.1)--++( 0,0.25);
      \end{scope} 
    }{}
    \ifthenelse{\equal{#3}{epr}}{
      \draw[line width = 0.5pt] (0,0) -- (0,-0.2) to[out=-90,in=0] (-0.5,-0.55) to[out=180,in=-90] (-1,-0.3)--(-1,0.0);
      \begin{scope}[shift={(-0.1,0.1)}]
        \draw[line width = 0.5pt] (0,0) -- (0,-0.25) to[out=-90,in=0] (-0.5,-0.55) to[out=180,in=-90] (-1,-0.15)--(-1,0.0);
      \end{scope}  
    }{}
    \ifthenelse{\equal{#3}{eprtr}}{
      \draw[line width = 0.5pt] (0,0) -- (0,-0.3) to[out=-90,in=180] (0.5,-0.55) to[out=0,in=-90] (1,-0.2)--(1,0) to[out=90,in=0] (1-0.06,0.18) to[out=180,in=90] (1-0.1,0.1);
      \begin{scope}[shift={(-0.1,0.1)}]
        \draw[line width = 0.5pt] (0,0) -- (0,-0.15) to[out=-90,in=180] (0.5,-0.55) to[out=0,in=-90] (1,-0.25)--(1,0.0);
      \end{scope}  
    }{}
    \ifthenelse{\equal{#3}{}}{
      \draw[line width = 0.5pt] (0,0.25)--(0,0);
      \draw[line width = 0.5pt] (-0.1,0.1)--++(0,0.25);
    }{}
  \end{scope}
}
\newcommandx{\idlineu}[2][1=0,2=0]{
  \fineq[-0.8ex][0.75][0.8]{\idline[0][0][u]} 
}
\newcommandx{\idlineepr}[2][1=0,2=0]{
  \fineq[-0.8ex][0.5][0.8]{\idline[0][0][epr]} 
}
\begin{document}

\title{Maximal entanglement velocity implies dual unitarity}

\author{Tianci Zhou}
\affiliation{Center for Theoretical Physics, Massachusetts Institute of Technology, Cambridge, Massachusetts 02139, USA}
\author{Aram W. Harrow}
\affiliation{Center for Theoretical Physics, Massachusetts Institute of Technology, Cambridge, Massachusetts 02139, USA}
\date{\today}

\begin{abstract}
  A global quantum quench can be modeled by a quantum circuit with
  local unitary gates. In general, entanglement grows linearly at a
  rate given by entanglement velocity, which is upper bounded by the
  growth of the light cone.

  We show that the unitary interactions achieving the maximal rate
  must remain unitary if we exchange the space and time directions --
  a property known as dual unitarity. Our results are robust:
  approximate maximal entanglement velocity also implies approximate
  dual unitarity.

  We further show that maximal entanglement velocity is always
  accompanied by a specific dynamical pattern of entanglement, which
  yields simpler analyses of several known exactly solvable models.
  
\end{abstract}
\preprint{MIT-CTP/5368}
\maketitle

{\bf Introduction.} 
The propagation of information can never exceed the speed of light, due to Lorentz invariance.  Any particle actually achieving this speed must be massless, and lower speed limits can be placed on massive particles when energy is limited. In non-relativistic systems where the speed of light is effectively infinite, the locality of the interactions poses an emergent constraint\cite{lieb_finite_1972}. In this letter, we study the speed limit of entanglement -- a measure of quantum information -- in locally interacting quantum circuit.  As with the speed of light, it will turn out that local unitary interactions (or ``gates'') that achieve the maximum velocity of spreading entanglement have a special form.

There is a natural notion of entanglement velocity in a global quantum quench\cite{casini_spread_2015,liu_entanglement_2014,hartman_speed_2015}. When a short-range entangled state $|\psi_0\rangle$ is unitarily evolved, in general a (small) subsystem $Q$ will thermalize. After sufficiently long time, the entanglement (or von Neumann) entropy $S(Q)$ of the subsystem $Q$ will saturate to its equilibrium value. To set the stage, we consider an infinite lattice qudit system in one dimension with local Hilbert space dimension $q$, and take a semi-infinite region $Q$ as the subsystem. We assume that the unitary evolution can thermalize the state $|\psi_0 \rangle$ to infinite temperature. On the way to equilibrium, the von Neumann entropy of $Q$ typically grows linearly in $t$\cite{calabrese_quantum_2007,calabrese_quantum_2016-1,kim_ballistic_2013}
\begin{equation}
S(Q)_t \equiv S(Q)_{\rho(t)} \equiv
 - \tr( \rho_Q \ln \rho_Q ) \sim \ln (q)  v_E t . 
\end{equation}
The linear coefficient divided by the entropy density $\ln(q)$ has the dimension of velocity. It is thus called the entanglement velocity and denoted as $v_E$. A more precise definition of $v_E$ is the asymptotic growth rate (maximized over short-range initial states)
\begin{equation}
\label{eq:vE_def}
v_E = \lim_{t \rightarrow \infty} \frac{S(Q)_t}{t \ln(q) }. 
\end{equation}

We model spatially local interaction by a quantum circuit with local gates in a brickwork structure (Fig.~\ref{fig:circuit}). The brickwork unitary circuit has been extensively studied in recent research about quantum chaos\cite{nahum_operator_2018,von_keyserlingk_operator_2018,rakovszky_diffusive_2018,khemani_operator_2018} and entanglement\cite{nahum_quantum_2017,bertini_exact_2019-1,von_keyserlingk_operator_2018,napp_efficient_2020,chan_solution_2018}, bearing fruitful results. Here we assume the gates in the circuit are identical, unless explicitly stated otherwise. Taking the depth as time, the construction has a natural light cone velocity $v_{\rm LC} = 1$, so that effective system size is at most $2v_{\rm LC} t=2t$, with $ t$ sites each for $Q$ and its complement.  Since $S(Q)\leq \ln(\dim Q) = t\ln(q)$, implying that $v_E \le 1$.


In the study of quantum chaos, researchers discovered certain (generally non-integrable) brickwork circuits whose $v_E$ is exactly $1$\cite{bertini_entanglement_2018}. The gate is taken to be {\it self dual} as we now define. We denote a two-site unitary gate as $u$ with element $u_{ij,kl}$. By definition, we obtain an identity matrix when multiplying $u$ with its Hermitian conjugate, \ie $u_{ij,kl} u^*_{i'j',kl} = \delta_{ii'} \delta_{jj'}$. We draw this unitarity relation as
\begin{equation}
\label{eq:u-cond}
\fineq[-0.8ex][0.75][0.8]{
    \regbox[-0.35][0.35][blue!50];
    \idline[0][1.25];
    \idline[1][1.25];
    \idline[0][0.25][u];
    \idline[1][0.25][u];
    \regbox[-0.25][0.25][red!50];
    \node () at (0.15,1.5+0.1) {$i$};
    \node () at (-0.2,1.5+0.2) {$i'$};
    \node () at (1.15,1.5+0.1) {$j$};
    \node () at (1-0.2,1.5+0.2) {$j'$};
    \node () at (0.05,-0.25) {$k$};
    \node () at (1+0.05,-0.25) {$l$};
}
 = 
\fineq[-0.8ex][1][0.8]{
\idline[0][0][u]
\node () at (0.1,0.1) {$i$};
\node () at (-0.2,0.2) {$i'$};
} 
\otimes 
\fineq[-0.8ex][1][0.8]{
\idline[0][0][u]
\node () at (0.1,0.1) {$j$};
\node () at (-0.2,0.2) {$j'$};
} 
\quad \text{ (unitarity) } . 
\end{equation}
The four-leg red and blue tensors represent a two-site unitary and its complex conjugate respectively, with the top/bottom legs as row/column indices $ij$/$kl$. Contraction at the bottom represents matrix multiplication, and the two $\idlineu$ tensors on the right denote the identities on the two sites. A dual unitary satisfies an additional dual unitarity relation
\begin{equation}
\label{eq:du-cond}
\fineq[-0.8ex][0.75][0.8]{
    \regbox[-0.35][0.35][blue!50];
    \idline[0][1.25];
    \idline[1][1.25][r];
    \idline[0][0];
    \idline[1][0.25][u];
    \regbox[-0.25][0.25][red!50];
    \node () at (0.15,1.5+0.1) {$i$};
    \node () at (-0.2,1.5+0.2) {$i'$};
    \node () at (0.15,-0.1) {$k$};
    \node () at (-0.3,0) {$k'$};
    \node () at (1.15,1.5+0.2) {$j$};
    \node () at (1+0.05,-0.25) {$l$};
}
= \fineq[-0.8ex][1][0.8]{
\idline[0][0][r]; 
\node () at (0.1,-0.15) {$k$};
\node () at (-0.25,-0.05) {$k'$};
\idline[0][1][u]
\node () at (0.1,1+0.1) {$i$};
\node () at (-0.2,1+0.2) {$i'$};
}
\quad \text{ (dual unitarity) }.
\end{equation}
It means that the matrix is also a unitary when viewed sideways, \ie $u_{ij,kl}u^*_{i'j,k'l} = \delta_{ii'} \delta_{kk'}$. In an index free notation, Eq.~\eqref{eq:du-cond} states that $(uF)^\Gamma $ is also unitary, where $F$ is the swap operator, and $^\Gamma$ is the partial transpose.  Examples of dual unitaries include the swap gate and quantum Fourier transform\cite{tyson_operator-schmidt_2003}.  The dynamics of the dual unitary circuit is maximally chaotic\cite{bertini_exact_2018,bertini_exact_2019-1,prosen_many_2021,bertini_entanglement_2018,gopalakrishnan_unitary_2019,piroli_exact_2020}.
In addition to being an exact example of $v_E = 1$\cite{bensa_fastest_2021,bertini_entanglement_2018,bertini_operator_2020}, its auto-correlation\cite{bertini_exact_2019-1,kos_correlations_2021} and quantum butterfly effect (the butterfly velocity) also travel at the light cone speed $1$\cite{bertini_scrambling_2020}. Its (ensemble averaged) spectrum form factor\cite{bertini_exact_2018,bertini_random_2021} exactly reproduces the random matrix behavior. Some of these properties are not exclusive though. For instance, there are circuits whose gate is not dual unitary while still having quantum butterfly velocity to be equal to $1$\cite{claeys_maximum_2020}. 
\begin{figure}[t]
\centering
\subfigure[]{
  \label{fig:circuit}	
  \includegraphics[width=0.6\columnwidth]{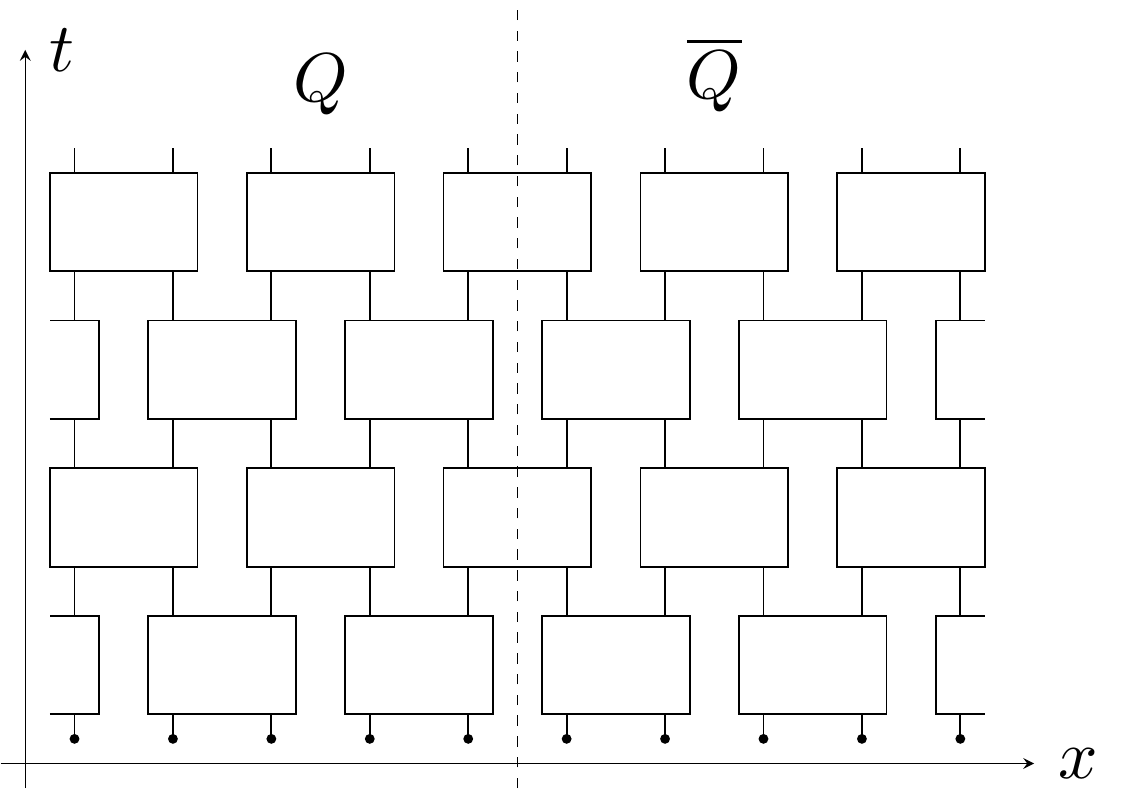}
}
\hspace{0.05\columnwidth}
\subfigure[]{
  \label{fig:4_qudits}	
  \includegraphics[width=0.25\columnwidth]{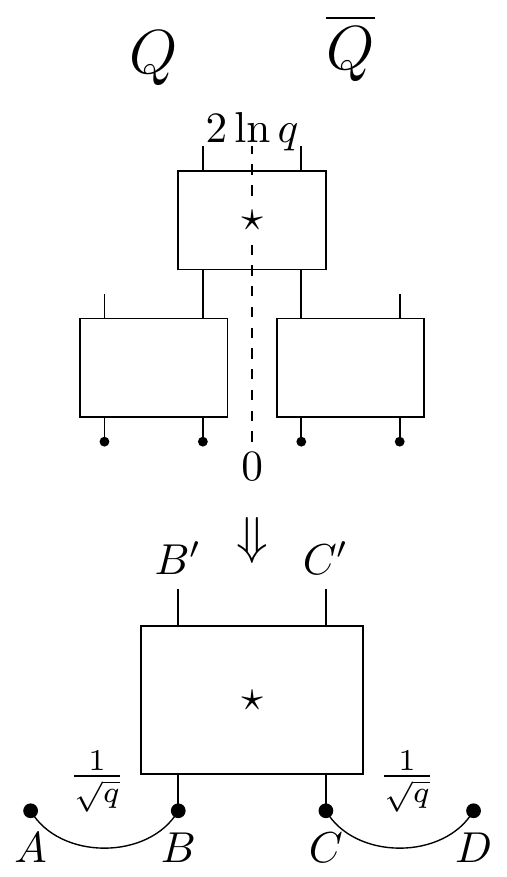}
}
\caption{(a) The brickwork circuit. Solid circles at the bottom denote product initial states. The dashed line cut the system into subsystem $Q$ and $\overline{Q}$ (b) (top) At $t = 2$, analysis of $S(Q)_{t = 2}$ reduces to 4 qudits and 3 gates. $S(Q)_{t = 2}$ is set to be the maximal $2\ln q$. (bottom) Evolution from $t = 1$ to $t = 2$ with input $\rho_{ABCD}$ and output $\rho_{AB'C'D}$.}
\label{fig:circuit_4_qubits}
\end{figure}

A natural question arises: Does $v_E = 1$\cite{FN} always imply the dual unitarity of the constituent gates? In this letter, we give a rigorous and affirmative answer, and show its robustness and generality. We prove that when $v_E = (1 - \eta)$ with a small positive $\eta$, the gate in the circuit approximately satisfies the dual unitary condition (Eq.~\eqref{eq:du-cond}) with an error up to $O(\eta^{\frac{1}{2}})$. The theorem also implies the existence of a nearby dual unitary. Our new understanding of entanglement structure in the dual circuit enables us give an alternative simpler proof that $v_E=1$ for the known exactly solvable cases. Our proofs are based on a monogamy-of-entanglement principle, namely that near-maximal entanglement between some subsystems of a state implies decoupling of other subsystems. We believe that the techniques and the entanglement structure will be widely useful in systems with rapid entanglement growth. 

{\bf Almost maximal growth by gate.} 
The entanglement velocity $v_E$ can be thought of as the long-term average rate of entanglement growth per gate.   Since the expression involves a limit, $v_E = 1$ can still be achieved if most gates have near-maximal entanglement growth. 

We start here with the $v_E \leq 1$ limit.  Follow the dashed line in Fig.~\ref{fig:circuit} from the bottom to the top. Entanglement can only change when the line pierces through a unitary gate, every other time step. The maximal growth by one gate is upper bounded by $2 \ln q$ (Lemma 1 of \cite{bennett_capacities_2003}). To prove this, let $A,B,\ldots$ be the systems in Fig.~\ref{fig:circuit}(b) and use the triangle inequality:
\begin{equation}
\label{eq:2lnq_bound}
\begin{aligned}
 S( AB') &- S(AB) \\ & =  [S(AB') - S(A) ]
 + [S(A) - S(AB) ]  \\
 &  \le S(B' ) + S(B)  \\ 
 &  \le \ln\dim(B)+\ln\dim(B') = 2 \ln(q)
\end{aligned}
\end{equation}
After $t$ time steps (assuming $t$ even), there are $t/2$ gates between $Q$ and $Q^c$, corresponding to entanglement changes $\Delta_\tau \equiv S(Q)_{2 \tau} - S(Q)_{ 2\tau - 1}$ for $\tau=1,2,\ldots,t/2$.  By \Cref{eq:2lnq_bound}, each $\Delta_\tau \leq 2\ln(q)$.  On the other hand, if $v_E=1$ then $\frac{1}{t/2}\sum_\tau \Delta_\tau \geq (1-\eta)2\ln(q)$ where $\eta\ra 0$ as $t\ra \infty$.  Thus there exists at least one $\tau$ where the entanglement increase $\Delta_\tau$ is $\geq (1-\eta)2\ln(q)$.  As we take $t\ra \infty$ this argument shows that individual gates must yield entanglement increases arbitrarily close to $2\ln(q)$.  Note that the $2 \ln(q)$ upper bound is not really used here.  We get the existence of gates with entanglement growth $\geq (1-\eta)2\ln(q)$ just because that's the average entangelment growth.  We need the upper bound only to interpret this as near-maximal.

{\bf A 4-qudit model.} 
A simple version of the relation between dual unitarity and maximal entanglement growth can be seen in a 4-qudit example. Suppose we have $S(Q)_{t=0}=0$ and $S(Q)_{t= 2} = 2\ln q$ in Fig.~\ref{fig:circuit}, then for the sake of entanglement $S(Q)$, we only need to consider 4 qudits and 3 gates(Fig.~\ref{fig:4_qudits}). (Later we will generalize this to the case where the initial entanglement may be large.) We label the four qudits at the slice of $t = 1$ as $A$, $B$, $C$ and $D$. The gate evolves $B$ and $C$ to $B'$ and $C'$ (Fig.~\ref{fig:4_qudits} bottom). Our assumption of maximal entanglement growth means that $S(AB') - S(AB) = 2 \ln q$.

The input and output states can be determined from the entropies. We have $S(AB) = 0$, due to the product initial state and absence of gate across $AB$ and $CD$ at $t = 1$. Thus $AB'$ is maximally mixed. By tracing out $B'$, so is $A$ (for $t = 2$ and consequently for $t = 1$). $B$ therefore forms a Bell state with $A$ at $t = 1$. Similarly $C$ forms a Bell state with $D$. We denote the Bell state as a curved line connecting the qudits in Fig.~\ref{fig:4_qudits} bottom. 

In a graphical notation similar to Eq.~\ref{eq:u-cond} and Eq.~\ref{eq:du-cond}, we rewrite $\rho_{AB'}$ in two ways
\begin{equation}
\label{eq:4_qudit_rho_ABp}
 \frac{1}{q^2} \uucontract[][tr][epr][eprtr]  = \frac{1}{q^2} \idlineu\otimes \idlineu.
\end{equation}
On the LHS, the input state -- two separate Bell pairs -- is conjugated by $u$ (red) and $u^\dagger$ (blue). Partial trace at $C'$, $D$ denoted by the closed loop gives $\rho_{AB'}$. The open $\idlineu$-shape symbol denotes the maximally mixed states at $A$ and $B'$. Canceling the normalization factor $\frac{1}{q^2}$, Eq.~\eqref{eq:4_qudit_rho_ABp} is an alternative way to write down the dual unitary condition in Eq.~\eqref{eq:du-cond}. Thus we see that maximal entanglement growth implies dual unitarity in our example.


{\bf Approximate maximal entangling.} 
We extend the intuition in the 4-qudit toy model to the case where entanglement growth is almost maximal.  This could arise if $v_E=1$ and individual gates approach but do not necessarily achieve this limit; alternately we might have $v_E$ close to, but not equal to, 1.
In Theorem~\ref{th:close_to_dual}) we will  derive entropy bounds to analyze the input and output states, yielding an approximate dual unitary condition.

More formally, let us consider at time slice $t$ there is a gate on the dashed line Fig.~\ref{fig:circuit} which is nearly maximally entangling. The gate $u$ acts on qudits $B$ and $C$, while $A$ ($D$) now denotes the collection of qudits to the left (right) of $B$ ($C$), see Fig.~\ref{fig:sie}. A similar setup has been employed in the study of entanglement generation by Hamiltonians, where $A$ and $D$ are the auxiliary qudits\cite{bravyi_upper_2007,lieb_upper_2013,van_acoleyen_entanglement_2013,marien_entanglement_2016,avery_universal_2014}. Unitary gates acting exclusively on $A$ or $D$ do not change $S(AB')$, so are ignored.

\begin{figure}[h]
\centering
\subfigure[]{
  \label{fig:sie}	
  \includegraphics[width=0.35\columnwidth]{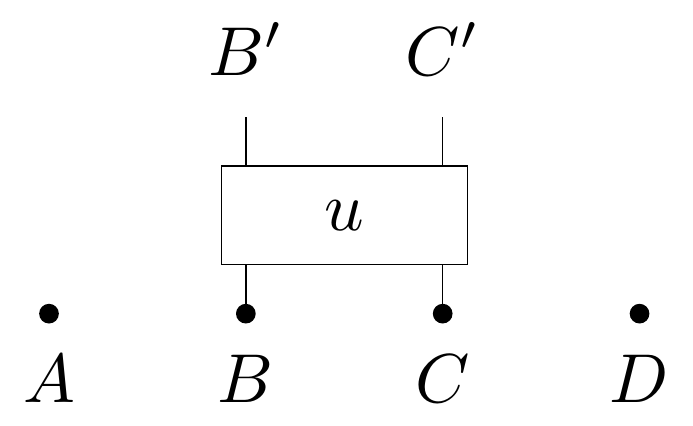}
}
\hfill
\subfigure[]{
  \label{fig:ABCD_decomp}	
  \includegraphics[width=0.55\columnwidth]{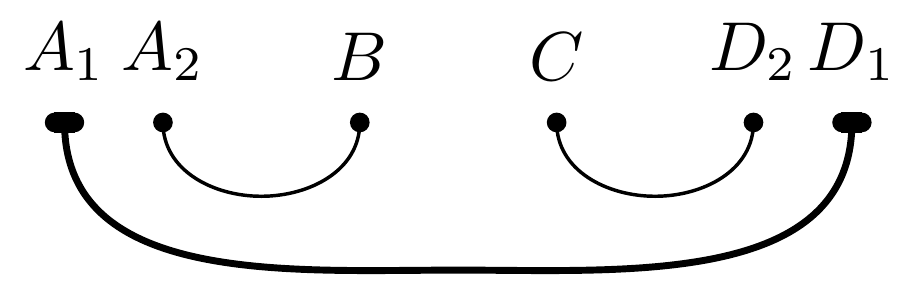}
}
\caption{(a) The 4-party setup with $2\ln q - \epsilon$ entanglement generation by a unitary $u$ acting on qudits $B$,$C$. $A$, $D$ are auxiliary system with arbitrary (finite) dimensions. (b) The distillable entanglement structure. $A$ ($D$) is partitioned into $A_1$ ($D_1$) and qudit $A_2$ ($D_2$). $A_2 B$ and $D_2 C$ are Bell pairs. }
\label{fig:sie_decomp}
\end{figure}

\begin{theorem}[proximity to dual unitarity]
\label{th:close_to_dual}
Let $u$ act as in Fig.~\ref{fig:sie} such that
\begin{equation}
\label{eq:S_ABp_S_AB}
S(AB') - S(AB) = 2 \ln q - \epsilon
\end{equation}
then
\begin{equation}
\label{eq:close_to_daul}
\norm{\frac{1}{q}\idlineu \otimes \frac{1}{q} \idlineu - \frac{1}{q^2}\uucontract[][tr][epr][tr] }_1 \le O( \epsilon^{\frac{1}{2}} )
\end{equation}
\end{theorem}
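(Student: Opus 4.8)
The plan is to convert the single scalar hypothesis \eqref{eq:S_ABp_S_AB} into a rigid approximate structure for both the input and the output, and then feed idealized Bell pairs through the gate. First I would revisit the chain \eqref{eq:2lnq_bound} and record the slack in each step as a separate nonnegative quantity. Writing the entropy difference as $[S(AB')-S(A)]+[S(A)-S(AB)]$ and using subadditivity $S(AB')\le S(A)+S(B')$, Araki--Lieb $S(A)\le S(AB)+S(B)$, and $S(B),S(B')\le\ln q$, the gap $\epsilon$ splits as $\epsilon=\alpha+\beta+\gamma+\delta$ with $\alpha=I(A{:}B')$, $\beta=S(AB)+S(B)-S(A)$, $\gamma=\ln q-S(B')$, $\delta=\ln q-S(B)$, each nonnegative and hence each $\le\epsilon$. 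Purifying $\rho_{ABCD}$ by a reference $R$, one checks $\beta=I(B{:}CDR)$, so all four quantities are entropic deficits or mutual informations bounded by $\epsilon$.

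Next I would upgrade these deficits to trace-norm statements via Pinsker's inequality (and its mutual-information form). From $\delta\le\epsilon$ and $\beta=I(B{:}CDR)\le\epsilon$, the qudit $B$ is within $O(\sqrt\epsilon)$ of maximally mixed and is $O(\sqrt\epsilon)$-decoupled from $CDR$. This is the monogamy/decoupling input: a maximally mixed qudit whose correlations all lie inside $A$ can be factored out, so there is a single qudit $A_2\subseteq A$ with $\lVert \rho_{\mathrm{in}} - \Phi_{A_2B}\otimes\chi_{A_1CDR}\rVert_1\le O(\sqrt\epsilon)$, where $\Phi$ is a Bell pair; this is the distillable structure of Fig.~\ref{fig:ABCD_decomp}. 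In particular $\rho_{A_2}\approx\tfrac1q\I$ and $B$ is $O(\sqrt\epsilon)$-product with $C$.

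Then I would analyze the output on $A_2B'$. Since the gate does not touch $A_2$, $\rho_{A_2}$ stays $O(\sqrt\epsilon)$-close to $\tfrac1q\I$; from $\gamma\le\epsilon$, $\rho_{B'}$ is $O(\sqrt\epsilon)$-close to $\tfrac1q\I$; and from $\alpha=I(A{:}B')\le\epsilon$ (hence $I(A_2{:}B')\le\epsilon$ by data processing) the two are decoupled, giving $\lVert\rho_{A_2B'}-\tfrac1q\I\otimes\tfrac1q\I\rVert_1\le O(\sqrt\epsilon)$ and thus $S(A_2B')\ge 2\ln q-O(\epsilon)$. To control the $C$ input that appears in the target diagram, I would run a purity squeeze exactly as in the exact $4$-qudit case: purifying $C$ by $E$ (carried by $DR$), output purity gives $S(A_2B')=S(C'E)\le S(C')+S(C)\le 2\ln q$, so the lower bound on $S(A_2B')$ forces $S(C)\ge\ln q-O(\epsilon)$, i.e.\ $\rho_C$ is $O(\sqrt\epsilon)$-close to $\tfrac1q\I$. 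Combined with $C\perp A_2B$, the reduced input obeys $\lVert\rho_{A_2BC}-\Phi_{A_2B}\otimes\tfrac1q\I_C\rVert_1\le O(\sqrt\epsilon)$.

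Finally I would assemble. The second term of \eqref{eq:close_to_daul} is exactly $\sigma_{A_2B'}=\tr_{C'}[u(\Phi_{A_2B}\otimes\tfrac1q\I_C)u^\dagger]$, whereas the physical reduced state is $\rho_{A_2B'}=\tr_{C'}[u\,\rho_{A_2BC}\,u^\dagger]$. Because conjugation by $u$ followed by a partial trace is a quantum channel, it contracts the trace norm, so $\lVert\sigma_{A_2B'}-\rho_{A_2B'}\rVert_1\le\lVert\rho_{A_2BC}-\Phi_{A_2B}\otimes\tfrac1q\I_C\rVert_1\le O(\sqrt\epsilon)$. One triangle inequality with the output estimate then yields \eqref{eq:close_to_daul}. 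The main obstacle is the decoupling step: turning the approximate saturations $\beta,\delta\lesssim\epsilon$ into a genuine approximate Bell-pair factorization with an honest $O(\sqrt\epsilon)$ error, which is where the monogamy-of-entanglement principle and an approximate ``factor out a maximally entangled subsystem'' lemma must be invoked carefully; the entropy bookkeeping and the purity squeeze are comparatively routine once that structure is in hand.
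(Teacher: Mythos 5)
Your overall architecture is the same as the paper's (entropy deficits $\rightarrow$ decoupling of $B$ from its environment $\rightarrow$ Uhlmann-type extraction of a Bell pair $\Phi_{A_2B}$ $\rightarrow$ near-maximally-mixed output $\rightarrow$ contractivity of $\tr_{C'}[u\,\cdot\,u^\dagger]$ plus a triangle inequality), and your exact split $\epsilon=\alpha+\beta+\gamma+\delta$ is a clean repackaging of the paper's four separate bounds $I(A;B')\le\epsilon$, $I(B;CD)\le\epsilon$, $S(B),S(B')\ge\ln q-\epsilon$. However, there is a genuine quantitative gap in your decoupling step: you convert the entropic deficits to trace norm via Pinsker \emph{first}, and only then invoke the ``factor out a Bell pair'' lemma. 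That lemma is Uhlmann's theorem, which transfers \emph{fidelity}, not trace distance, from reduced states to purifications. From $\norm{\rho_{BCDR}-\tfrac{\I_B}{q}\otimes\rho_{CDR}}_1\le O(\sqrt\epsilon)$ you only get fidelity $1-O(\sqrt\epsilon)$ between the reduced states, hence fidelity $1-O(\sqrt\epsilon)$ between the purifications, hence trace distance $O(\epsilon^{1/4})$ for the full state. So your claimed $\norm{\rho_{\rm in}-\Phi_{A_2B}\otimes\chi_{A_1CDR}}_1\le O(\sqrt\epsilon)$ is not justified as sketched, and every downstream estimate, including \eqref{eq:close_to_daul}, degrades to $O(\epsilon^{1/4})$. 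The paper avoids this by never passing through trace norm before purifying: relative entropy $O(\epsilon)$ gives fidelity $F\ge e^{-\epsilon}$ directly (Theorem~\ref{th:rel_S_F}, monotonicity of the sandwiched R\'enyi divergence), Uhlmann preserves that fidelity exactly, and fidelity is converted to trace distance only at the very end, costing exactly one square root and landing on $O(\epsilon^{1/2})$. This ordering is precisely what the paper means by ``the fidelity does not increase for the optimal purification required later.''

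Second, your ``purity squeeze'' for $\rho_C$ is both broken as stated and unnecessary. The identity $S(A_2B')=S(C'E)$ is false for the actual state: the purifying complement of $A_2B'$ is $A_1C'DR$, not $C'E$; the identity holds only for the idealized product state, and transporting entropies between $\rho$ and that idealization requires Fannes--Audenaert continuity, which from a trace-norm error $\delta$ gives entropy errors only of order $\delta\ln(1/\delta)$ (and with a dimension factor that is unbounded on the $A_1DR$ side). Your inference ``$\norm{\rho_{A_2B'}-\tfrac{\I}{q}\otimes\tfrac{\I}{q}}_1\le O(\sqrt\epsilon)$, thus $S(A_2B')\ge 2\ln q-O(\epsilon)$'' fails for the same reason: trace-norm closeness only yields $S(A_2B')\ge 2\ln q-O(\sqrt{\epsilon}\ln(1/\epsilon))$, which after Pinsker would leave you with $\rho_C$ controlled only to $O(\epsilon^{1/4})$ up to logarithms. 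None of this detour is needed: since the global state (after absorbing your reference $R$ into $D$) is pure, $S(AB)=S(CD)$ and $S(AB')=S(C'D)$, so the hypothesis \eqref{eq:S_ABp_S_AB} is symmetric under $(A,B,B')\leftrightarrow(D,C,C')$, and the same subadditivity/Araki--Lieb argument that gives $S(B)+S(B')\ge 2\ln q-\epsilon$ also gives $S(C)+S(C')\ge 2\ln q-\epsilon$, hence $S(C)\ge\ln q-\epsilon$ entropically with no continuity losses; this is exactly how the paper obtains $\rho_C\approx\I_C/q$. With that fix, and with the fidelity-first ordering above, your final assembly (channel contractivity plus one triangle inequality) is correct and coincides with the paper's.
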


When $v_E = 1$, $\epsilon$ goes to zero for a sequence of gates along the dashed line in Fig.~\ref{fig:circuit}. Hence $v_E = 1$ implies that the gate (anywhere) is dual unitary. If $v_E = 1 -
\eta$, then the entanglement growth for a sequence of gates along the dashed line can converge to $2 \ln q ( 1- \eta
)$. Theorem~\ref{th:close_to_dual} indicates that the dual unitary condition is satisfied up to error of order $\eta^{\frac{1}{2}}$.

When Eq.~\eqref{eq:close_to_daul} holds, it there a nearby dual unitary? When $q=2$, the explicit parameterization of the set of dual unitaries gives us explicit bounds on the distance to the nearest dual unitary; for $q>2$, we know only non-explicit bounds.
\begin{theorem}\label{thm:nearby-DU}
If $u$ has $v_E = 1 - \eta$ for $0 < \eta < 1$, then $\exists$ a dual unitary $u_{\times}$, s.t. it is close to the gate $u$ up to an error
\begin{equation}
  \norm{u - u_{\times} }_1 \le
  \begin{cases}
    O( \eta^{\frac{1}{4}}) & \text{if $q=2$} \\
    f_q(\eta) & \text{if $q>2$},
    \end{cases}
  \end{equation}
where $f_q(\eta)\rightarrow 0$ as $\eta\rightarrow 0$.
\end{theorem}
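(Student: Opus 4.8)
The plan is to feed the output of \Cref{th:close_to_dual} into a projection onto the dual-unitary set. First I would record what the hypothesis $v_E=1-\eta$ buys. Since every gate in the circuit equals $u$, the averaging argument around \Cref{eq:2lnq_bound} produces a time slice at which $u$ realizes an entanglement deficit $\epsilon \le 2\eta\ln q = O(\eta)$; \Cref{th:close_to_dual} then bounds the intrinsic dual-unitarity violation $T(u)$, defined as the left-hand side of \Cref{eq:close_to_daul}, by $T(u)\le O(\eta^{1/2})$. The substantive difficulty is that $T(u)=0$ is not a single linear constraint: a genuine dual unitary must be unitary \emph{and} have unitary partial transpose $(uF)^{\Gamma}$, and naively replacing $(uF)^{\Gamma}$ by its polar (unitary) part destroys the unitarity of $u$ itself. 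I must therefore project onto the intersection of the two manifolds, which is where the two cases diverge.

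For $q=2$ I would use the canonical (KAK) decomposition $u=e^{i\phi}(a_1\ot a_2)\,V(b_1\ot b_2)$ with $V=\exp[i(J_x\,\sigma^x\sigma^x+J_y\,\sigma^y\sigma^y+J_z\,\sigma^z\sigma^z)]$, together with the known characterization that the two-qubit dual unitaries are exactly those with $J_x=J_y=\pi/4$ (the single-qubit rotations and $J_z$ remaining free). I would then define the candidate $u_{\times}$ by keeping $a_i,b_i,\phi,J_z$ and setting $J_x,J_y\to\pi/4$; this is automatically both unitary and dual unitary, and by unitary invariance of the trace norm $\norm{u-u_{\times}}_1=\norm{V-V|_{J_x=J_y=\pi/4}}_1=O(\delta)$, where $\delta$ is the distance of $(J_x,J_y)$ to $(\pi/4,\pi/4)$. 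The key computation is to express $T(u)$ in these coordinates and verify that its transverse Hessian at the dual manifold is nondegenerate, so that $T(u)\ge c\,\delta^2$ for small $\delta$ (which is guaranteed once $\eta$ is small). Combining $c\,\delta^2\le T(u)\le O(\eta^{1/2})$ gives $\delta\le O(\eta^{1/4})$ and hence the claimed bound. The degradation from $O(\eta^{1/2})$ to $O(\eta^{1/4})$ is genuine and comes precisely from this quadratic flatness of the violation at its minimum.

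For $q>2$ no such parameterization is available, so I would fall back on a soft compactness argument yielding a non-explicit modulus. Let $\mathrm{DU}$ denote the compact set of dual unitaries inside the compact group $U(q^2)$, and set $f_q(\eta)=\sup\{\operatorname{dist}(u,\mathrm{DU}):u\in U(q^2),\ T(u)\le C\eta^{1/2}\}$ with $C$ the constant from \Cref{th:close_to_dual}. Since $T$ is continuous and vanishes exactly on $\mathrm{DU}$, a standard sequential argument shows $f_q(\eta)\to0$ as $\eta\to0$: otherwise there would be $u_n$ with $T(u_n)\to0$ but $\operatorname{dist}(u_n,\mathrm{DU})\ge c>0$, and a convergent subsequence $u_n\to u_{*}$ would satisfy $T(u_{*})=0$ yet stay at distance $\ge c$ from $\mathrm{DU}$, a contradiction. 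The main obstacle is exactly this lack of quantitative control for $q>2$: without an explicit description of $\mathrm{DU}$ I cannot relate $T$ to a transverse coordinate as in the qubit case, so compactness delivers the convergence $f_q(\eta)\to0$ but no explicit rate.
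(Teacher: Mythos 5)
Your proposal is correct and follows essentially the same route as the paper: for $q=2$ the paper likewise takes the Cartan/KAK decomposition, keeps $\phi$, the single-qubit factors and $J_z$, projects $(J_x,J_y)$ to $(\pi/4,\pi/4)$, and obtains the fourth root from exactly the mechanism you identify --- the violation $\abs{\cos 2J_x\cos 2J_y}+\abs{\cos 2J_y\cos 2J_z}+\abs{\cos 2J_x\cos 2J_z}$ is quadratic in the transverse coordinates, so two of the $\abs{\cos 2J_\alpha}$ are forced below $\sqrt{\delta}$ --- while for $q>2$ it gives the identical compactness/subsequence argument. The one point to state carefully is that the characterization is ``at least two of $\abs{J_x},\abs{J_y},\abs{J_z}$ equal $\pi/4$'' rather than literally $J_x=J_y=\pi/4$, so your lower bound $T(u)\ge c\,\delta^2$ is valid only after fixing a Weyl-chamber ordering (or a WLOG relabeling of which pair is projected, which is how the paper handles it); without that, points with $J_y=J_z=\pi/4$ and $J_x$ arbitrary are dual unitary yet far from $(\pi/4,\pi/4)$ in your coordinates.
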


The rest of this section gives a proof sketch of \Cref{th:close_to_dual}. \Cref{thm:nearby-DU} is proved in \appref{app:closeness_q_2}. 

Following the intuition of the 4-qudit example, we use entropy bounds to deduce the entanglement structure. First, we show that near-maximal entanglement increases require that $B$ and $C$ be nearly maximally entangled with $A$ and $D$, respectively.
\begin{lemma}
\label{le:zigzig}
Let $u$ act as in Fig.~\ref{fig:sie} and assume entanglement growth in Eq.~\eqref{eq:S_ABp_S_AB}. Then
\begin{align}
\label{eq:lemma_zigzag}
-S(B|A) = S(A) - S(AB)& \ge \ln q - \epsilon\\
-S(D|C) = S(D) - S( CD)& \ge \ln q - \epsilon. 
\end{align}
\end{lemma}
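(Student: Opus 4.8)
The plan is to derive both inequalities from a single mechanism: split the near-maximal growth $S(AB')-S(AB)=2\ln q-\epsilon$ into two contributions, each capped at $\ln q$ by the Araki--Lieb triangle inequality $|S(XY)-S(X)|\le S(Y)\le\ln\dim Y$, so that near-saturation of the sum forces near-saturation of each term. The one structural fact I rely on is that the global state on $ABCD$ is pure: it arises by unitary evolution from a pure product initial state, with $A$ and $D$ collecting all the remaining qudits, so $\rho_{ABCD}$ is pure and therefore so is the post-gate state $\rho_{AB'C'D}$.

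For the first inequality I would write
\[
 S(AB')-S(AB)=\big[S(AB')-S(A)\big]+\big[S(A)-S(AB)\big].
\]
Because $B'$ and $B$ are single qudits of dimension $q$, each bracket is at most $\ln q$ by the triangle inequality. Their sum equals $2\ln q-\epsilon$ and neither term exceeds $\ln q$, so each is at least $\ln q-\epsilon$; the second bracket is exactly $S(A)-S(AB)\ge\ln q-\epsilon$. This step uses only subadditivity, not purity.

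For the second inequality I would transport the hypothesis to the complementary bipartition using purity: $S(AB)=S(CD)$ for the input and $S(AB')=S(C'D)$ for the output, whence
\[
 S(C'D)-S(CD)=S(AB')-S(AB)=2\ln q-\epsilon.
\]
I then split this exactly as before,
\[
 S(C'D)-S(CD)=\big[S(C'D)-S(D)\big]+\big[S(D)-S(CD)\big],
\]
and bound each bracket by $\ln q$ (now because $C'$ and $C$ are single qudits, and $\rho_D$ is untouched by $u$, so $S(D)$ is unambiguous). The same saturation argument yields $S(D)-S(CD)\ge\ln q-\epsilon$, the second claim.

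I expect no serious obstacle: the argument is a twofold application of the triangle inequality together with the saturation observation. The only point requiring care is the appeal to global purity in the second half -- it is precisely what lets one rewrite entropies of the primed (output) systems on one side as entropies of $C$ and $D$ on the other. Without it one would control only $C$ relative to a purification of $D$, rather than $D$ itself; the first inequality, by contrast, needs no such assumption.
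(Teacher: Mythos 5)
Your proof is correct and is essentially the paper's own argument: the paper telescopes $S(A)-S(AB)=[S(A)-S(AB')]+[S(AB')-S(AB)]\ge -S(B')+2\ln q-\epsilon$, which is just a rearrangement of your saturation argument via subadditivity, and its unstated ``analogous'' proof of the second inequality likewise requires the global purity ($S(AB)=S(CD)$, $S(AB')=S(C'D)$) that you correctly identify and justify.
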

The lemma can be proved by telescoping $S(A) - S(AB') + S(AB') - S(AB)$ and using sub-additivity (see \appref{app:char_max_ent} for details). 

The subsystem $AB$ contains one extra qudit $B$ than $A$, yet its entanglement is at least $\ln q - \epsilon$ smaller. This almost maximal difference implies that $-S(B|A) \geq \ln(q)-\eps$ entanglement can be asymptotically distilled from many copies of the state~\cite[Chaps 11, 24]{Wilde-book}.  In fact, near-maximal entanglement can be distilled even from a single copy of the state using local operations, using a decoupling argument as in Ref.~\cite{abeyesinghe_mother_2009}.
\begin{lemma}[near-maximal distillable entanglement in input]
\label{le:two-epr-pair}
  Up to unitary transformations exclusively in $A$ or $D$, the input state $\rho_{\scriptstyle ABCD}$ can be approximated by
\begin{equation}
\label{eq:sigma_ABCD}
\sigma_{ABCD} = |\alpha_{A_2 B} \rangle \langle \alpha_{A_2 B} | \otimes  \sigma_{A_1 D_1}  \otimes | \beta_{CD_2 }\rangle  \langle  \beta_{C D_2 } |
\end{equation}
s.t.
\begin{equation}
\label{eq:rho_ABCD}
\begin{aligned}
&\norm{\rho_{ABCD}  -  \sigma_{ABCD}}_1\le O(\epsilon^{\frac{1}{2}}).
\end{aligned}
\end{equation}
Here $A_1, A_2$ ($D_1, D_2)$ are partitions of $A$ ($D$), and $A_2$ ($D_2$) is a qudit. $|\alpha_{A_2 B} \rangle $ and $|\beta_{CD_2} \rangle $ are maximally entangled, i.e. $S(A_2)_\alpha = S(D_2)_\beta =  \ln q $. 
\end{lemma}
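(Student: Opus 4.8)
The plan is to turn the two near-saturated inequalities of \Cref{le:zigzig} into two independent ``decoupling'' statements and then read off the claimed structure with Uhlmann's theorem. Purify $\rho_{ABCD}$ by a reference $E$ (trivial if the circuit state is already pure on $ABCD$), giving $\ket{\Phi}_{ABCDE}$; the target $\sigma_{ABCD}$ of \Cref{eq:sigma_ABCD} is precisely the $ABCD$-marginal of a pure state in which $B$ is maximally entangled with a qudit $A_2\subseteq A$ and $C$ with a qudit $D_2\subseteq D$. Each half follows from one principle: a qudit that is both nearly maximally mixed and nearly decoupled from the systems on one side of it must be purified inside the remaining system, i.e. maximally entangled with a qudit there.

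The quantitative engine is a single relative-entropy identity. Taking $R_B=CDE$ (so that $A$ purifies $\rho_{BR_B}$),
\begin{equation*}
D\!\left(\rho_{BR_B}\,\middle\|\,\tfrac{\I_B}{q}\otimes\rho_{R_B}\right)=\big(\ln q - S(B)\big)+I(B:R_B).
\end{equation*}
Purity of $\ket{\Phi}$ gives $I(B:R_B)=S(B)+S(AB)-S(A)$, and combining $S(B)\le\ln q$ with \Cref{le:zigzig} bounds both terms by $\epsilon$, so the relative entropy is $\le 2\epsilon$. Since $D(\rho\|\sigma)\ge -2\ln F(\rho,\sigma)$, the fidelity obeys $F\big(\rho_{BR_B},\tfrac{\I_B}{q}\otimes\rho_{R_B}\big)\ge e^{-\epsilon}\ge 1-\epsilon$. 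Uhlmann's theorem then supplies a unitary $W_A$ on the purifying system $A$ with $F\big(W_A\ket{\Phi},\ket{\alpha}_{A_2B}\ket{\psi}_{A_1R_B}\big)\ge 1-\epsilon$, where $\ket{\alpha}$ is maximally entangled. The mirror computation with $R_C=ABE$ and purifying system $D$ produces $W_D$ and a maximally entangled $\ket{\beta}_{CD_2}$.

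It remains to merge the two one-sided extractions, and this is where I expect the main obstacle. The naive route — converting each extraction to a trace-distance bound and then factoring out a nearly pure marginal — costs a square root at each conversion (via a gentle-measurement step) and would only deliver $O(\epsilon^{1/4})$. The key to the stated $O(\epsilon^{1/2})$ is to keep the error at the level of the fidelity deficit $1-F=O(\epsilon)$ throughout the merge and to pass to trace distance only once, at the very end. Because $W_A$ and $W_D$ act on disjoint systems they commute, and projecting $W_DW_A\ket{\Phi}$ successively onto $\ket{\alpha}_{A_2B}$ and onto $\ket{\beta}_{CD_2}$ (each overlap $\ge 1-O(\epsilon)$) shows the deficits simply add, so $F\big(W_DW_A\ket{\Phi},\,\ket{\alpha}_{A_2B}\ket{\beta}_{CD_2}\ket{\chi}_{A_1D_1E}\big)\ge 1-O(\epsilon)$. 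Fuchs–van de Graaf converts this to trace distance $O(\epsilon^{1/2})$, and discarding $E$ (which appears only in $\ket{\chi}$, so that $\sigma_{A_1D_1}$ is the reduced state of $\ket{\chi}\!\bra{\chi}$ on $A_1D_1$), together with monotonicity of the trace norm under partial trace, yields \Cref{eq:rho_ABCD}.
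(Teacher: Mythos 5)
Your proposal is correct, and its first half coincides with the paper's proof in \appref{subapp:two_epr_pair_proof}: the same relative-entropy computation (bounding $(\ln q - S(B)) + I(B;CD)$ by $2\epsilon$ via \Cref{le:zigzig} and Araki--Lieb), the same conversion to fidelity $F \ge e^{-\epsilon}$ through Theorem~\ref{th:rel_S_F}, and the same two applications of Uhlmann's theorem producing unitaries acting only on $A$ and only on $D$ (the paper's Eqs.~\eqref{eq:rho_BCD_I_B_rho_CD}, \eqref{eq:rho_ABC_rho_AB_IC} and the unnumbered Uhlmann lemma of \appref{subapp:two_epr_pair_proof}). Where you genuinely differ is the merge of the two one-sided structures, which you rightly identify as the crux. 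The paper passes to trace distance immediately after each Uhlmann step (Eqs.~\eqref{eq:app_phi_L} and \eqref{eq:app_phi_R}), compares $U_D\phi_LU_D^\dagger$ with $\phi_R$ by the triangle inequality, and then grafts the two structures together by tracing out $A_2B$ and tensoring the exact factor $\alpha_{A_2B}$ back on; since each intermediate state carries one exact Bell factor, this costs only a constant number of $O(\epsilon^{1/2})$ terms. You instead remain at the pure-state level: the commuting projectors $\dyad{\alpha}_{A_2B}\ot\I$ and $\dyad{\beta}_{CD_2}\ot\I$ each nearly fix $W_DW_A\ket{\Phi}$, hence so does their product --- the justification your ``deficits simply add'' claim needs is that for commuting projectors $\I - P_\alpha P_\beta \le (\I - P_\alpha) + (\I - P_\beta)$ --- and you convert to trace distance only once at the end. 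Both merges deliver the stated $O(\epsilon^{1/2})$; in particular, your motivating worry that converting to trace distance early can only give $O(\epsilon^{1/4})$ is not borne out by the paper's argument, whose trace-out-and-tensor-back trick never factors off a nearly pure marginal and so incurs no extra square root. What your fidelity-level merge does buy is a statement that remains meaningful when $\epsilon$ is not small (in the spirit of the paper's own remark around Eq.~\eqref{eq:U_AD_rho_ABCD}), plus a mild generalization to mixed $\rho_{ABCD}$ via the reference system $E$.
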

Fig.~\ref{fig:ABCD_decomp} depicts the structure of $\sigma_{ABCD}$ in the theorem. Using similar notation as in Eq.~\eqref{eq:4_qudit_rho_ABp} for the Bell state, \Cref{eq:rho_ABCD} can be written as 
\begin{equation}
\label{eq:rho_ABCD_fig}
\norm{\rho_{ABCD} - \sigma_{A_1 D_1} \otimes \idlineepr_{A_2B} / q 
\otimes \idlineepr_{CD_2} / q 
}_1 \le O( \epsilon^{\frac{1}{2}})
\end{equation}

The main idea is to recover $\rho_{ABCD}$ from an approximation of $\rho_{BCD}$. Using \Cref{le:zigzig}, we can derive entropy bounds (specifically $I( B; CD ) \le \epsilon$ and $S_B \ge \ln q - \epsilon$) which lead to approximate decoupling $\norm{\rho_{BCD} - \idlineu / q \otimes \rho_{CD}}_1 \le O( \epsilon^{\frac{1}{2}} )$. Then we purify to find the partition $A_1 A_2$, so that $B$ is almost maximally entangled with $A_2$ (note this $A_2$ was previously $A$ in the 4-qudit model).  We repeat the argument for $D_1,D_2$, taking care not to break the $A_2B$ entanglement in the process; this is described in full detail in \appref{subapp:two_epr_pair_proof}.


 
Now we discuss the constraints that apply to the output state $\rho_{AB'}$, which should be nearly maximally mixed on $A_2B'$ after tracing out $A_1$. 
\begin{lemma}[almost maximally mixed output]
\label{le:rho_ABp}
Given the configurations in Fig.~\ref{fig:sie} and entanglement growth in Eq.~\eqref{eq:S_ABp_S_AB}, we have 
\begin{equation}
\label{eq:rho_ABp_decomp}
\norm{\rho_{A_1A_2B'} - \sigma_{A_1} \otimes  \idlineu_{A_2} /q  \otimes  \idlineu_{B'} /q }_1 \le O( \epsilon^{\frac{1}{2}} )
\end{equation}
where $\idlineu/q$ denotes a maximally mixed state, $\sigma_{A_1}$ is the reduced state from $\sigma_{ABCD}$ in Eq.~\eqref{eq:sigma_ABCD}. 
\end{lemma}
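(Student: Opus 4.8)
The plan is to obtain the three-fold product structure of $\rho_{A_1A_2B'}$ by combining the input structure from \Cref{le:two-epr-pair} on the $A$-side, which $u$ leaves untouched, with a fresh entropy budget that controls the evolved qudit $B'$. The $A$-side factor is essentially free: tracing $BCD$ out of $\sigma_{ABCD}$ in \Cref{eq:sigma_ABCD} gives $\sigma_A = \sigma_{A_1}\otimes\idlineu_{A_2}/q$, and since $u$ acts only on $BC$ the reduced state $\rho_{A}=\rho_{A_1A_2}$ is unchanged by the gate (tracing out the image of a unitary equals tracing out its input), so $\norm{\rho_{A_1A_2}-\sigma_{A_1}\otimes\idlineu_{A_2}/q}_1\le O(\epsilon^{1/2})$ follows directly from \Cref{le:two-epr-pair}. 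What remains is to show that $B'$ is nearly maximally mixed and nearly uncorrelated with all of $A=A_1A_2$.

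The key step is an entropy budget for $S(AB')$ that never invokes dimension-dependent continuity. I would first sandwich $S(AB)$: \Cref{le:zigzig} gives the upper bound $S(AB)\le S(A)-\ln q+\epsilon$, while the Araki--Lieb inequality together with $S(B)\le\ln q$ gives the matching lower bound $S(AB)\ge S(A)-S(B)\ge S(A)-\ln q$, so $S(AB)=S(A)-\ln q+O(\epsilon)$. Substituting into the hypothesis \Cref{eq:S_ABp_S_AB} yields $S(AB')=S(A)+\ln q-O(\epsilon)$, i.e. $S(B'\mid A)=\ln q-O(\epsilon)$. Since conditioning reduces entropy and $B'$ is a single qudit, $\ln q-O(\epsilon)=S(B'\mid A)\le S(B')\le\ln q$ forces $S(B')=\ln q-O(\epsilon)$; and then $I(A;B')=S(B')-S(B'\mid A)\le O(\epsilon)$.

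These two near-saturated quantities are converted to trace distance by Pinsker's inequality, which is exactly where the unbounded dimension of $A$ does no harm. From $D(\rho_{B'}\Vert\idlineu_{B'}/q)=\ln q-S(B')\le O(\epsilon)$ we get $\norm{\rho_{B'}-\idlineu_{B'}/q}_1\le O(\epsilon^{1/2})$, and from $I(A;B')=D(\rho_{AB'}\Vert\rho_A\otimes\rho_{B'})\le O(\epsilon)$ we get $\norm{\rho_{AB'}-\rho_A\otimes\rho_{B'}}_1\le O(\epsilon^{1/2})$. A triangle-inequality chain then assembles \Cref{eq:rho_ABp_decomp}: I replace $\rho_{AB'}$ by $\rho_A\otimes\rho_{B'}$, then $\rho_{B'}$ by $\idlineu_{B'}/q$, and finally $\rho_A=\rho_{A_1A_2}$ by $\sigma_{A_1}\otimes\idlineu_{A_2}/q$, using at the tensor-substitution steps the standard bound $\norm{\xi\otimes\omega-\xi'\otimes\omega'}_1\le\norm{\xi-\xi'}_1+\norm{\omega-\omega'}_1$ for normalized states; each replacement costs $O(\epsilon^{1/2})$.

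The main obstacle is precisely that $A$ (and hence $AB$ and $AB'$) may have arbitrarily large dimension, so one cannot pass from the trace-norm estimates of \Cref{le:two-epr-pair} to entropy equalities through Fannes--Audenaert continuity. The resolution, which dictates the whole route above, is to derive every entropy estimate from exact inequalities (\Cref{le:zigzig}, Araki--Lieb, subadditivity) and to re-enter trace distance only through Pinsker, whose constant is dimension-independent. The only routine care needed is to track that the $O(\epsilon)$ slack in $S(B'\mid A)$ propagates as $O(\epsilon^{1/2})$ after Pinsker, consistent with the stated bound.
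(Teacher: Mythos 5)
Your proof is correct and takes essentially the same route as the paper's: derive $I(A;B')\le O(\epsilon)$ and $S(B')\ge \ln q-O(\epsilon)$ from exact entropy inequalities (so no dimension-dependent continuity is ever needed), convert these to trace-norm decoupling of $B'$ from $A$, and then substitute $\rho_A\approx\sigma_{A_1}\otimes\mathbb{I}_{A_2}/q$, obtained by tracing $BCD$ out of \Cref{le:two-epr-pair}, via the triangle inequality. The only cosmetic difference is the conversion tool: you apply Pinsker's inequality twice (to $\rho_{AB'}$ vs.\ $\rho_A\otimes\rho_{B'}$ and to $\rho_{B'}$ vs.\ $\mathbb{I}/q$), while the paper bounds the single relative entropy $S(\rho_{AB'}\,\|\,\rho_A\otimes\mathbb{I}_{B'}/q)\le 2\epsilon$ and passes through fidelity via sandwiched R\'enyi monotonicity, a choice the paper makes because the fidelity form also feeds the Uhlmann steps elsewhere and stays nontrivial when $\epsilon$ is not small.
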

Again, we use the entropy bounds (specifically $I(A;B') \le \epsilon$ and $S(B') \le \ln q - \epsilon$) to deduce to deduce approximate decoupling $\norm{ \rho_{AB'} - \rho_{A} \otimes \idlineu /q}_1 \le O( \epsilon^{\frac{1}{2}} )$
We then replace the $\rho_A$ by the approximate $\sigma_{A_1} \otimes  \idlineu_{A_2} /q $ from the known structure in \Cref{le:rho_ABp}. 

With these ingredients, we can prove Theorem~\ref{th:close_to_dual}.
\begin{proof}
Taking partial trace of $C'D$ in Eq.~\eqref{eq:rho_ABCD_fig}, we have the approximation from the input side, 
\begin{equation}
\norm{\rho_{AB'} - \sigma_{A_1} \otimes \frac{1}{q^2}\uucontract[][tr][epr][tr] }_1 \le O( \epsilon^{\frac{1}{2} })
\end{equation}
By the output side, we replace $\rho_{AB'}$ with the maximal mixed state in Lemma~\ref{le:rho_ABp}, 
\begin{equation}
\norm{\sigma_{A_1}\otimes \frac{1}{q} \idlineu   \otimes \frac{1}{q} \idlineu   - \sigma_{A_1} \otimes \frac{1}{q^2}\uucontract[][tr][epr][tr] }_1 \le O( \epsilon^{\frac{1}{2} })
\end{equation}
Taking partial trace in $A_1$ does not increase the distance. We thus obtain \Cref{eq:close_to_daul}. 
\end{proof}

{\bf Mechanism for $v_E = 1$.} 
We have shown that dual unitarity is a necessary condition for $v_E = 1$. For sufficiency, it has been proven that $v_E = 1$ for dual unitary circuits even at finite times, given special classes of initial states. These initial states are the ``separating states''\cite{bertini_entanglement_2018} for the self-dual kicked Ising model and the ``solvable'' matrix product state (MPS) ansatz\cite{piroli_exact_2020} that preserves the unitarity at the boundary after rotating the circuit. 


\begin{figure}[h]
\centering
\includegraphics[width=0.9\columnwidth]{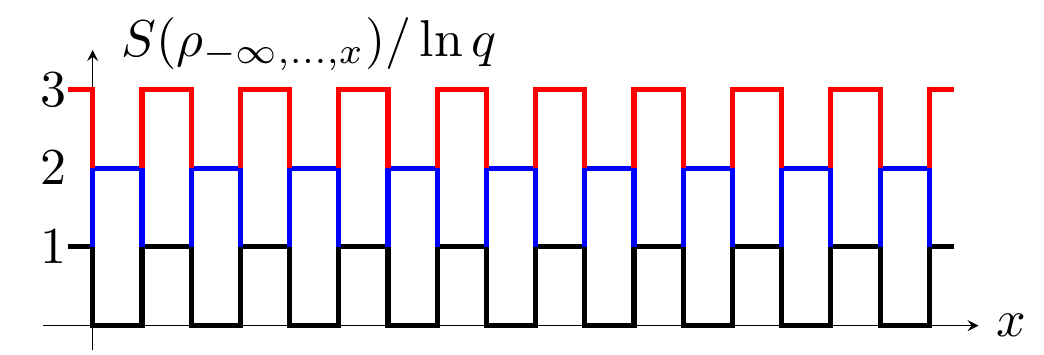}
\caption{Entanglement entropy of sites from $-\infty$ to $x$ in the unit of $\ln q$ for $t = 1$ (black), $t = 2$ (blue) and $t = 3$ (red). $v_E = 1$ at finite time. At $t = 1$, the entanglement alternates between $0$ and $\ln q$. Applying dual unitary gates at the valleys relay this zigzag pattern. }
\label{fig:zigzag}
\end{figure}
We reproduce these results in a simpler way through the entanglement structure developed above. 
\begin{theorem}[dual unitarity relays the zigzag entanglement pattern]
  Suppose at the $t = 1$ time slice, the entanglement across bonds alternates between $\ln q$ or $0$, and we apply a dual unitary circuit.  Then at all even time steps $t$ we have
  \begin{equation}
    S(Q)_{t}=t \ln q.
  \end{equation}
  This is true even when the dual unitary gates are not identical across the circuit. 
\end{theorem}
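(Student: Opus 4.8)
The plan is to prove the statement by induction on the even time steps, using dual unitarity to propagate a precise ``zigzag'' profile of the entanglement entropy $S(\text{sites }-\infty\text{ to }x)$ rather than merely tracking the endpoint value $S(Q)$. The key observation is that the hypothesis at $t=1$ gives an exact alternating pattern: across every bond the entanglement is either $\ln q$ or $0$, so the entropy as a function of cut position $x$ looks like a sawtooth whose peaks sit at height $\ln q$ above the neighboring valleys. Graphically (as in Fig.~\ref{fig:zigzag}) the valleys of this sawtooth are exactly where the dual unitary gates of the next layer act. I would set up the induction so that the inductive hypothesis at even time $t$ is not just $S(Q)_t = t\ln q$, but the stronger statement that the full entanglement profile is again a zigzag of the same shape, merely widened by one site on each side, so that the peak value has increased by $2\ln q$ (one $\ln q$ from each boundary of the light cone). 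This stronger hypothesis is what makes the induction close.

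First I would reduce the global claim to a purely local statement about a single gate acting at a valley of the zigzag. Consider a dual unitary gate $u$ acting on the two sites $B,C$ straddling a valley, with $A$ ($D$) the semi-infinite collection of sites to the left (right). By the zigzag hypothesis, the cut entropies satisfy $S(AB) = S(A) - \ln q$ (a peak-to-valley descent of $\ln q$ on the $A$ side) and symmetrically on the $D$ side; equivalently $B$ is maximally entangled with $A$ and $C$ with $D$, exactly the $\epsilon=0$ version of the input structure in \Cref{le:two-epr-pair}. The goal is to show that after applying $u$, the new bond across $B'C'$ carries $\ln q$ of entanglement and the profile remains a zigzag. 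The cleanest route is to use the $\epsilon=0$ specialization of the machinery already built: when the input is exactly $|\alpha_{A_2B}\rangle \otimes \sigma_{A_1D_1} \otimes |\beta_{CD_2}\rangle$, the exact dual unitarity relation \eqref{eq:du-cond} forces $\rho_{AB'}$ to be exactly $\rho_A \otimes \idlineu_{B'}/q$, i.e. $B'$ comes out maximally mixed and decoupled from $A$. This is precisely \Cref{le:rho_ABp} run in reverse at $\epsilon=0$: dual unitarity guarantees maximal entangling when fed a maximally entangled input.

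Having established that a single dual unitary gate at a valley produces $S(AB')=S(A)+\ln q$, I would reassemble the local statements into the global profile update. The valley at position $x$ is raised to a peak by the gate, while the two former peaks adjacent to it remain peaks (the gate acts only on the two valley sites and cannot lower cut entropies outside its support beyond the light cone), and the light cone advances by one site on each end. Summing the $+\ln q$ increments — one per gate, and there are two new gates entering the effective region per time step relative to the endpoint $Q$ — yields $S(Q)_{t+2} = S(Q)_t + 2\ln q$, which with the base case $S(Q)_2 = 2\ln q$ gives $S(Q)_t = t\ln q$ for all even $t$. Because the argument for each gate uses only the exact dual unitarity relation and the local maximal entanglement of its input, nothing requires the gates to be identical, establishing the final sentence of the theorem.

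The main obstacle I expect is bookkeeping at the boundaries of the light cone and verifying that the global profile is genuinely a zigzag after each half-step rather than merely checking the single endpoint $S(Q)$. Two subtleties need care: first, that raising one valley to a peak does not disturb the entanglement on bonds outside the gate's support, which follows because a unitary on $BC$ leaves all reduced states of cuts lying entirely within $A$ or entirely within $D$ invariant; and second, that the odd-to-even half-step acts on the complementary set of bonds (the valleys) so that the peaks and valleys exchange roles consistently as the pattern propagates. Once the inductive invariant is stated as ``the profile is a zigzag with valleys exactly at the support of the next gate layer,'' these become straightforward consistency checks, and the heavy lifting is entirely supplied by the exact $\epsilon=0$ versions of \Cref{le:two-epr-pair} and \Cref{le:rho_ABp}.
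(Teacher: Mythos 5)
Your proposal is correct and takes essentially the same route as the paper's proof: both apply the $\epsilon=0$ versions of \Cref{le:zigzig} and \Cref{le:two-epr-pair} to obtain exact Bell-pair inputs at each valley, then invoke the exact dual-unitarity identity (the 4-qudit model, Eq.~\eqref{eq:4_qudit_rho_ABp}, which is the content of what you call ``\Cref{le:rho_ABp} run in reverse'') to conclude that each gate adds exactly $2\ln q$ and relays the zigzag. Your explicit induction with the strengthened profile invariant, and the observation that gates leave cuts outside their support unchanged, merely formalizes the bookkeeping the paper delegates to Fig.~\ref{fig:zigzag}.
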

\begin{proof}
  At $t = 1$, the entanglement profile is given by the black curve in Fig.~\ref{fig:zigzag}. There are peaks whose value is $\ln q$ and valleys whose value is $0$. Since the valley has $\ln q$ entanglement smaller than its neighbors (\Cref{le:zigzig} with $\epsilon = 0$), the input state locally has the exact distillable entanglement structure in \Cref{fig:ABCD_decomp}(\Cref{le:two-epr-pair} with $\epsilon = 0$). When a dual unitary gate acts at the valley, dual unitarity guarantees to increase the entanglement by $2 \ln q$ (the 4-qudit model). A valley becomes a peak. In brickwork circuit structure, the gate always acts on valleys. Thus the circuit interchanges the role of peak and valley in one step, yet still maintain the entanglement difference to be $\ln q$. For example, the red and blue lines in Fig.~\ref{fig:zigzag} depicts the entanglement profile at $t = 2$ and $t = 3$. We see that the dual unitary gates can relay the zigzag entanglement pattern, while always generates $2 \ln q$ entanglement in each step, even if they are different across the circuit. Hence the exact relation $S(Q)_t = t \ln q$ at even steps. 
\end{proof}

We find that both the separating states\cite{bertini_entanglement_2018} (after $1$ or $2$ steps) and solvable MPS \cite{piroli_exact_2020} can initiate a zigzag pattern (blue line in Fig.~\ref{fig:zigzag}) after $1$ to $2$ steps (see \appref{app:sep_states_in_skdi} and \appref{app:smps}). Thus its entanglement growth can achieve $v_E = 1$ without the need for an asymptotic limit. Starting from a general short-range entangled initial states, we conjecture that the zigzag pattern can be dynamically generated if the circuit is not integrable. Then an almost maximal entanglement growth could be sustained, achieving $v_E = 1$ as $t\ra\infty$.

{\bf Conclusions.--} One two-site unitary gate can at most increase the entanglement by $2\ln q$. The maximal entanglement velocity condition forces the gates along the entanglement cut to achieve this $2\ln q$ limit. Our analysis reveals that such (approximate) $2\ln q$ growth is always accompanied by a distillable entanglement structure locally for the input state (\Cref{le:two-epr-pair}), and [approximate] dual unitarity condition for the acting gate (\Cref{th:close_to_dual}). We use a one-dimensional brickwork circuit, but the results can be straightforwardly extended into higher dimensions and other circuit architectures. A more challenging extension is the continuum limit with Lorentz invariance~\cite{casini_spread_2015}.

It is known that in a random unitary (brickwork) circuit, $v_E $ is $ 1 - \mathcal{O}( \frac{1}{\ln(q)})$ at large $q$\cite{jonay_coarse-grained_2018,zhou_emergent_2019}. Thus each gate increases entanglement by $2 \ln q - \epsilon$, with $\epsilon$ an $\mathcal{O}(1)$ number in $q$. By \Cref{th:close_to_dual}, we infer that a random unitary has an $\mathcal{O}(1)$ distance to a dual unitary. This is in fact consistent with an average fidelity calculation, yielding $\frac{8}{3\pi} + \mathcal{O}(q^{-2})$ (see \appref{app:ru_fidelity}).  We can compare with pure-state entanglement using the Choi-Jamio\l{}kowski isomorphism. In this way, unitary operators correspond to maximally entangled states.   A random pure state on two qudits has entanglement $\ln q - \mathcal{O}(1)$, and has an expected fidelity with a maximally entangled state of $\frac{8}{3\pi} + \mathcal{O}(q^{-2})$ (see \appref{app:ru_fidelity}). Thus in both cases, we  see an $\cO(1)$ deviation from maximal entanglement, whether measured in entropy or fidelity.

One route to $v_E=1$ with dual unitary gates is the ``zigzag pattern'' of entanglement we describe above. Locally, the zigzag pattern implies a distillable entanglement structure, but can we characterize the  global structure of those states? Easy solutions lie at the integrable points where entanglement is carried by free streaming quasi-particles\cite{calabrese_quantum_2007}, e.g.~a swap circuit (note: swap is a dual unitary) with a dimer initial state.  In most cases, dual unitary circuits are chaotic, and we have no good description of the global structure of the corresponding states with zig-zag entanglement.

There are situations when the zigzag pattern is absent in the initial state or early time evolution. This is the case for a self-dual kicked Ising model with random product initial state~\cite{bertini_entanglement_2018}. An open question is whether the zigzag pattern and locally distillable entanglement structure can be dynamically generated, and maintained when projective measurement is present\cite{skinner_measurement-induced_2019,li_quantum_2018,chan_unitary-projective_2019,choi_quantum_2020,gullans_dynamical_2020-1}. 

Finally, the almost maximally mixed output state ($A B'$ or $C' D$) provides an economical way to classically simulate the reduced density matrix in terms of the tensor network algorithm. In a circuit with gates close to dual unitary (perhaps even random circuits with large $q$), significant speedup for observables confined in half of the system is anticipated. We leave this to future work.



\acknowledgements
Both authors were funded by NTT Research Award AGMT DTD 9.24.20.
AWH is also supported by NSF grants CCF-1729369 (STAQ), PHY-1818914 (EPiQC) and OMA-2016245 (QLCI CIQC).

We thank Yichen Huang for helping check the proof of Theorem 1, and Daniel Ranard for suggesting the continuity argument for $q > 2$ in Theorem 2. 
We thank Pavel Kos, Austen Lamacraft, Pieter Claeys, Wen Wei Ho and Bruno Bertini for commenting the manuscript.
We thank Wen Wei Ho and Bruno Bertini for explaining the properties of the solvable MPS ansatz.

\appendix

\section{Characterization of almost maximal entanglement growth}
\label{app:char_max_ent}
In the main text, we argue that there is a sequence of gates in the circuit that generates entanglement converging to $2 \ln q$. Our strategy to prove the proximity to the dual unitary condition is to examine one such gate with entanglement growth $2 \ln q - \epsilon$ for a small $\epsilon$. 

We partition the time slice in which gate resides to have 4 parties labeled as $A$, $B$, $C$ and $D$. The gate acts on qudit $B$ and $C$, while $A$ ($D$) are the collection of qudits to the left (right) of the gate. The demonstration is reproduced in Fig.~\ref{fig:sie_setup_app}. 

\begin{figure}[h]
\centering
\includegraphics[width=0.4\columnwidth]{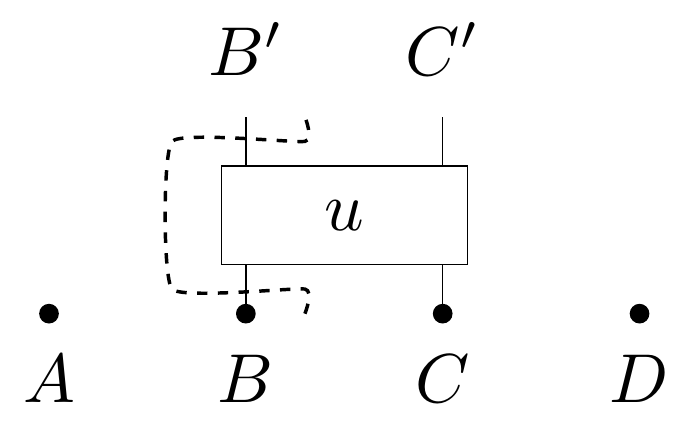}
\caption{The 4-party setup with $2\ln q - \epsilon$ entanglement generation by the 2-site unitary $u$. $B$,$C$ are qudits, $A$, $D$ are auxiliary system with arbitrary (finite) dimensions.}
\label{fig:sie_setup_app}
\end{figure}

The 4-party setup has more general applications in the research of entanglement generation rate problem\cite{marien_entanglement_2016,avery_universal_2014,bravyi_upper_2007,lieb_upper_2013,van_acoleyen_entanglement_2013}, in which the unitary evolution is governed by a continuous Hamiltonian evolution. In the setup, $B$ and $C$ are qudits, $A$ and $D$ are ancilla systems with unspecified dimensions. The ancillas help to achieve a larger growth bound. For instance, without the ancillas, the maximal entanglement growth by the gate is $\ln q$; with the ancillas, the limit $2 \ln q$ is possible. 

For our purpose, we have $S( AB') - S(AB) = 2 \ln q - \epsilon$. Here we prove important properties stemmed from this condition, and supplement a few deferred proofs in the main text. 

\subsection{Entropy bounds of the input and output states}

In this subsection, we derive various entropy bounds of the input and output states in the configurations of Fig.~\ref{fig:sie_setup_app} as a consequence of almost maximal entanglement growth
\begin{equation}
\label{eq:app_max_ent_cond}
S( AB') - S(AB) = 2 \ln q - \epsilon. 
\end{equation}
We will repeatedly use the bounds of the conditional entropies in Lemma 1, which we now prove
\begin{align}
S(A) - S(AB) \ge \ln q - \epsilon  \label{eq:_SA_m_SAB}\\
S(D) - S(CD) \ge \ln q - \epsilon 
\end{align}
\begin{proof}
We only show Eq.~\eqref{eq:_SA_m_SAB}. We use $S(AB')$ in a telescoping sum 
\begin{equation}
\begin{aligned}
S(A) - S(AB) &= S(A) - S(AB' ) + S(AB') - S(AB) \\
&\ge -S(B') + 2 \ln q - \epsilon  \ge \ln q - \epsilon.
\end{aligned}
\end{equation}
where we use sub-additivity in the first inequality and the entanglement bound in the second. 
\end{proof}

First of all, on the input side $\rho_B$, $\rho_C$ are almost maximally mixed, and so are $\rho_{B'}$, $\rho_{C'}$ on the output side. 
\label{subapp:in_out_states}
\begin{claim}
\begin{align}
S(B) \ge \ln q - \epsilon, \quad S(B') \ge \ln q - \epsilon \label{eq:S_B_max}\\
S(C) \ge \ln q - \epsilon, \quad S(C') \ge \ln q - \epsilon 
\end{align}
\end{claim}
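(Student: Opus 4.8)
The plan is to obtain all four near-maximal bounds from Lemma~1 (i.e.\ \eqref{eq:_SA_m_SAB} and its $CD$ counterpart) together with nothing more than the Araki--Lieb triangle inequality $S(XY)\ge S(X)-S(Y)$ and subadditivity $S(XY)\le S(X)+S(Y)$; no structural (distillation) input is required, so this Claim sits logically before \Cref{le:two-epr-pair}.

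First I would dispatch the two input qudits. From the already-proven bound $S(A)-S(AB)\ge \ln q-\epsilon$ I apply the triangle inequality in the form $S(AB)\ge S(A)-S(B)$, which rearranges to $S(B)\ge S(A)-S(AB)\ge \ln q-\epsilon$. The bound on $C$ is the same computation after the relabeling $A\mapsto D,\ B\mapsto C$: from $S(D)-S(CD)\ge \ln q-\epsilon$ and $S(CD)\ge S(D)-S(C)$ one gets $S(C)\ge \ln q-\epsilon$. Next I treat the output qudit $B'$. Subadditivity on the $AB'$ cut gives $S(AB')\le S(A)+S(B')$, i.e.\ $S(B')\ge S(AB')-S(A)$; inserting the hypothesis \eqref{eq:app_max_ent_cond} in the form $S(AB')=S(AB)+2\ln q-\epsilon$ and bounding $S(AB)-S(A)\ge -S(B)\ge -\ln q$ (triangle inequality plus $S(B)\le\ln q$) yields $S(B')\ge 2\ln q-\epsilon-\ln q=\ln q-\epsilon$.

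The one step needing care is $C'$, and I expect this to be the main obstacle: the hypothesis \eqref{eq:app_max_ent_cond} singles out the $AB'$ cut and is not manifestly symmetric under $B\leftrightarrow C$, so the $B'$ computation cannot simply be copied. The trick is to transport the growth condition to the complementary cut. Since the global circuit state on the full slice $ABCD$ is pure, complementarity gives $S(C'D)=S(AB')$ and $S(CD)=S(AB)$, so the hypothesis transfers verbatim to $S(C'D)-S(CD)=2\ln q-\epsilon$. (In the general ancilla formulation where $\rho_{ABCD}$ may be mixed, I would instead adjoin a reference $R$ purifying $\rho_{ABCD}$ and untouched by $u$, and run the argument with $D$ replaced by $DR$, using $S(C'DR)=S(AB')$ and $S(CDR)=S(AB)$.) With the condition now on the $C'D$ side, the $B'$ argument runs again with $A\mapsto D$: subadditivity $S(C'D)\le S(C')+S(D)$ together with $S(CD)-S(D)\ge -S(C)\ge -\ln q$ delivers $S(C')\ge \ln q-\epsilon$. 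Thus the only real work is the book-keeping of moving the asymmetric hypothesis to the $C'$ side via purity/complementarity; once that is done, each of the four bounds reduces to the same two-line combination of Lemma~1 with the triangle inequality.
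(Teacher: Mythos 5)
Your proof is correct and is essentially the paper's own argument in unrolled form: the paper inserts $S(A)$ into the hypothesis, bounds $S(AB')-S(A)\le S(B')$ and $S(A)-S(AB)\le S(B)$ by subadditivity and Araki--Lieb to get $S(B)+S(B')\ge 2\ln q-\epsilon$, and then uses $S(B),S(B')\le \ln q$ to split this into the two individual bounds --- which is exactly what your separate chains for $S(B)$ and $S(B')$ amount to once Lemma~1 is substituted in. The only place you go beyond the paper is in making explicit the purity/complementarity step $S(C'D)=S(AB')$, $S(CD)=S(AB)$ (or the purifying reference $R$ when $\rho_{ABCD}$ is mixed) that transports the asymmetric hypothesis to the $C$, $C'$ side, a step the paper compresses into ``without loss of generality.''
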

\begin{proof}
Without loss of generality, we only prove Eqs.~\eqref{eq:S_B_max}. We insert $S(A)$ into the Eq.~\eqref{eq:app_max_ent_cond}, leading to
\begin{equation}
[ S(AB') - S(A) ] + [ S(A) - S(AB)]   \ge 2 \ln q - \epsilon 
\end{equation}
On the LHS, we use triangle inequality and Araki-Lieb inequality for the terms in the two brackets respectively: 
\begin{align}
S(AB') - S(A) \le S(B') \\
S( A) - S(AB) \le S(B). 
\end{align}
Hence
\begin{equation}
S(B) + S(B') \ge 2 \ln q - \epsilon 
\end{equation}
Since both $S(B)$ and $S(B')$ are upper bounded by $\ln q$, we obtain Eqs.~\eqref{eq:S_B_max}. 
\end{proof}

In fact, the two-qudit state $\rho_{BC}$ is almost maximally mixed. 
\begin{claim}
\begin{equation}
\label{eq:S_BC_bound}
S(BC) \ge 2 \ln q - 2\epsilon 
\end{equation}
\end{claim}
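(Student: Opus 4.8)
The plan is to bound $S(BC)$ from below by peeling off $C$ with the entropy chain rule, rather than going through the symmetric mutual information, so that the sharp constant $2\epsilon$ survives. I would write $S(BC)=S(C)+S(B|C)$ and bound the two pieces separately: the first by the single-qudit estimate already established, and the second by re-expressing it as a conditional entropy against the \emph{complementary} parties and invoking \Cref{le:zigzig}.

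For the first piece, $S(C)\ge \ln q-\epsilon$ is immediate from the preceding Claim (the $C$-analogue of \eqref{eq:S_B_max}). For the second, strong subadditivity — conditioning on a larger system can only decrease the conditional entropy — gives $S(B|C)\ge S(B|CD)$. Here I would use that the state on the whole time slice $ABCD$ is pure, being the unitary image of the product initial state, so that complementary bipartitions carry equal entropy: $S(BCD)=S(A)$ and $S(CD)=S(AB)$. Then $S(B|CD)=S(BCD)-S(CD)=S(A)-S(AB)$, which is exactly the quantity bounded below by $\ln q-\epsilon$ in \Cref{le:zigzig}, \eqref{eq:_SA_m_SAB}. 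Combining the two pieces yields $S(BC)=S(C)+S(B|C)\ge 2\ln q-2\epsilon$, which is \eqref{eq:S_BC_bound}.

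The part I expect to require the most care is the bookkeeping of the constant, which is also why I would avoid the more obvious route. The naive estimate $S(BC)=S(B)+S(C)-I(B;C)$ together with $I(B;C)\le I(B;CD)\le\epsilon$ and $S(B),S(C)\ge \ln q-\epsilon$ only yields $S(BC)\ge 2\ln q-3\epsilon$; it is the chain-rule and strong-subadditivity argument that recovers the stated $2\epsilon$. The genuine subtleties are two: one must ensure that $A,B,C,D$ exhaust the lattice at the time slice so that $\rho_{ABCD}$ is pure and the identity $S(B|CD)=S(A)-S(AB)$ is legitimate; and one must apply strong subadditivity in the correct direction, namely $S(B|CD)\le S(B|C)$ (extra conditioning lowers entropy), not its reverse.
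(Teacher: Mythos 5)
Your proof is correct, and it is in essence the paper's own argument reflected left-to-right and translated into conditional-entropy language. The paper works with the relative entropy $S( \rho_{BC} \,\|\, \rho_B \otimes \I_C/q )$ and bounds it above via the data-processing inequality by $S( \rho_{ABC} \,\|\, \rho_{AB} \otimes \I_C/q ) = \ln q + S(CD) - S(D) \le \epsilon$ (using purity and the $D$-side bound of \Cref{le:zigzig}); unpacking the relative entropies, that data-processing step is precisely the strong subadditivity $S(C|AB) \le S(C|B)$, so the paper is effectively computing $S(BC) = S(B) + S(C|B) \ge S(B) + S(D) - S(CD)$, whereas you compute $S(BC) = S(C) + S(B|C) \ge S(C) + S(A) - S(AB)$ using the $A$-side bound \eqref{eq:_SA_m_SAB}. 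Both versions land on the same constant $2\epsilon$, and your diagnosis of the naive route is accurate: going through $I(B;C) \le I(B;CD) \le \epsilon$ together with $S(B), S(C) \ge \ln q - \epsilon$ costs an extra $\epsilon$ and only yields $2\ln q - 3\epsilon$, which is exactly what both you and the paper avoid by routing the monotonicity through a conditional entropy (equivalently, a relative entropy with maximally mixed second factor). Your worry about purity is legitimate but harmless here: $A$, $B$, $C$, $D$ exhaust the time slice of a unitarily evolved pure product state, and the paper leans on the same fact — its identities $S_{AB} = S_{CD}$ and $S_{ABC} = S_D$ are exactly the complementary-bipartition equalities you invoke.
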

\begin{proof}
We use $S( \rho_{BC} || \rho_B \otimes \frac{\I_C}{q} )$ as a proxy. On one hand, we have using Eq.~\eqref{eq:S_B_max} 
\begin{equation}
\begin{aligned}
  S( \rho_{BC} || \rho_B \otimes \frac{\I_C}{q} ) &= S(B) + \ln q  - S( BC) \\
  &\ge 2 \ln q - \epsilon - S(BC) .
\end{aligned}
\end{equation}
On the other hand, by the data processing inequality and the bound for the conditional entropy, we have
\begin{equation}
\begin{aligned}
&S( \rho_{BC} || \rho_B \otimes \frac{\I_C}{q} ) \le S( \rho_{ABC} ||\rho_{AB} \otimes \frac{\I_C}{q} )\\
=\,& S_{AB} + \ln q - S_{ABC}  = \ln q + S_{CD} - S_D  \le \epsilon. 
\end{aligned}
\end{equation}
\end{proof}

Then we show that on the input side, $AB$ is almost product with $C$, and $CD$ is almost entangled with $B$. These are embodied in the almost vanishing mutual information. 
\begin{claim}
\begin{align}
I(AB;C) \le \epsilon \label{eq:I_AB_C_e}\\
I(B; CD) \le \epsilon   \label{eq:I_B_CD_e}
\end{align}
\end{claim}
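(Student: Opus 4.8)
The plan is to reduce both mutual informations to the conditional-entropy bounds already established in \Cref{le:zigzig}, using one structural fact: $\rho_{ABCD}$ is pure. Because $A$ and $D$ collect \emph{all} remaining qudits on the time slice (Fig.~\ref{fig:sie}), the state $\rho_{ABCD}$ is the reduced state on a complete slice of a unitary evolution of a product state, hence pure. This is the same purity implicitly used in the proof of \eqref{eq:S_BC_bound}, where $S_{AB}-S_{ABC}$ was replaced by $S_{CD}-S_{D}$. Purity supplies the complementary identities $S(ABC)=S(D)$, $S(AB)=S(CD)$, and $S(BCD)=S(A)$, which are exactly what is needed to match the differences $S(D)-S(CD)$ and $S(A)-S(AB)$ appearing in \Cref{le:zigzig}.

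For \eqref{eq:I_AB_C_e} I would expand the mutual information and substitute the purity identities:
\begin{equation}
I(AB;C)=S(AB)+S(C)-S(ABC)=S(CD)+S(C)-S(D)=S(C)-\big(S(D)-S(CD)\big).
\end{equation}
The bracketed term is bounded below by $\ln q-\epsilon$ in \Cref{le:zigzig}, and $S(C)\le\ln q$ since $C$ is a single qudit, so $I(AB;C)\le \ln q-(\ln q-\epsilon)=\epsilon$. The mirror computation for \eqref{eq:I_B_CD_e} is
\begin{equation}
I(B;CD)=S(B)+S(CD)-S(BCD)=S(B)+S(AB)-S(A)=S(B)-\big(S(A)-S(AB)\big),
\end{equation}
and invoking the other half of \Cref{le:zigzig}, $S(A)-S(AB)\ge\ln q-\epsilon$, together with $S(B)\le\ln q$, gives $I(B;CD)\le\epsilon$.

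I do not expect a substantial obstacle here: once purity is invoked, each bound is a one-line rewriting of \Cref{le:zigzig}. The only point requiring care is bookkeeping of complements \emph{within the full slice} $ABCD$ — so that, for instance, the complement of $AB$ is exactly $CD$ and that of $BCD$ is exactly $A$ — and remembering that $B$ and $C$ are single qudits, which is what licenses $S(B),S(C)\le\ln q$. An alternative that sidesteps purity would bound $I(AB;C)$ through a data-processing/relative-entropy argument in the style used for \eqref{eq:S_BC_bound}, but the purity route is the most economical and I would present that one.
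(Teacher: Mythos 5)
Your proof is correct and matches the paper's argument exactly: both rewrite the mutual information via purity of $\rho_{ABCD}$ (e.g.\ $S(BCD)=S(A)$, $S(CD)=S(AB)$) and then apply the conditional-entropy bounds of Lemma~1 together with $S(B),S(C)\le\ln q$. The only difference is cosmetic — the paper proves one inequality and invokes symmetry, while you write out both mirror computations explicitly.
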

\begin{proof}
Without loss of generality, we only prove Eq.~\eqref{eq:I_B_CD_e}. It is straightforward write down $I(B; CD)$ in terms of the entropies in $AB$ for a pure state $\rho_{ABCD}$: 
\begin{equation}
\begin{aligned}
I(B; CD) &= S(B) + S(CD) - S(BCD) \\
&= S(B) + S(AB) - S(A) 
\end{aligned}
\end{equation}
Then we apply the bound of conditional entropy in Eq.~\eqref{eq:_SA_m_SAB}, and recognize $S(B) \le \ln q$
\begin{equation}
I(B; CD) \le S(B) - \ln q + \epsilon  \le \epsilon
\end{equation}
\end{proof}

We note that on the output side, $A$ and $B'$ are nearly decoupled, as are $D$ and $C'$.  Again this can be quantified via mutual information.  
\begin{claim}
\begin{align}
I(A;B') \le \epsilon \label{eq:I_A_Bp}\\
I(C'; D) \le \epsilon 
\end{align}
\end{claim}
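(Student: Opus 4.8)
The plan is to carry out the exact mirror of the input-side argument used for \eqref{eq:I_B_CD_e}, but now for the post-gate state $\rho_{AB'C'D}$, which is again pure. Writing $I(A;B') = S(A) + S(B') - S(AB')$, I note two immediate facts: since $u$ acts only on $BC$, the marginal on $A$ is untouched, so $S(A)$ takes the same value on the input and output; and $S(B') \le \ln q$ because $B'$ is a single qudit. These two observations, together with the hypothesis, will suffice; notably the decoupling on the output side does not even require Lemma~1, only the growth condition and elementary entropy inequalities.

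The one substitution that ties everything together is to eliminate $S(AB')$ using the maximal-growth hypothesis \eqref{eq:app_max_ent_cond}, which reads $S(AB') = S(AB) + 2\ln q - \epsilon$ exactly. Inserting this,
\begin{equation}
I(A;B') = \bigl[S(A) - S(AB)\bigr] + S(B') - 2\ln q + \epsilon .
\end{equation}
Now I would bound the bracket by the Araki--Lieb inequality, $S(A) - S(AB) \le S(B) \le \ln q$, and use $S(B') \le \ln q$. The two $\ln q$ contributions cancel the $-2\ln q$, leaving $I(A;B') \le \epsilon$, as claimed.

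There is no serious obstacle here; the only point requiring care is keeping track of which direction of the Araki--Lieb inequality is needed — an \emph{upper} bound on $S(A) - S(AB)$, which is the reverse of the \emph{lower} bound $S(A)-S(AB)\ge \ln q - \epsilon$ exploited in Lemma~1 (Eq.~\eqref{eq:_SA_m_SAB}). Together, these two bounds in fact pin $S(A)-S(AB)$ into the window $[\ln q - \epsilon, \ln q]$. The symmetric statement $I(C';D) \le \epsilon$ then follows verbatim under the relabeling $A \leftrightarrow D$, $B \leftrightarrow C$, $B' \leftrightarrow C'$: this is legitimate because purity of $\rho_{AB'C'D}$ gives $S(C'D) = S(AB')$ and $S(CD) = S(AB)$, so the hypothesis \eqref{eq:app_max_ent_cond} is itself invariant under that exchange.
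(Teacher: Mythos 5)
Your proof is correct and follows essentially the same route as the paper: both insert $S(AB)$ into $I(A;B') = S(A)+S(B')-S(AB')$, use the growth hypothesis \eqref{eq:app_max_ent_cond} for the $S(AB')-S(AB)$ piece, and bound the remaining terms by $S(A)-S(AB)\le S(B)\le \ln q$ (Araki--Lieb) and $S(B')\le\ln q$. Your explicit observation that the \emph{upper} bound on $S(A)-S(AB)$ is what is needed here (rather than the lower bound of Lemma~1, which the paper's prose nominally cites) is exactly what the paper's arithmetic actually uses, so the two arguments coincide.
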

\begin{proof}
  Without loss of generality, we prove Eq.~\eqref{eq:I_A_Bp}. In this case we insert $S(AB)$ in the expression of $I(A;B')$
\begin{equation}
\begin{aligned}
I(A;B') = &S(A) + S(B') - S(AB') \\
= [S(A) - S(AB)]& + [S(B')] - [S(AB') - S(AB)]\\
\end{aligned}
\end{equation}
In the three brackets, we respectively apply bound on the conditional entropy in Eq.~\eqref{eq:_SA_m_SAB}, entropy upper bound $S(B') \le \ln q $ and almost maximal entanglement growth assumption in Eq.~\eqref{eq:app_max_ent_cond}. We then have 
\begin{equation}
I(A;B') \le  \ln q + \ln q - 2\ln q + \epsilon = \epsilon 
\end{equation}
\end{proof}

\subsection{Factorization of the input and output states}

Given the entropy bounds above, we use the following theorem to factorize the state: 
\begin{theorem}
Let $A$, $B$ have dimension $d_A$ and $d_B$. If $S(A) \ge \ln d_A - \epsilon$, and $I(A;B) \le \epsilon$, then 
\begin{equation}
F( \rho_{AB}, \frac{I_A}{d_A} \otimes \rho_B ) \ge \exp( - \epsilon ) 
\end{equation}
where $\I_A$ is the identity operator on $A$. It also implies a bound on trace distance:
\begin{equation}
\label{eq:rho_AB_I_A_dA_rho_B}
\norm{\rho_{AB} - \frac{\I_A}{d_A} \otimes \rho_B }_1 \le O( \epsilon^{\frac{1}{2}} ) 
\end{equation}
\end{theorem}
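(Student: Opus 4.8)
The plan is to use the quantum relative entropy of $\rho_{AB}$ with respect to the target state $\tau \equiv \frac{\I_A}{d_A}\otimes\rho_B$ as the central quantity, since the two hypotheses translate into two non-negative contributions that add up inside this single relative entropy. First I would expand
\begin{equation}
S\!\left(\rho_{AB}\,\big\|\,\tfrac{\I_A}{d_A}\otimes\rho_B\right) = \tr\!\big(\rho_{AB}\ln\rho_{AB}\big) - \tr\!\big(\rho_{AB}\ln\tau\big),
\end{equation}
using $\ln\tau = -\ln(d_A)\,\I_{AB} + \I_A\otimes\ln\rho_B$ together with $\tr_A\rho_{AB}=\rho_B$. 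The support condition $\mathrm{supp}(\rho_{AB})\subseteq\mathcal{H}_A\otimes\mathrm{supp}(\rho_B)$ holds automatically because $\rho_B$ is the marginal of $\rho_{AB}$, so the relative entropy is finite.

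The computation collapses to
\begin{equation}
S\!\left(\rho_{AB}\,\big\|\,\tau\right) = \ln(d_A) + S(B) - S(AB) = \big(\ln d_A - S(A)\big) + I(A;B),
\end{equation}
where the second equality rewrites $S(B)-S(AB)=I(A;B)-S(A)$. Each bracket is exactly one of the hypotheses: $\ln d_A - S(A)\le\epsilon$ and $I(A;B)\le\epsilon$, so $S(\rho_{AB}\|\tau)\le 2\epsilon$. This clean additive split---the maximal-mixedness defect of $A$ plus the residual correlation between $A$ and $B$---is the conceptual heart of the argument.

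To obtain the fidelity bound I would invoke $F(\rho,\sigma)\ge\exp\!\big(-\tfrac12 S(\rho\|\sigma)\big)$, which follows from the monotonicity in order of the sandwiched R\'enyi relative entropy: its $\alpha=\tfrac12$ member equals $-2\ln F$ and is dominated by the $\alpha\to1$ member $S(\rho\|\sigma)$. With $S(\rho_{AB}\|\tau)\le2\epsilon$ this gives $F(\rho_{AB},\tau)\ge e^{-\epsilon}$ directly. For the trace-distance statement I would feed the same relative-entropy bound into quantum Pinsker, $\norm{\rho-\sigma}_1\le\sqrt{2\,S(\rho\|\sigma)}$, yielding $\norm{\rho_{AB}-\tau}_1\le 2\sqrt{\epsilon}=O(\epsilon^{1/2})$; alternatively one can convert the fidelity bound through Fuchs--van de Graaf, $\tfrac12\norm{\rho-\sigma}_1\le\sqrt{1-F^2}$, to the same order.

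The only non-routine ingredient is the relative-entropy--to-fidelity inequality; everything else is a one-line entropy identity and a direct substitution of the hypotheses. The main thing to watch is the bookkeeping in the $\ln\tau$ expansion and the support condition, but both are handled cleanly by the marginal relation $\rho_B=\tr_A\rho_{AB}$.
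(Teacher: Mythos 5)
Your proposal is correct and follows essentially the same route as the paper: compute $S\!\left(\rho_{AB}\,\middle\|\,\tfrac{\I_A}{d_A}\otimes\rho_B\right) = \left(\ln d_A - S(A)\right) + I(A;B) \le 2\epsilon$, then apply the relative-entropy-to-fidelity bound from monotonicity of the sandwiched R\'enyi divergence, and finally convert to trace distance. The only cosmetic difference is that you offer Pinsker as an alternative for the trace-distance step, which the paper itself discusses in its appendix on distance measures.
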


The fidelity measure has two advantages over the trace distance here: the fidelity does not increase for the optimal purification required later; and the bound also works even when $\epsilon$ is not small (see App.~\ref{app:dist_mea}).
\begin{proof}
We consider the relative entropy
\begin{equation}
\begin{aligned}
S( \rho_{AB } || \frac{\I_A}{d_A} \otimes \rho_B) &= 
\ln d_A + S(B) - S(AB) \\
&= I(A; B) + \ln d_A - S_A \le 2 \epsilon . 
\end{aligned}
\end{equation}
According to Theorem~\ref{th:rel_S_F} (monotonicity of sandwiched R\'enyi divergence), 
\begin{equation}
\begin{aligned}
F( \rho_{AB } || \frac{I_A}{d_A} \otimes \rho_B ) &\ge \exp( - \frac{1}{2} S( \rho_{AB } || \frac{I_A}{d_A}  \otimes \rho_B )  ) \\
&\ge \exp( - \epsilon ) .
\end{aligned}
\end{equation}
Using the relation between the trace distance and fidelity, we also have Eq.~\eqref{eq:rho_AB_I_A_dA_rho_B}. 
\end{proof}

\begin{corollary} 
\label{coro:ABp_BCD}
By Eq.~\eqref{eq:I_A_Bp} and Eq.~\eqref{eq:I_B_CD_e}, we have the approximate factorization of $\rho_{AB'}$ and $\rho_{BCD}$
\begin{align}
F( \rho_{AB'} , \rho_A \otimes \frac{\I_{B'}}{q} ) &\ge \exp( -\epsilon ) \label{eq:rho_ABp_rhoA_I}\\
F( \rho_{BCD}, \frac{\I_B}{q} \otimes \rho_{CD}) &\ge \exp( - \epsilon )  \label{eq:rho_BCD_I_B_rho_CD}
\end{align}
The corresponding trace distance is bounded by $O( \epsilon^{\frac{1}{2}})$. Similarly we have the approximation on the side: 
\begin{align}
F( \rho_{C'D} ,  \frac{\I_{C'} }{q} \otimes \rho_D) &\ge \exp( - \epsilon ) \\
F( \rho_{ABC} , \rho_{AB} \otimes \frac{\I_C}{q})  &\ge \exp( - \epsilon ). \label{eq:rho_ABC_rho_AB_IC}
\end{align}
\end{corollary}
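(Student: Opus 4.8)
The plan is to read off all four fidelity estimates as direct instances of the factorization theorem proved immediately above, whose two hypotheses have already been established one by one in the preceding claims. The only work is to identify, in each of the four lines, which subsystem should play the theorem's $A$ (the factor that is replaced by the maximally mixed state $\I/d$ and that must therefore carry near-maximal entropy $S \ge \ln d - \epsilon$) and which plays the theorem's $B$ (the spectator whose marginal $\rho_B$ is kept). I would then supply the matching single-party entropy bound and mutual-information bound and invoke the theorem verbatim.

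Concretely, for Eq.~\eqref{eq:rho_ABp_rhoA_I} I take $B'$ in the role of $A$ (dimension $q$) and $A$ in the role of $B$, feeding in $S(B') \ge \ln q - \epsilon$ from Eq.~\eqref{eq:S_B_max} and $I(A;B') \le \epsilon$ from Eq.~\eqref{eq:I_A_Bp}. For Eq.~\eqref{eq:rho_BCD_I_B_rho_CD} I take $B$ as $A$ and $CD$ as $B$, using $S(B) \ge \ln q - \epsilon$ (Eq.~\eqref{eq:S_B_max}) and $I(B;CD) \le \epsilon$ (Eq.~\eqref{eq:I_B_CD_e}). The two remaining lines follow by the left--right reflection of the setup: $F(\rho_{C'D}, \I_{C'}/q \otimes \rho_D)$ uses the $C'$-analogue of Eq.~\eqref{eq:S_B_max} together with $I(C';D) \le \epsilon$, and $F(\rho_{ABC}, \rho_{AB} \otimes \I_C/q)$ places $C$ in the role of $A$ and $AB$ in the role of $B$, invoking the $C$-analogue of Eq.~\eqref{eq:S_B_max} and $I(AB;C) \le \epsilon$ from Eq.~\eqref{eq:I_AB_C_e}. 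In each case the theorem's second conclusion, the trace-distance bound $O(\epsilon^{1/2})$, transfers without change, which yields the stated $O(\epsilon^{1/2})$ estimates.

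There is no genuine mathematical obstacle; the whole content is already packaged in the factorization theorem, and the single point that requires care is the role assignment. The maximally mixed factor must always be the $q$-dimensional qudit ($B$, $B'$, $C$, or $C'$) for which a near-maximal entropy bound has actually been proved, never the high-dimensional ancilla $A$ or $D$. Assigning the maximally mixed factor to the ancilla would demand an entropy bound on $A$ or $D$ that we neither have nor expect, so keeping the two roles straight is the one thing to watch.
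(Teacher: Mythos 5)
Your proposal is correct and matches the paper's own (implicit) argument exactly: the corollary is stated as an immediate consequence of the factorization theorem, with the qudit ($B$, $B'$, $C$, or $C'$) playing the theorem's maximally-mixed role via the entropy bounds of Eq.~\eqref{eq:S_B_max} and its analogues, and the mutual-information claims Eqs.~\eqref{eq:I_A_Bp}, \eqref{eq:I_B_CD_e}, \eqref{eq:I_AB_C_e} supplying the decoupling hypothesis. Your role assignments and the transfer of the $O(\epsilon^{1/2})$ trace-distance bound are exactly what the paper intends.
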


\subsection{The near-maximal distillable entanglement structure of the input state}
\label{subapp:two_epr_pair_proof}

We uncover the structure of the input state $\rho_{ABCD}$ from the closeness (in fidelity) of $\rho_{BCD}$ to a tensor product of a maximally mixed state on $B$ and the state $\rho_{CD}$(Eq.~\eqref{eq:rho_BCD_I_B_rho_CD}). 

To simplify the notation, in this section we use $\psi \equiv \dyad{\psi}$ to represent the density matrix of the pure state $|\psi \rangle $. 
\begin{lemma}
Suppose
\begin{equation}
F( \rho_{BCD}, \frac{\I_B}{q} \otimes \rho_{CD}) \ge \exp( - \epsilon ).
\end{equation}
Fix any $|\phi_L \rangle = |\alpha  \rangle_{A_2 B}  |\mu \rangle_{A_1 CD}$ satisfying
\begin{align}
\tr_{A_1} ( \mu_{A_1 CD}  ) &= \rho_{CD}    \\
\tr_{A_2} ( \alpha_{A_2 B}  ) &= \frac{\I_B}{ q }.
\end{align}
with $A_1, A_2$ a partition of $A$.

Then there exists a unitary $U_A$ acting only on $A$ s.t.
\begin{equation}
\label{eq:uhlmann_rho_BCD}
\begin{aligned}
F( U_A \rho_{ABCD} U_{A}^\dagger, \phi_L  ) &= 
  F( \rho_{BCD}, \frac{\I_B}{q} \otimes \rho_{CD})\\
  &\ge \exp( - \epsilon ). 
\end{aligned}
\end{equation}
and
\begin{equation}
\label{eq:app_phi_L}
  \norm{U_A \rho_{ABCD} U_{A}^\dagger-\phi_L }_1 \le O( \epsilon^{\frac{1}{2}} ) 
\end{equation}
\end{lemma}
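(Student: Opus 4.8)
The plan is to treat \eqref{eq:uhlmann_rho_BCD} as a direct consequence of Uhlmann's theorem~\cite{Wilde-book}, since both $\rho_{ABCD}$ and $\phi_L$ are \emph{pure} states purifying, on the common reference system $A$, the two marginals $\rho_{BCD}$ and $\frac{\I_B}{q}\otimes\rho_{CD}$ whose fidelity is controlled by hypothesis.

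First I would record that $\rho_{ABCD}$ is pure (it is the full input state at the time slice), so writing $|\psi\rangle$ for its state vector, $|\psi\rangle$ is a purification of the marginal $\rho_{BCD}$ with reference system $A$. By construction $|\phi_L\rangle = |\alpha\rangle_{A_2B}\,|\mu\rangle_{A_1CD}$ is likewise a purification of $\frac{\I_B}{q}\otimes\rho_{CD}$ with the \emph{same} reference $A = A_1A_2$: tracing out $A$ and using the two marginal constraints gives $\tr_{A_2}(\alpha_{A_2B})\otimes\tr_{A_1}(\mu_{A_1CD}) = \frac{\I_B}{q}\otimes\rho_{CD}$.

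Next I would invoke Uhlmann's theorem in the one-sided form: for the fixed purification $|\phi_L\rangle$ of $\frac{\I_B}{q}\otimes\rho_{CD}$ there exists a purification $|\psi'\rangle$ of $\rho_{BCD}$ on the same reference $A$ attaining
\[
  |\langle\psi'|\phi_L\rangle| = F\!\left(\rho_{BCD},\,\tfrac{\I_B}{q}\otimes\rho_{CD}\right).
\]
This is legitimate because $A$ already supports a purification of $\rho_{BCD}$ (namely $|\psi\rangle$), so $\dim A \ge \mathrm{rank}(\rho_{BCD})$ and no enlargement of the reference is needed. Since $|\psi\rangle$ and $|\psi'\rangle$ purify the same marginal on the same reference, they are related by a unitary acting only on $A$, $|\psi'\rangle = U_A|\psi\rangle$; here the connecting partial isometry extends to a genuine unitary precisely because both purifications have equal Schmidt rank $\mathrm{rank}(\rho_{BCD})$ and an equal-dimensional reference. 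Consequently $U_A\rho_{ABCD}U_A^\dagger = \dyad{\psi'}$ is pure and
\[
  F\!\left(U_A\rho_{ABCD}U_A^\dagger,\ \phi_L\right) = |\langle\psi'|\phi_L\rangle| = F\!\left(\rho_{BCD},\,\tfrac{\I_B}{q}\otimes\rho_{CD}\right) \ge \exp(-\epsilon),
\]
which is \eqref{eq:uhlmann_rho_BCD}.

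Finally, \eqref{eq:app_phi_L} follows from the Fuchs--van de Graaf inequality $\tfrac12\norm{\cdot}_1 \le \sqrt{1-F^2}$ applied to the two pure states above: with $F \ge \exp(-\epsilon)$ one has $1 - F^2 \le 1 - \exp(-2\epsilon) \le 2\epsilon$, hence $\norm{U_A\rho_{ABCD}U_A^\dagger - \phi_L}_1 \le 2\sqrt{1-\exp(-2\epsilon)} = O(\epsilon^{1/2})$. I expect the only real subtlety to be the bookkeeping in the Uhlmann step: one must check that the optimal purification of $\rho_{BCD}$ can be reached from $|\psi\rangle$ by a unitary supported \emph{on $A$ alone} (rather than an isometry into an enlarged space), which is exactly what the equal reference dimension guarantees; the existence of some admissible $|\phi_L\rangle$ inside $\mathcal H_A\otimes\mathcal H_{BCD}$ (i.e. $\dim A_1 \ge \mathrm{rank}(\rho_{CD})$) is presupposed by the phrasing ``Fix any $|\phi_L\rangle$ satisfying \ldots''.
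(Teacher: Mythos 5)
Your proposal is correct and follows essentially the same route as the paper: the paper also obtains \eqref{eq:uhlmann_rho_BCD} as an immediate consequence of Uhlmann's theorem (using that $\rho_{ABCD}$ is pure and that $|\phi_L\rangle$ purifies $\frac{\I_B}{q}\otimes\rho_{CD}$ over the same reference $A$), and then passes to \eqref{eq:app_phi_L} via the fidelity--trace-distance relation $\frac{1}{2}\norm{\rho-\sigma}_1 \le \sqrt{1-F^2}$. Your write-up merely makes explicit the details the paper leaves implicit (purity of $\rho_{ABCD}$, the one-sided Uhlmann step, and the extension of the connecting partial isometry on $A$ to a unitary), all of which are handled correctly.
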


This Lemma is an immediate consequence of Uhlmann's theorem.
The state $|\phi_L \rangle $ comes from purifying $\I_B/q$ and $\rho_{CD}$ into $A_1$ and $A_2$ respectively.  This gives us the distillable entanglement structure on the $AB$ side of the state.

We can show the distillable entanglement structure on the $CD$ side of the state from the corresponding condition for $\rho_{ABC}$ in Eq.~\eqref{eq:rho_ABC_rho_AB_IC}. The target purification we seek to approximate is $|\tilde{\phi}_R \rangle =   \ket{\tilde{\nu}}_{AB D_1}   \ket{\beta}_{C D_2 } $. Here $D_1, D_2$ is a partition of $D$, and
\begin{align}
\tr_{D_1} ( \tilde{\nu}_{AB D_1}  ) &= \rho_{AB}    \\
\tr_{D_2} ( \beta_{D_2 C}  ) &= \frac{\I_C}{ q}.
\end{align}
Again, Uhlmann's theorem guarantees the existence of unitary $U_D$ acting only on $D$, s.t. 
\begin{equation}
F( U_D \rho_{ABCD}  U_{D}^\dagger, \tilde{\phi}_R   ) \ge \exp( - \epsilon ). 
\end{equation}
For later convenience, we define
\begin{equation}
\begin{aligned}
|\phi_R \rangle  &\equiv U_A |\tilde{\phi}_R \rangle  =  |\nu \rangle_{ABD_1}   | \beta  \rangle_{C D_2} \\
|\nu \rangle_{ABD_1}   &\equiv U_A |\tilde{\nu}_{ABD_1} \rangle,   \\
\end{aligned}
\end{equation}
then
\begin{equation}
\begin{aligned}
F( U_D U_A\rho_{ABCD} U_A^\dagger  U_{D}^\dagger, \phi_R   )
 = F( U_D \rho_{ABCD}  U_{D}^\dagger, \tilde{\phi}_R   ).
\end{aligned}
\end{equation}
Therefore, for the state $U_A \rho_{ABCD} U_A^\dagger$ in \Cref{eq:app_phi_L}, $\exists$ $U_D$ acting only on $D$, s.t.
\begin{equation}
F( U_D U_A \rho_{ABCD} U_A^\dagger  U_{D}^\dagger, \phi_R  ) \ge \exp( - \epsilon )
\end{equation}
and
\begin{equation}
\label{eq:app_phi_R}
\norm{U_D U_A \rho_{ABCD} U_{A}^\dagger U_{D}^\dagger- \phi_R }_1 \le O( \epsilon^{\frac{1}{2}} ). 
\end{equation}

Now we prove that the near-maximal distillable entanglement structure can simultaneously hold on the two sides, that is up to unitary transformation on $A$ and $D$, $\rho_{ABCD}$ is close to a purification
\begin{equation}
  \sigma_{ABCD} = \dyad{\alpha}_{A_2 B} \otimes  \sigma_{A_1 D_1}  \otimes \dyad{\beta}_{C D_2 }.
\end{equation}

\begin{proof}[Proof of \Cref{le:two-epr-pair}]
From Eq.~\eqref{eq:app_phi_L} and Eq.~\eqref{eq:app_phi_R}, we can relate ${\phi_L}$ and ${\phi_R}$
\begin{equation}
\label{eq:phi_L_phi_R}
\norm{ U_D {\phi_L}  U_D^\dagger -  {\phi_R}}_1 \le O(\epsilon^{\frac{1}{2}}).
\end{equation}
We take partial trace on $A_2 B$ in Eq.~\eqref{eq:phi_L_phi_R} and get (recall that ${\phi_L} = {\alpha_{A_2 B}}\ot {\mu_{A_1 CD }}$, ${\phi_R} = {\nu_{A B D_1 }}\ot {\beta_{C D_2}} $)
\begin{equation}
\begin{aligned}
\norm{U_D {\mu_{A_1 CD }} U_D^\dagger  -  \sigma_{A_1 D_1}   \otimes {\beta_{C D_2}} }_1 \le  O(\epsilon^{\frac{1}{2}})
\end{aligned}
\end{equation}
where
\begin{equation}
\sigma_{A_1 D_1 } = \tr_{A_2 B}(  {\nu_{A B D_1 }} ) .
\end{equation}
Taking tensor product with ${\alpha_{A_2 B}}$, the distance remain unchanged, 
\begin{equation}
\norm{U_D {\phi_L}  U_D^\dagger  - {\alpha_{A_2 B}} \otimes  \sigma_{A_1 D_1}   \otimes {\beta_{C D_2}} }_1 \le O(\epsilon^{\frac{1}{2}}). 
\end{equation}
Finally, we replace $U_D {\phi_L} U_D^\dagger$ by $U_D U_A\rho_{ABCD} U_A^\dagger   U_D^\dagger$ by Eq.~\eqref{eq:app_phi_L} and get
\begin{equation}
\label{eq:two_side_distillable}
\begin{aligned}
  &\norm{U_D U_A\rho_{ABCD} U_A^\dagger   U_D^\dagger  - {\alpha_{A_2 B}} \otimes  \sigma_{A_1 D_1}   \otimes {\beta_{C D_2}} }_1 \\
  &\le  O(\epsilon^{\frac{1}{2}}).
\end{aligned}
\end{equation}
This is the near-maximal distillable entanglement structure on the two sides.
\end{proof}

We note that $\rho_{ABCD}$ and $\sigma_{ABCD}$ can be close in {\it fidelity}, if we allow a unitary transformation on $AD$, i.e. $\exists \,\,U_{AD}$ acting exclusively on $AD$, s.t.
\begin{equation}
\label{eq:U_AD_rho_ABCD}
F( U_{AD} \rho_{ABCD} U_{AD}^\dagger, \sigma_{ABCD} ) \ge \exp( - \epsilon ).
\end{equation}
The comes from the bound on $S_{BC}$ in Eq.~\eqref{eq:S_BC_bound}, which is equivalent to 
\begin{equation}
S\left( \rho_{BC} \Big\| \frac{\I_B}{q} \otimes \frac{\I_C}{q}\right) \le 2\epsilon. 
\end{equation}
Using Theorem~\ref{th:rel_S_F}, we have
\begin{equation}
F\left( \rho_{BC} , \frac{\I_B}{q} \otimes \frac{\I_C}{q}\right)  \ge \exp( -\epsilon ). 
\end{equation}
Since $\sigma_{ABCD}$ is a purification of $\frac{\I_B}{q} \otimes \frac{\I_C}{q}$,  Uhlmann's theorem implies the existence of  $U_{AD}$ satisfying Eq.~\eqref{eq:U_AD_rho_ABCD}.

This alternate version of the input decoupling condition has the disadvantage of obscuring the fact that the $BC$-$AD$ entanglement is mostly due to $B-A$ and $C-D$ entanglement.  However, \Cref{eq:U_AD_rho_ABCD} is stronger for large values of $\eps$ and remains nontrivial when $\eps>1$.  This is relevant in the case of high-dimensional Haar-random unitaries.  It is reasonable to conjecture that \Cref{eq:U_AD_rho_ABCD} holds for some product unitary $U_{AD} = V_A \otimes W_D$.

\subsection{Proof of the almost maximally mixed output state $\rho_{AB'}$}

We complete the proof of \Cref{le:rho_ABp} which states that
the output state $\rho_{AB'}$ is almost maximally mixed.

\begin{proof}
First, from the fidelity bound in Eq.~\eqref{eq:rho_ABp_rhoA_I}, we can decompose $\rho_{AB'}$ with a small trace distance error (\appref{app:dist_mea}),
\begin{equation}
\label{eq:rho_ABp_1}
\norm{ \rho_{AB'} - \rho_{A} \otimes \I /q}_1 \le O( \epsilon^{\frac{1}{2}} ). 
\end{equation}
Then, by taking a partial trace on $BCD$ in the near-maximal distillable entanglement structure in (\Cref{le:two-epr-pair} of the main text), we can approximate $\rho_A$ as 
\begin{equation}
\label{eq:rho_A_decomp}
\norm{\rho_A  - \sigma_{A_1} \otimes\I / q }_1 \le O( \epsilon^{\frac{1}{2}} )
\end{equation}
where $\sigma_{A_1}$ is the reduced density matrix constructed from state $\sigma_{ABCD}$. Replacing $\rho_A$ in Eq.~\eqref{eq:rho_ABp_1} by approximation in Eq.~\eqref{eq:rho_A_decomp} using triangle inequality completes the proof. 
\end{proof}

\section{Discussion of the distance measures}
\label{app:dist_mea} 

In \appref{app:char_max_ent}, we prove the near-maximal distillable entanglement structure of the input state under the assumption of an almost maximal entanglement growth (Eq.~\eqref{eq:app_max_ent_cond}). The proof strategy is to first derive various entropy bounds, e.g. $I(AB; C) \le \epsilon$, and then translate that into the closeness of states. 

From the entropy bounds to the closeness of states, we repeatedly use the following result
\begin{theorem}
\label{th:rel_S_F}
For states $\rho,\sigma$:
\begin{equation}
  F( \rho , \sigma )\ge \exp( - \frac{1}{2} S( \rho \| \sigma ) )
\end{equation}
\end{theorem}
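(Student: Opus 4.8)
The plan is to realize both $F(\rho,\sigma)$ and $S(\rho\|\sigma)$ as two values of a single one-parameter family, the sandwiched (minimal) R\'enyi relative entropy
\[
  \tilde{D}_\alpha(\rho\|\sigma) = \frac{1}{\alpha-1}\ln \tr\!\left[\left(\sigma^{\frac{1-\alpha}{2\alpha}}\,\rho\,\sigma^{\frac{1-\alpha}{2\alpha}}\right)^{\alpha}\right],
\]
and then to exploit its monotonicity in $\alpha$ (the ``monotonicity of sandwiched R\'enyi divergence'' to which the statement alludes). First I would identify the two relevant endpoints. At $\alpha=\tfrac12$ one has $\tfrac{1-\alpha}{2\alpha}=\tfrac12$ and prefactor $\tfrac{1}{\alpha-1}=-2$, so
\[
  \tilde{D}_{1/2}(\rho\|\sigma) = -2\ln\tr\sqrt{\sigma^{1/2}\rho\,\sigma^{1/2}} = -2\ln F(\rho,\sigma),
\]
where the last equality uses that $\sqrt{\rho}\sqrt{\sigma}$ and $\sqrt{\sigma}\sqrt{\rho}$ share the same singular values, so $\tr\sqrt{\sigma^{1/2}\rho\,\sigma^{1/2}}=\|\sqrt{\rho}\sqrt{\sigma}\|_1=F(\rho,\sigma)$. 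Rearranging gives the exact identity $F(\rho,\sigma)=\exp\!\big(-\tfrac12\tilde{D}_{1/2}(\rho\|\sigma)\big)$.

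Second, I would invoke the $\alpha\to 1$ limit, which recovers the Umegaki relative entropy $\tilde{D}_1(\rho\|\sigma)=\tr[\rho(\ln\rho-\ln\sigma)]=S(\rho\|\sigma)$, together with the fact that $\alpha\mapsto\tilde{D}_\alpha(\rho\|\sigma)$ is nondecreasing on $(0,\infty)$. Both are standard properties of this family (see, e.g., \cite{Wilde-book}). Applying monotonicity between $\alpha=\tfrac12$ and $\alpha=1$ yields $\tilde{D}_{1/2}(\rho\|\sigma)\le \tilde{D}_1(\rho\|\sigma)=S(\rho\|\sigma)$, and substituting into the identity above gives
\[
  F(\rho,\sigma)=\exp\!\Big(-\tfrac12\tilde{D}_{1/2}(\rho\|\sigma)\Big)\ge \exp\!\Big(-\tfrac12 S(\rho\|\sigma)\Big),
\]
which is precisely the claim.

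The genuinely nontrivial ingredient—and the step I expect to be the main obstacle—is the monotonicity of $\tilde{D}_\alpha$ in $\alpha$, whose proof rests on operator-convexity / Lieb-concavity arguments rather than elementary manipulation; I would cite it rather than reprove it here. The remaining pieces are routine: verifying the $\alpha=\tfrac12$ evaluation reduces to the singular-value symmetry noted above, and the $\alpha\to1$ limit is a standard continuity computation. A minor care point is handling the support condition $\operatorname{supp}(\rho)\subseteq\operatorname{supp}(\sigma)$, under which all the expressions are finite and the inequality is nontrivial; outside it the right-hand side vanishes and the bound holds trivially.
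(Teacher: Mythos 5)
Your proposal is correct and follows essentially the same route as the paper: both identify $-2\ln F(\rho,\sigma)$ as the sandwiched R\'enyi divergence at $\alpha=\tfrac12$, recover $S(\rho\|\sigma)$ as its $\alpha\to 1$ limit, and conclude via monotonicity of $\tilde{D}_\alpha$ in $\alpha$. Your write-up merely adds detail the paper leaves implicit (the singular-value identity behind the $\alpha=\tfrac12$ evaluation and the support-condition caveat), which is fine but not a different argument.
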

When $S(\rho\|\sigma)$ is small, the fidelity $F( \rho, \sigma ) = \tr\sqrt{\rho^{\frac{1}{2}} \sigma \rho^{\frac{1}{2}}}$ is lower bounded by a number close to $1$, indicating that the two states $\rho$ and $\sigma$ are close. From the general relation 
\begin{equation}
1-F \le D_{\rm tr}  \le \sqrt{ 1 - F^2}, 
\end{equation}
between the fidelity and trace distance $D_{\rm tr}( \rho, \sigma ) = \frac{1}{2} \norm{\rho - \sigma}_1$, we also have that their trace distance is small
\begin{equation}
\frac 12 \norm{\rho- \sigma }_1 \le  \sqrt{1 - \exp( -S(\rho\|\sigma ))}
\le  \sqrt{ S(\rho\|\sigma) }.
\end{equation}
Proceeding from relative entropy to trace distance via fidelity is
slightly less efficient than using Pinsker's inequality, which states that:
\begin{equation}
\frac 12 \norm{\rho- \sigma }_1 \le \sqrt{\frac{S( \rho || \sigma )}{2}}.
\end{equation}
However when $S(\rho\|\sigma)$ is not small (e.g.~$>1$), the bound in Theorem~\ref{th:rel_S_F} is much better. 

Theorem~\ref{th:rel_S_F} is a direct consequence of the monotonicity of the sandwiched R\'enyi divergence\cite{wilde_strong_2014,muller-lennert_quantum_2013,beigi_sandwiched_2013} $\tilde{S}_{\alpha} (\rho || \sigma )$, which is defined to be
\begin{equation}
\tilde{S}_{\alpha} (\rho || \sigma ) = \frac{1}{\alpha - 1} \ln \tr ( ( \sigma^{\frac{1 - \alpha}{2\alpha}} \rho\sigma^{\frac{1 - \alpha}{2\alpha}} )^\alpha ). 
\end{equation}
Its limits at $\alpha = 1$ and $\frac{1}{2}$ relate the relative entropy and the fidelity\cite{muller-lennert_quantum_2013}: 
\begin{equation}\label{eq:sandwiched-limits}
\begin{aligned}
  \lim_{\alpha \rightarrow 1} &\tilde{S}_{\alpha} (\rho || \sigma )  = S( \rho || \sigma ) \\ 
  \tilde{S}_{\frac{1}{2}} &(\rho || \sigma ) = - 2 \ln F( \rho, \sigma ).
\end{aligned}
\end{equation}
Further, $\tilde{S}_{\alpha} (\rho || \sigma )$ is monotonically
increasing in
$\alpha$\cite{muller-lennert_quantum_2013,beigi_sandwiched_2013}.
Together with \cref{eq:sandwiched-limits}, this proves Theorem~\ref{th:rel_S_F}. 

\section{Proximity to a dual unitary at $q = 2$}
\label{app:closeness_q_2}
In the main text, we have proven 
\begin{equation}
\label{eq:app_du_cond}
\norm{\frac{1}{q}\idlineu \otimes \frac{1}{q} \idlineu - \frac{1}{q^2}\uucontract[][tr][epr][tr] }_1 \le O( \epsilon^{\frac{1}{2}} )
\end{equation}
assuming the entanglement growth to be $2 \ln q - \epsilon$. 

A further question is whether $\exists$ a $q^2 \times q^2$ dual unitary $u^{\times}$, such that $\norm{u - u^{\times} }_1 $ is small. In general, we anticipate the error to be of order $\epsilon^{\frac{1}{4}}$, since we are given the error $\epsilon^{\frac{1}{2}}$ in the density matrix, and a version of the decoupling theorem~\cite{devetak_resource_2008,abeyesinghe_mother_2009} tells us the error will take a square root after recovering to, in this case, the operator state. 

We can not directly apply the decoupling trick here, since the result of the recovery should simultaneously satisfy the unitary and dual unitary conditions. There is an iterative algorithm to produce a dual unitary from an arbitrary unitary\cite{rather_creating_2020}, however a rigorous proof of its convergence and if so its rate are lacking. 

Nevertheless we can have the following general weak bound by the compactness of the unitary group: 
\begin{theorem}
Suppose $g$ is a non-negative continuous function on a compact metric space $V$. $\exists$ a non-empty kernel set $S \subset V$, s.t. $\forall u \in S$, $g(u) = 0$. Let 
\begin{equation}
f( x ) = \max \{ \text{\rm dist}( u, S ): g( u) \le x \} \quad \text{ for } x \ge 0 
\end{equation}
then
\begin{equation}
\lim_{x \rightarrow 0} f(x) = 0 
\end{equation}
\end{theorem}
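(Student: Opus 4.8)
The plan is to argue by contradiction, leveraging the compactness of $V$ together with the continuity of $g$ and of the distance function $u \mapsto \mathrm{dist}(u,S)$. First I would check that $f$ is well defined. For $x \ge 0$ set $V_x = \{u \in V : g(u) \le x\}$; this is the preimage of a closed interval under the continuous map $g$, hence closed, hence compact as a closed subset of the compact space $V$, and it is nonempty because it contains $S \ne \emptyset$. Since $u \mapsto \mathrm{dist}(u,S)$ is continuous (indeed $1$-Lipschitz), it attains its maximum on the compact set $V_x$, so the maximum defining $f(x)$ exists. Moreover $f$ is monotonically non-decreasing, because $x \le x'$ forces $V_x \subseteq V_{x'}$ and hence a larger maximum. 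Consequently the one-sided limit $\lim_{x \to 0^+} f(x)$ exists and equals $\inf_{x > 0} f(x) =: c \ge 0$, so it remains only to show $c = 0$.

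Next I would suppose for contradiction that $c > 0$. Choose any sequence $x_n \downarrow 0$. For each $n$, let $u_n \in V_{x_n}$ be a maximizer, so that $g(u_n) \le x_n$ and $\mathrm{dist}(u_n, S) = f(x_n) \ge c$. By the sequential compactness of $V$, pass to a convergent subsequence $u_{n_k} \to u^\ast \in V$.

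Then I would close the contradiction using continuity on both ends. On one hand, $0 \le g(u_{n_k}) \le x_{n_k} \to 0$, so by continuity of $g$ we get $g(u^\ast) = \lim_k g(u_{n_k}) = 0$; thus $u^\ast \in S = g^{-1}(0)$, whence $\mathrm{dist}(u^\ast, S) = 0$. On the other hand, continuity of the distance function gives $\mathrm{dist}(u^\ast, S) = \lim_k \mathrm{dist}(u_{n_k}, S) \ge c > 0$. These two statements are incompatible, so $c = 0$, which is precisely $\lim_{x \to 0^+} f(x) = 0$.

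The argument has no deep obstacle; the care required is essentially bookkeeping. The two points that genuinely need attention are the well-definedness of the maximum defining $f$ (which relies on compactness of the sublevel sets $V_x$ and continuity of $\mathrm{dist}(\cdot,S)$) and the fact that $S = g^{-1}(0)$ is closed, so that the accumulation point $u^\ast$ lands in $S$ and forces its distance to vanish. The decisive ingredient is the sequential compactness of $V$, which upgrades the qualitative fact ``every limit of points with $g \to 0$ lies in the zero set'' into the uniform, quantitative conclusion $f(x) \to 0$.
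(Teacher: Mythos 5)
Your proof is correct and uses essentially the same argument as the paper: both establish well-definedness of the maximum via compactness of the sublevel sets $g^{-1}([0,x])$ and continuity of $\mathrm{dist}(\cdot,S)$, then extract a convergent subsequence of maximizers and use continuity of $g$ to place the limit point in $S$ (both proofs, like yours, implicitly read the ``kernel set'' as the full zero set $S = g^{-1}(0)$, which is needed for the statement to hold). The only cosmetic difference is that you phrase the endgame as a contradiction with $c>0$, while the paper argues directly that $f(1/n_i)\to 0$ along the subsequence and then invokes monotonicity of $f$; the mathematical content is identical.
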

In the context of our problem, the non-negative continuous function $g$ is the LHS of Eq.~\eqref{eq:app_du_cond}, the compact space $V$ is the unitary group $\text{U}(q^2)$, and the set $S$ is the space of dual unitaries. The function $f(x)$ characterizes the maximal distance (can be any distance measure) a unitary can have, when the dual unitary condition is approximately satisfied to the extent of Eq.~\eqref{eq:app_du_cond}. The theorem tells us that when gradually tightening the approximation of the dual unitary condition, we will approach the dual unitary space. 
\begin{proof}
  First we note that the $\max$ in $f$ always exists.  This follows
  from the fact that $\text{dist}$ is continuous and $g^{-1}([0,x])$ is
  compact.


Next, define a sequence $(u_n)_n$ by choosing $u_n$ arbitrarily from
the points achieving the maximum in $f(1/n)$.
\begin{equation}
u_n \in \text{\rm argmax}\{ \text{dist}( u, S ) , g(u) \le 1/n  \}. 
\end{equation}
(Here $1/n$ could be any sequence converging to 0.)
 By definition
\begin{equation}
f( 1/n  ) = \text{\rm dist}( u_n , S ). 
\end{equation}
$u_n$ is a sequence in a compact space, hence there is a subsequence $u_{n_i}$ so that $u_{n_i}$ converges to a fixed point $v$. Since $g( u_{n_i} ) \le 1/{n_{i}}$ and $g$ is continuous,  we have $g( v ) = \lim_{i\rightarrow \infty} g(u_{n_i} ) \le \lim_{i\rightarrow \infty} 1/{n_i} = 0$. This means $g(v) = 0$ and $v \in S$. Since the distance function is also continuous, we have
\begin{equation}
  \lim_{i\rightarrow \infty} f( 1/{n_i} )
  = \lim_{i\rightarrow \infty} \text{dist} (u_{n_i}, S )
  = \text{dist} (v, S )  = 0 
\end{equation}
Finally since $f$ is a nonincreasing and nonnegative function, we
conclude that $f(x)$ is continuous at $x=0$.
\end{proof}

We may make further progress if we can parameterize the dual unitary manifold. Currently, a {\it general} parameterization is still lacking\cite{prosen_many_2021}, but we have an explicit one for $q = 2$. 
\begin{claim}
 A $4 \times 4$ unitary has a Cartan decomposition\cite{khaneja_cartan_2000}
\begin{equation}
\label{eq:sufour}
u = e^{i\phi} \sufour[u_{\rm sym}][u_{1}][u_{2}][u_3][u_4]
\end{equation}
where the symmetric part is generated by the Cartan algebra
\begin{equation}
\label{eq:u_sym}
u_{\rm sym} = \exp \left( -i \sum_{\alpha = x, y,z} J_{\alpha}\, \sigma^\alpha \otimes \sigma^\alpha  \right)
\end{equation}
and $u_{1,2,3,4} \in {\rm SU}(2)$. If $u$ is dual unitary, then at least two of $J_x$, $J_y$, $J_z$ have absolute value $\frac{\pi}{4}$\cite{bertini_exact_2019-1,gopalakrishnan_unitary_2019}. 
\end{claim}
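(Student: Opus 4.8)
The statement bundles two facts. The first, existence of the decomposition \eqref{eq:sufour}, is the standard $KAK$ (Cartan) decomposition of $\mathrm{SU}(4)$ with $K=\mathrm{SU}(2)\otimes\mathrm{SU}(2)$ and the abelian part generated by the mutually commuting $\sigma^\alpha\otimes\sigma^\alpha$; I would simply quote it from \cite{khaneja_cartan_2000}, with the global phase $e^{i\phi}$ absorbing the passage from $\mathrm{U}(4)$ to $\mathrm{SU}(4)$. So the content is the second assertion, and the plan is to reduce it to a short computation involving only $u_{\rm sym}$. The key observation is that dual unitarity is blind to the local factors $u_{1,2,3,4}$ and to $e^{i\phi}$. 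Recall that $u$ is dual unitary iff the reshuffled matrix $\tilde u$ with entries $\tilde u_{(ik),(jl)}=u_{ij,kl}$ is unitary; this is exactly \eqref{eq:du-cond} read as $\tilde u\tilde u^\dagger=\I$.

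First I would verify the invariance under local gates. A direct index computation shows that replacing $u$ by $(u_1\otimes u_2)\,u\,(u_3\otimes u_4)$ sends
\begin{equation}
\tilde u\ \longmapsto\ (u_1\otimes u_3^{T})\,\tilde u\,(u_2^{T}\otimes u_4),
\end{equation}
i.e. $\tilde u$ is left- and right-multiplied by unitaries (transposes of unitaries are unitary). Singular values are therefore unchanged, so $\tilde u$ is unitary iff $\tilde u_{\rm sym}$ is, and the phase plays no role. Hence $u$ is dual unitary iff $u_{\rm sym}$ is, and the $J_\alpha$ are the only parameters that matter.

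Next I would diagonalize the problem in the Pauli operator basis, which is the operator-Schmidt basis of $u_{\rm sym}$. Since the three generators commute and square to $\I$, expanding $\prod_\alpha(\cos J_\alpha\,\I-i\sin J_\alpha\,\sigma^\alpha\otimes\sigma^\alpha)$ and using $(\sigma^x\sigma^x)(\sigma^y\sigma^y)=-\sigma^z\sigma^z$ and cyclic relations yields
\begin{equation}
u_{\rm sym}=c_0\,\I\otimes\I+\sum_{\alpha=x,y,z}c_\alpha\,\sigma^\alpha\otimes\sigma^\alpha
\end{equation}
for explicit coefficients $c_\alpha$. Because $\{\I,\sigma^x,\sigma^y,\sigma^z\}$ are Hilbert–Schmidt orthogonal, these four terms \emph{are} the operator-Schmidt terms, so $u_{\rm sym}$ is dual unitary iff the four Schmidt values coincide, i.e. $|c_0|=|c_x|=|c_y|=|c_z|$. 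Writing $p=\cos2J_x$, $q=\cos2J_y$, $r=\cos2J_z\in[-1,1]$ and computing the squared magnitudes, the three equalities collapse into the symmetric system
\begin{equation}
p(q+r)=0,\qquad q(p+r)=0,\qquad r(p+q)=0 .
\end{equation}

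The last step is an elementary case analysis. If $p\neq0$ the first equation forces $r=-q$; the other two then read $q(p-q)=0$ and $q(p+q)=0$, which are jointly consistent only when $q=r=0$. If instead $p=0$, the remaining two reduce to $qr=0$, so $q=0$ or $r=0$. Either way at least two of $p,q,r$ vanish, and since $\cos2J_\alpha=0\iff|J_\alpha|=\tfrac{\pi}{4}$ in the standard fundamental domain, at least two of $J_x,J_y,J_z$ have absolute value $\tfrac{\pi}{4}$, proving the claim. (Reading the same computation backward gives the converse, realized e.g. by the swap and Fourier-transform gates.) I expect the main obstacle to be bookkeeping rather than conceptual: getting the coefficients $c_\alpha$ correct through the commuting-product expansion, and tracking the transpose/conjugate factors in the invariance law so the singular-value argument is airtight. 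The algebra leading to the symmetric form $p(q+r)=\dots=0$ is the one place where a sign slip would corrupt the case analysis, so I would double-check it against the special case $J_x=J_y=\tfrac{\pi}{4}$, $J_z$ arbitrary, where all $|c_\alpha|^2=\tfrac14$.
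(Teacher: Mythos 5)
Your proposal is correct, and it supplies a complete proof where the paper gives none: the Claim is stated with citations only (\cite{khaneja_cartan_2000} for the decomposition, \cite{bertini_exact_2019-1,gopalakrishnan_unitary_2019} for the dual-unitarity characterization). Both of your key steps check out. The invariance law $\tilde u \mapsto (u_1\otimes u_3^{T})\,\tilde u\,(u_2^{T}\otimes u_4)$ under $u\mapsto (u_1\otimes u_2)\,u\,(u_3\otimes u_4)$ is exactly right, so dual unitarity depends only on $u_{\rm sym}$; and since $\{\I,\sigma^x,\sigma^y,\sigma^z\}/\sqrt2$ is Hilbert--Schmidt orthonormal, the reshuffled matrix of $u_{\rm sym}$ has singular values $2|c_\alpha|$, so unitarity of the reshuffle is equivalent to $|c_0|=|c_x|=|c_y|=|c_z|$ (each then equal to $\tfrac12$, because the squares sum to $1$). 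Writing $p=\cos 2J_x$, $q=\cos 2J_y$, $r=\cos 2J_z$, one gets $|c_0|^2=\tfrac14(1+pq+qr+rp)$, $|c_x|^2=\tfrac14(1+qr-pq-rp)$ and cyclic, so the equalities collapse to your system $p(q+r)=q(p+r)=r(p+q)=0$, and your case analysis correctly forces at least two of $p,q,r$ to vanish. It is worth comparing this with what the paper \emph{does} prove: in \appref{app:closeness_q_2}, the proof of the $q=2$ proximity theorem computes the trace norm of the dual-unitarity defect of $u_{\rm sym}$ to be $|\cos 2J_x\cos 2J_y|+|\cos 2J_y\cos 2J_z|+|\cos 2J_x\cos 2J_z|$; setting this to zero says all pairwise products vanish, which is equivalent to your conditions. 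So your route (Schmidt values of the realignment) and the paper's implicit route (trace norm of the output density matrix) reduce to the same trigonometric system; yours makes the Claim self-contained, while the paper's version is organized to quantify the approximate case $\delta>0$. One small caveat, which you correctly flag: concluding $|J_\alpha|=\tfrac{\pi}{4}$ from $\cos 2J_\alpha=0$ requires the $J_\alpha$ to lie in the standard Weyl chamber $|J_\alpha|\le\tfrac{\pi}{4}$ of the Cartan decomposition; this convention is implicit in the paper's statement as well.
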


This helps us to prove the following:
\begin{theorem}
If $q = 2$, and 
\begin{equation}
\label{eq:close_to_dual}
\norm{\idlineu \otimes  \idlineu - \uucontract[][tr][epr][tr] }_1 \le \delta 
\end{equation}

then, $\exists$ $u^{\times}  \in $ $4 \times 4$ dual unitary, s.t. 
\begin{equation}
\norm{u - u^{\times} }_1 \le O( \sqrt{ \delta } )
\end{equation}
\end{theorem}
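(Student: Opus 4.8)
The plan is to use the Cartan decomposition~\eqref{eq:sufour} to reduce the whole question to the three interaction angles $(J_x,J_y,J_z)$, to establish a quadratic lower bound relating the dual-unitarity defect on the left of~\eqref{eq:close_to_dual} to the Euclidean distance from $(J_x,J_y,J_z)$ to the dual-unitary locus in angle space, and then to transfer this back to $\norm{u-u^{\times}}_1$ by Lipschitz continuity of the map from angles to gates.

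First I would observe that both the defect in~\eqref{eq:close_to_dual} and the property of being dual unitary are invariant under the single-qubit rotations $u_{1,2,3,4}$ and the phase $e^{i\phi}$ of~\eqref{eq:sufour}. The output-leg gates either conjugate $\rho_{AB'}$ by a local unitary, which fixes the maximally mixed state and hence preserves the trace distance, or act on a traced-out leg; the input-leg gates can be absorbed into the Bell pairs; and the phase cancels. Thus the defect is a function $g(J_x,J_y,J_z)$ of the angles alone, continuous on the fundamental domain (a torus) $T$, and by the Cartan claim it vanishes exactly on the locus $S\subset T$ where at least two of the $J_\alpha$ equal $\pm\tfrac{\pi}{4}$. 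The set $S$ is a union of line segments meeting at the ``SWAP'' points where all three angles equal $\pm\tfrac{\pi}{4}$.

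Next I would prove the key estimate: there is a constant $c>0$ with
\begin{equation}
  g(J)\ \ge\ c\,\mathrm{dist}(J,S)^2 \qquad \text{for all } J\in T.
\end{equation}
Writing $c_\alpha=\cos 2J_\alpha$, an explicit computation of $4\rho_{AB'}-\I$ from~\eqref{eq:u_sym} expresses its spectrum through the $c_\alpha$, and the defect vanishes precisely when two of the $c_\alpha$ vanish. Away from the SWAP points the defect vanishes only linearly in $\mathrm{dist}(J,S)$, so the quadratic bound is comfortable there; the delicate region is a neighbourhood of a SWAP point, where along the diagonal $J=(\tfrac{\pi}{4}-\theta)(1,1,1)$ one has $\mathrm{dist}(J,S)\sim\theta$ while the eigenvalues of $4\rho_{AB'}-\I$ scale as $\theta^2$, so that $g\sim\theta^2\sim\mathrm{dist}(J,S)^2$. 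It is exactly this quadratic (rather than linear) vanishing at the SWAP points that forces the final rate to be $\sqrt{\delta}$ rather than $\delta$. I would get the uniform constant $c$ by combining a local estimate at each point of $S$ (nondegeneracy of the transverse Hessian at smooth points, and a direct expansion in the $\theta$-variables at the SWAP points) with the compactness of $T$, in the spirit of the preceding abstract compactness theorem.

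Finally, given~\eqref{eq:close_to_dual} with defect at most $\delta$, the estimate yields $\mathrm{dist}(J,S)\le\sqrt{\delta/c}=O(\sqrt{\delta})$. Let $J^{\times}$ be the nearest point of $S$ and define $u^{\times}$ by replacing $(J_x,J_y,J_z)$ in~\eqref{eq:sufour} by $J^{\times}$ while keeping $e^{i\phi}$ and $u_{1,2,3,4}$ fixed; the Cartan claim makes $u^{\times}$ dual unitary. Since $J\mapsto\exp(-i\sum_\alpha J_\alpha\,\sigma^\alpha\otimes\sigma^\alpha)$ is Lipschitz in operator norm and conjugation by the fixed local gates preserves norms, $\norm{u-u^{\times}}_1\le L\,\mathrm{dist}(J,S)=O(\sqrt{\delta})$. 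The main obstacle is the uniform quadratic lower bound: one must control $g$ near the SWAP points, where the generic linear lower bound degenerates and the genuine $\theta^2$ behaviour sets the rate.
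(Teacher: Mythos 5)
Your proposal is correct and takes essentially the same route as the paper's proof: reduce to the Cartan/symmetric part after discarding the phase and local gates (which, as you argue, leave both the defect and dual unitarity invariant), express the defect explicitly as $\abs{\cos 2J_x\cos 2J_y}+\abs{\cos 2J_y\cos 2J_z}+\abs{\cos 2J_z\cos 2J_x}$, conclude that two of the angles lie within $O(\sqrt{\delta})$ of $\pm\tfrac{\pi}{4}$, snap those two angles to $\pm\tfrac{\pi}{4}$ while keeping everything else fixed, and finish with a Lipschitz estimate on $J\mapsto u_{\mathrm{sym}}(J)$. The only difference is packaging: where you posit a uniform quadratic lower bound $g\ge c\,\mathrm{dist}(J,S)^2$ (correctly identifying the SWAP points as the tight case forcing the $\sqrt{\delta}$ rate), the paper argues directly from the product structure that at most one of the $\abs{\cos 2J_\alpha}$ can exceed $\sqrt{\delta}$ -- the same worst case in different clothing -- and computes the final constant ($\norm{u-u^{\times}}_1\le 14\sqrt{\delta}$) explicitly.
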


\begin{proof}
We parameterize the unitary gate $u$ by Eq.~\eqref{eq:sufour}. Since 1-norm of the matrix is unitarily invariant, Eq.~\eqref{eq:close_to_dual} also holds for its symmetric part $u_{\rm sym}$.
\begin{equation}
\label{eq:sym_close_to_dual}
\norm{\idlineu \otimes  \idlineu - \uucontract[][tr][epr][tr][\text{sym}] }_1 \le \delta 
\end{equation}
We take $u^{\times}$ to have the same $u_1$, $u_2$, $u_3$ and $u_4$ as in the parameterization of $u$, but take a different symmetric part $u_{\rm sym}^{\times}$ such that it is a dual unitary. Then
\begin{equation}
\norm{ u - u^{\times}}_1 = \norm{ u_{\rm sym} - u^{\times}_{\rm sym} }_1 
\end{equation}
We only need to designate the symmetric part. We set
\begin{equation}
u = \sum_{\alpha = \text{id}, x, y, z}C_\alpha \sigma^\alpha \otimes \sigma^\alpha
\end{equation}
Eq.~\eqref{eq:sym_close_to_dual} becomes
\begin{equation}
  \norm{ (2 |C_{\alpha}|^2  - \frac{1}{2} ) \sigma^\alpha \otimes \sigma^{\alpha *} }_1 \le \delta 
\end{equation}
The 1-norm is computed to be
\begin{equation}
\begin{aligned}
&| |C_0|^2 - |C_1|^2 - |C_2|^2+ |C_3|^2   | \\
&+ | |C_0|^2 - |C_1|^2 + |C_2|^2 - |C_3|^2   | \\
&+ | |C_0|^2 + |C_1|^2 - |C_2|^2 - |C_3|^2   |  \\
=& | \cos 2 J_x \cos 2 J_y | + | \cos 2 J_y \cos 2 J_z | + | \cos 2 J_x \cos 2 J_z |  
\end{aligned}
\end{equation}
Since it is no greater than $\delta$, we have any one of $| \cos 2 J_x \cos 2 J_y | $, $ | \cos 2 J_y \cos 2 J_z | $,$ | \cos 2 J_x \cos 2 J_z | $ to be no greater than $\delta$. 
Among $|\cos 2 J_x |$, $|\cos 2 J_y |$, $|\cos 2 J_z |$, at most one of them can be $> \sqrt{ \delta}$. Without loss of generality, we assume
\begin{equation}
|\cos 2 J_x | \le \sqrt{ \delta }, \quad |\cos 2 J_y | \le \sqrt{ \delta }
\end{equation}
For sufficiently small $\delta$, $J_x$ can confined in one of the intervals
\begin{equation}
\left|J_x - \frac{\pi}{4}\right| \le \frac{1}{2} \arcsin{\sqrt{ \delta }} < \sqrt{ \delta}  \quad \left|J_x + \frac{\pi}{4}\right| \le \sqrt{ \delta }
\end{equation}
Without loss of generality, we assume $J_x$ and $J_y$ are close to $\frac{\pi}{4}$. The analysis for the $- \frac{\pi}{4}$ case is completely the same. 

Then we choose $u^{\times}_{\rm sym}$ to be parameterized by $J^{\times}_x = \frac{\pi}{4} $, $J^{\times}_y = \frac{\pi}{4}$, and $J^{\times}_z = J_z$. The bottom line is, we can always choose
\begin{equation}
\left| e^{i J_x^{\times} } - e^{i J_x } \right| \le |J_x - J_x^{\times} | \le \sqrt{\delta }, \quad \left| e^{i J_y^{\times} } - e^{i J_y } \right|  \le \sqrt{\delta} 
\end{equation}
Finally, we control the distance between the unitaries:
\begin{equation}
\norm{u - u^{\times} }_1 = \norm{\sum_\alpha ( C_\alpha - C_\alpha^{\times}) \sigma^\alpha \otimes \sigma^\alpha }_1
\end{equation}
This $1$-norm is computed to be
\begin{equation}
\begin{aligned}
&\norm{u - u^{\times} }_1 =\,\, \left| e^{ i ( J_x + J_y + J_z) } - e^{ i ( J^{\times}_x + J^{\times}_y + J^{\times}_z) }  \right|  \\
&+ \Big| e^{ i ( J_x + J_y + J_z) } - e^{ i ( J^{\times}_x + J^{\times}_y + J^{\times}_z) } \\
  &\qquad +   e^{ i ( J_x - J_y - J_z) } - e^{ i ( J^{\times}_x - J^{\times}_y - J^{\times}_z) } \Big| \\
&+ \Big| e^{ i ( J_x + J_y + J_z) } - e^{ i ( J^{\times}_x + J^{\times}_y + J^{\times}_z) } \\
  &\qquad +   e^{ i ( -J_x + J_y - J_z) } - e^{ i ( -J^{\times}_x + J^{\times}_y - J^{\times}_z) } \Big|  \\
&+ \Big| e^{ i ( J_x + J_y + J_z) } - e^{ i ( J^{\times}_x + J^{\times}_y + J^{\times}_z) } \\
  &\qquad +   e^{ i ( -J_x - J_y + J_z) } - e^{ i ( -J^{\times}_x - J^{\times}_y + J^{\times}_z) } \Big|
\end{aligned}
\end{equation}
Let $s_{x,y,z} = \pm 1$,
\begin{equation}
\begin{aligned}
&\left| e^{ i \sum_{i=x,y,z}s_{i} J_i  } - e^{ i \sum_{i=x,y,z} s_i J_i^{\times} }\right| \\
& = \left| e^{ i \sum_{i=x,y}s_{i} J_i  } - e^{ i \sum_{i=x,y} s_i J_i^{\times} }\right|\\  
&\le  \left|( e^{ i s_x J_x}  - e^{is_x J_x^{\times} }) e^{ is_y J_y } \right| 
+ \left| e^{ i s_x J_x^{\times} } ( e^{i s_y J_y } - e^{i s_y J^{\times}_y } ) \right| \\
&\le 2 \sqrt{\delta } 
\end{aligned}
\end{equation}
Hence by breaking the three terms using the triangle inequality
\begin{equation}
\norm{u - u^{\times} }_1 \le 14 \sqrt{\delta} 
\end{equation}
\end{proof}

\section{The separating states for the kicked Ising model}
\label{app:sep_states_in_skdi}
The kicked Ising model has a unitary evolution with a Floquet operator
\begin{equation}
U = \exp( - i H_K ) \exp( -i H_I ) 
\end{equation}
where $H_I$ is the Ising Hamiltonian
\begin{equation}
H_I = J \sum_{j=0}^{2L-1} Z_j Z_{j+1} + h \sum_{j=0}^{2L-1} Z_j
\end{equation}
and $H_K$ is the periodic kick
\begin{equation}
H_K = b \sum_{j=0}^{2L-1} X_j.
\end{equation}
\begin{figure}[h]
\centering
\includegraphics[width=0.75\columnwidth]{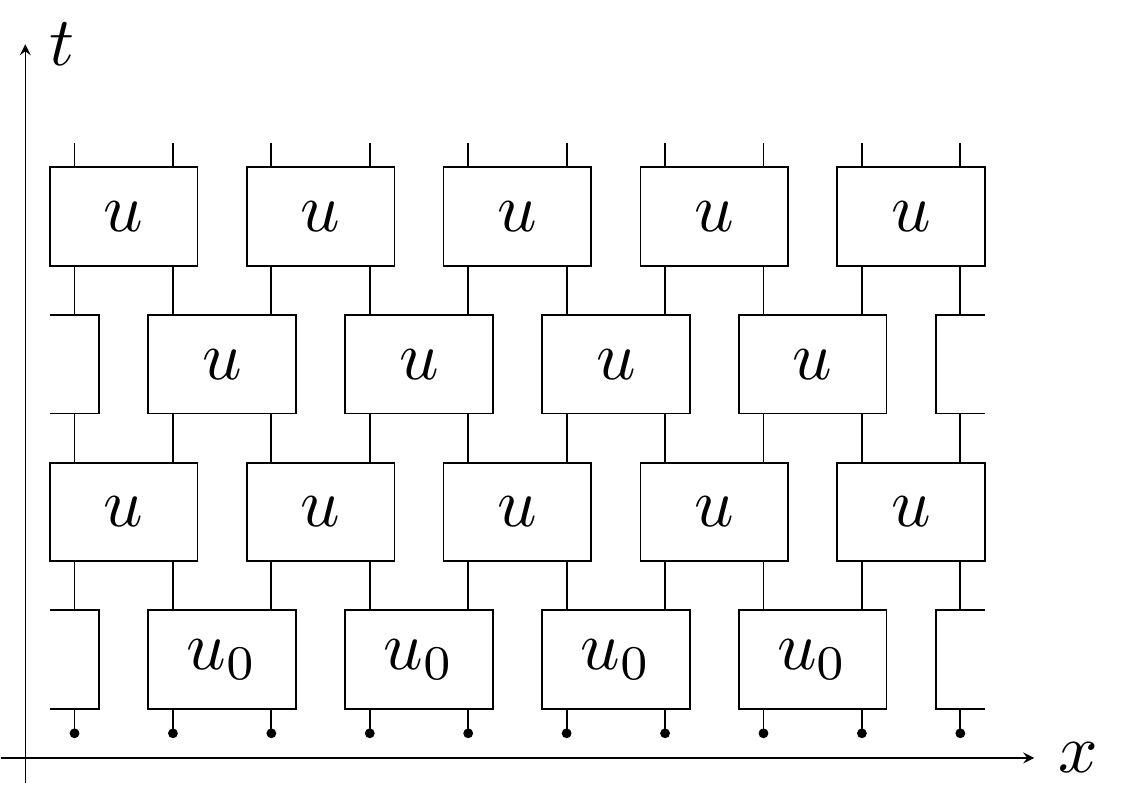}
\caption{The kicked Ising model in circuit representation. The first layer has different unitary gates than other layers. }
\label{fig:sdki}
\end{figure}

It has a circuit representation, where the gate for $t \ge 2$ assume the form\cite{bertini_exact_2019-1,gopalakrishnan_unitary_2019}
\begin{equation}
\begin{aligned}
u = &[\exp( -i h Z / 2 ) \otimes \exp( -i h Z / 2 )]  \\
    & \qquad \exp( - i J ZZ )  \\
    & [ \exp( -i b X ) \otimes \exp( -i b X )]\\
    & \qquad \exp( - i J ZZ ) \\
    & [\exp( -i h Z / 2) \otimes \exp( -i h Z  / 2 )]  
\end{aligned}
\end{equation}
while at the first layer the gate is taken to be
\begin{equation}
u_0 = \exp( -i J Z_1 Z_2 - i h Z_1 / 2 - i h Z_2 / 2 ). 
\end{equation}
At the self-dual (dual unitary) point, $|J| = |b| = \frac{\pi}{4}$. We take both parameters to be positive. 

There are two classes of separating states in Ref.~\onlinecite{bertini_entanglement_2018}:
\begin{enumerate}
\item $\mathcal{T}$: tensor product of spin states on the $xy$ plane.
\item $\mathcal{L}$: tensor product of spin states in $z$ direction. 
\end{enumerate}

Given the $\mathcal{T}$ class state, the interaction $\exp( -i
\frac{\pi}{4} Z_1 Z_2 ) $ $= \frac{\sqrt{2}}{2} ( \I - i Z_1 Z_2)$ in
$u_0$ creates a Bell state on the two qubits it acts on. Then the
zigzag pattern is formed, and any dual unitary gate in the circuit afterwards can create maximal entanglement growth. 

Given the $\mathcal{L}$ class state, the $u_0$ layer only generates
phases. The gate $u$ can be written in the standard form in Eq.~\eqref{eq:sufour} with $u_1 = u_2  = \exp( -i h Z / 2)$, and $J_y = J_z = \frac{\pi}{4}$. Again, $u_1 $, $u_2$  only creates phases. The symmetric part generates a Bell pair. $u_3$, $u_4$, whatever they are, do not change the entanglement. Thus the zigzag pattern is exactly established at $t = 2$, and the maximal entanglement growth can resume afterwards.

\section{The solvable MPS ansatz satisfies the zigzag pattern}
\label{app:smps}
In addition to the separating states in the kicked Ising model,
Ref.~\cite{piroli_exact_2020} introduces a wider class of ``solvable''
matrix product states for the dual unitary circuit.  These states are
invariant under a two-site shift and can be written as:
\begin{equation}
| \psi \rangle = \sum_{i_1, i_2, \cdots, i_n} \cdots A^{i_1} B^{i_2} A^{i_3} B^{i_4} \cdots | \cdots i_1 i_2 i_3 \cdots \rangle. 
\end{equation}
The state can be diagrammatically represented as
\begin{equation}
\cdots 
\fineq[-3ex][0.5][1]{
  \mpsa[0][0][][$A$][][$i_1$][];
  \mpsa[2][0][][$B$][][$i_2$][];
  \mpsa[4][0][][$A$][][$i_3$][];
  \mpsa[6][0][][$B$][][$i_4$][];
}
\cdots 
\end{equation}
The dimension of the physical index (vertical) is $q$ and the dimension of the auxiliary index is $\chi$ from $B$ to $A$ and $\chi'$ from $A$ to $B$. We will use open boundary condition and take the system size $L \rightarrow \infty$ limit in the end. The state is solvable when the tensor
\begin{equation}
\mathcal{N}_{ab}^{ij} \equiv 
\sqrt{q} \times 
\fineq[-3ex][0.5][1]{
  \mpsa[0][0][][$A$][$a$][$i$][];
  \mpsa[2][0][][$B$][][$j$][$b$];
}
\end{equation}
is a unitary matrix with $a$, $i$ as its row indices and $b$, $j$ as its column indices. This indicates that $\chi' = \chi q$. 

We denote the entanglement along the cut between $A$ and $B$ as $E(A:B)$, and long the cut between $B$ and $A$ as $E(B:A)$. We will show that
\begin{equation}
\label{eq:S_A_cut_B}
E(A:B) = \ln \chi + \ln q .
\end{equation}
On the other hand, we have
\begin{equation}
E(B:A) \le \ln \chi 
\end{equation}
and
\begin{equation}
E(B:A) \ge E(A:B) - \ln q = \ln \chi.  
\end{equation}
Hence
\begin{equation}
E(B:A) = \ln \chi.  
\end{equation}
The difference of the entanglement across neighboring bounds are exactly $\ln q$. Therefore the solvable MPS ansatz satisfies the zigzag pattern.

Non we follow the strategy in Ref.~\cite{piroli_exact_2020} and use a replica trick to show Eq.~\eqref{eq:S_A_cut_B}. For simplicity we only show $\tr(\rho_Q^2)$. It is
\begin{equation}
\tr( \rho_Q^2 ) = 
\cdots
\fineq[-3ex][0.5][1]{
  \mpsastack[0][0][][$B$][][][];
  \idarc[0][1];
  \mpsastack[2][0][][$A$][][][];
  \idarc[2][1];
  \mpsastack[4][0][][$B$][][][];
  \swaparc[4][1];
  \mpsastack[6][0][][$A$][][][];
  \swaparc[6][1];
}
\cdots 
\end{equation}
The definition of the solvable MPS ensures that the there is a unique left eigenvector and a unique right eigenvector of the transfer matrix of the MPS
\begin{equation}
\fineq[-3ex][0.5][1]{
\mpsa[-0.1][0.1][blue!50][];
\mpsa[0][0][red!50][$A$];
\shortarc[0][1];
\begin{scope}[shift={(2,0)}]
\mpsa[-0.1][0.1][blue!50][];
\mpsa[0][0][red!50][$B$];
\shortarc[0][1];
\end{scope}
}
\end{equation}
whose eigenvalue is exactly $1$. As diagrams, these two eigenvalues are
\begin{equation}
| R \rangle  =  \frac{1}{\sqrt{ \chi} }
\fineq[-0.5ex][2][1]{
\begin{scope}[yscale = -1, rotate=-90]
\shortarc[0][0];
\end{scope}
},
\quad
\langle  L | = 
\frac{1}{\sqrt{ \chi} }
\fineq[-0.5ex][2][1]{
\begin{scope}[yscale = -1, rotate=90]
\shortarc[0][0];
\end{scope}
}.
\end{equation}
By the normalization of the MPS, their overlaps with the left and right boundary conditions are $1$. We can thus replace the ellipsis with tensor products of those eigenvectors and obtain
\begin{equation}
\tr(\rho_Q^2)
= 
\frac{1}{\chi^2 q^2 }\times \left( q^2
\fineq[-2ex][0.5][1]{
  \begin{scope}[yscale=-1,xscale=1,rotate=90]
    \idarc[0][1];
  \end{scope}
  \mpsastack[0][0][][$A$][][][];
  \idarc[0][1];
  \mpsastack[2][0][][$B$][][][];
  \swaparc[2][1];
  \begin{scope}[yscale=-1,xscale=1,rotate=-90]
    \swaparc[0.3][2.7];
  \end{scope} 
}\right).
\end{equation}
This is a the purity of the operator state defined by the unitary $\mathcal{N}_{ab}^{ij}$ when the entanglement cut is imposed between its row and column indices. As a result
\begin{equation}
\tr( \rho_Q^2 ) = \frac{1}{ \chi q }. 
\end{equation}
By using the same diagrammatic evaluation, we can similarly show that
\begin{equation}
\tr( \rho_Q^n) = \frac{1}{(\chi q)^{n-1}},  
\end{equation}
which means
\begin{equation}
E(A:B) = S(Q) = \ln \chi + \ln q. 
\end{equation}

\section{Fidelity of a random unitary to a dual unitary}
\label{app:ru_fidelity}
How close is a randomly chosen unitary to a dual unitary? In the main text (Theorem 1), we use the trace distance of the density matrix $\rho_{AB'}$ with respect to the maximally mixed state to characterize the proximity for a general choice of $q$. Here we evaluate the fidelity
\begin{equation}
\begin{aligned}
  F( \rho_{AB'}, \frac{\I_{AB'} }{q^2} )
  &= \tr( \sqrt{\frac{1}{q} \rho_{AB'} \frac{1}{q}} ) = \frac{1}{q } \tr( \sqrt{\rho_{AB'}} ),      
\end{aligned}
\end{equation}
assuming the input state $|\psi_{AB} \rangle$, $|\psi_{CD} \rangle$
are Bell states, and the unitary $u$ is randomly
chosen. Alternatively, one can think of $\rho_{AB'}$ as tracing out
$C'D$ from the Choi-Jamio\l{}kowski (or operator) state $\ket{u}_{AB'C'D}$.

The random average of such (generalized) purity at integer $n$ can be evaluated (see e.g. \cite{zhou_operator_2017,nadal_statistical_2011-1})
\begin{equation}
\overline{\text{tr}(\rho_{AB'}^n)} = \frac{C_n}{q^{2(n-1)}} + \mathcal{O}(\frac{1}{q^{2n+2}})
\end{equation}
where $C_n$ is the Catalan number
\begin{equation}
C_n = \frac{(2n)!}{n! (n+1)!} 
\end{equation}

Analytically continuing to $n = \frac{1}{2}$ , we get
\begin{equation}
\begin{aligned}
  \overline{F( \rho_{AB'}, \I_{AB'} /q^2 ) } &= \frac{8}{3\pi} + \mathcal{O}(\frac{1}{q^2})   \approx 0.8488 + \mathcal{O}(\frac{1}{q^2})
\end{aligned}
\end{equation}
Therefore, a randomly chosen unitary is close be a dual unitary with a fidelity measure about $0.8488$ for the dual unitary condition. 

This is analogous to the approximate unitary property of a random
quantum state. Specifically, a quantum state $|\psi_{AB} \rangle $ on
two qudits $A$ and $B$ can be viewed as an operator from $A$ to $B$
via the Choi-Jamio\l{}kowski isomorphism.  This operator will be
unitary iff $\rho_A = \I_A / q$. For a random state, the entanglement
$S(A)$ is $\ln(q) - \mathcal{O}(1)$ when $q$ is large, which means
that $\rho_A $ for a random state has $O(1)$ relative entropy distance to $\I_A / q$. We estimate the following fidelity
\begin{equation}
\begin{aligned}
F( \rho_{A}, \I_{A} /q ) &= \tr( \sqrt{\frac{1}{\sqrt{q}} \rho_{A} \frac{1}{\sqrt{q}}} )  = \frac{1}{\sqrt{q} } \tr( \sqrt{\rho_{A}} ). 
\end{aligned}
\end{equation}
Analytically continuing the random average result (see e.g. \cite{zhou_operator_2017,nadal_statistical_2011-1}) 
\begin{equation}
\overline{\text{tr}(\rho_{A}^n)} = \frac{C_n}{q^{(n-1)}} +  \mathcal{O}(\frac{1}{q^{n+1}})
\end{equation}
to $n = \frac{1}{2}$, we also have
\begin{equation}
\overline{F( \rho_{A}, \I_{A} /q ) } = \frac{8}{3\pi} + \mathcal{O}(\frac{1}{q^2})   \approx 0.8488 + \mathcal{O}(\frac{1}{q^2}). 
\end{equation}

\bibliographystyle{apsrev4-1}
\bibliography{max_vE_paper}

\begin{thebibliography}{51}%
\makeatletter
\providecommand \@ifxundefined [1]{%
 \@ifx{#1\undefined}
}%
\providecommand \@ifnum [1]{%
 \ifnum #1\expandafter \@firstoftwo
 \else \expandafter \@secondoftwo
 \fi
}%
\providecommand \@ifx [1]{%
 \ifx #1\expandafter \@firstoftwo
 \else \expandafter \@secondoftwo
 \fi
}%
\providecommand \natexlab [1]{#1}%
\providecommand \enquote  [1]{``#1''}%
\providecommand \bibnamefont  [1]{#1}%
\providecommand \bibfnamefont [1]{#1}%
\providecommand \citenamefont [1]{#1}%
\providecommand \href@noop [0]{\@secondoftwo}%
\providecommand \href [0]{\begingroup \@sanitize@url \@href}%
\providecommand \@href[1]{\@@startlink{#1}\@@href}%
\providecommand \@@href[1]{\endgroup#1\@@endlink}%
\providecommand \@sanitize@url [0]{\catcode `\\12\catcode `\$12\catcode
  `\&12\catcode `\#12\catcode `\^12\catcode `\_12\catcode `\%12\relax}%
\providecommand \@@startlink[1]{}%
\providecommand \@@endlink[0]{}%
\providecommand \url  [0]{\begingroup\@sanitize@url \@url }%
\providecommand \@url [1]{\endgroup\@href {#1}{\urlprefix }}%
\providecommand \urlprefix  [0]{URL }%
\providecommand \Eprint [0]{\href }%
\providecommand \doibase [0]{http://dx.doi.org/}%
\providecommand \selectlanguage [0]{\@gobble}%
\providecommand \bibinfo  [0]{\@secondoftwo}%
\providecommand \bibfield  [0]{\@secondoftwo}%
\providecommand \translation [1]{[#1]}%
\providecommand \BibitemOpen [0]{}%
\providecommand \bibitemStop [0]{}%
\providecommand \bibitemNoStop [0]{.\EOS\space}%
\providecommand \EOS [0]{\spacefactor3000\relax}%
\providecommand \BibitemShut  [1]{\csname bibitem#1\endcsname}%
\let\auto@bib@innerbib\@empty
\bibitem [{\citenamefont {Lieb}\ and\ \citenamefont
  {Robinson}(1972)}]{lieb_finite_1972}%
  \BibitemOpen
  \bibfield  {author} {\bibinfo {author} {\bibfnamefont {E.~H.}\ \bibnamefont
  {Lieb}}\ and\ \bibinfo {author} {\bibfnamefont {D.~W.}\ \bibnamefont
  {Robinson}},\ }\bibfield  {title} {\bibinfo {title} {The finite group
  velocity of quantum spin systems},\ }\href {\doibase 10.1007/bf01645779}
  {\bibfield  {journal} {\bibinfo  {journal} {Commun.Math. Phys.}\ }\textbf
  {\bibinfo {volume} {28}},\ \bibinfo {pages} {251} (\bibinfo {year}
  {1972})}\BibitemShut {NoStop}%
\bibitem [{\citenamefont {Casini}\ \emph {et~al.}(2016)\citenamefont {Casini},
  \citenamefont {Liu},\ and\ \citenamefont {Mezei}}]{casini_spread_2015}%
  \BibitemOpen
  \bibfield  {author} {\bibinfo {author} {\bibfnamefont {H.}~\bibnamefont
  {Casini}}, \bibinfo {author} {\bibfnamefont {H.}~\bibnamefont {Liu}}, \ and\
  \bibinfo {author} {\bibfnamefont {M.}~\bibnamefont {Mezei}},\ }\bibfield
  {title} {\bibinfo {title} {Spread of entanglement and causality},\ }\href
  {\doibase 10.1007/jhep07(2016)077} {\bibfield  {journal} {\bibinfo  {journal}
  {J. High Energ. Phys.}\ }\textbf {\bibinfo {volume} {2016}} (\bibinfo {year}
  {2016}),\ 10.1007/jhep07(2016)077}\BibitemShut {NoStop}%
\bibitem [{\citenamefont {Liu}\ and\ \citenamefont
  {Suh}(2014)}]{liu_entanglement_2014}%
  \BibitemOpen
  \bibfield  {author} {\bibinfo {author} {\bibfnamefont {H.}~\bibnamefont
  {Liu}}\ and\ \bibinfo {author} {\bibfnamefont {S.~J.}\ \bibnamefont {Suh}},\
  }\bibfield  {title} {\bibinfo {title} {Entanglement tsunami: {Universal}
  scaling in holographic thermalization},\ }\href {\doibase
  10.1103/physrevlett.112.011601} {\bibfield  {journal} {\bibinfo  {journal}
  {Phys. Rev. Lett.}\ }\textbf {\bibinfo {volume} {112}} (\bibinfo {year}
  {2014}),\ 10.1103/physrevlett.112.011601}\BibitemShut {NoStop}%
\bibitem [{\citenamefont {Hartman}\ and\ \citenamefont
  {{Afkhami-Jeddi}}(2015)}]{hartman_speed_2015}%
  \BibitemOpen
  \bibfield  {author} {\bibinfo {author} {\bibfnamefont {T.}~\bibnamefont
  {Hartman}}\ and\ \bibinfo {author} {\bibfnamefont {N.}~\bibnamefont
  {{Afkhami-Jeddi}}},\ }\bibfield  {title} {\bibinfo {title} {Speed {{Limits}}
  for {{Entanglement}}},\ }\href@noop {} {\bibfield  {journal} {\bibinfo
  {journal} {arXiv:1512.02695 [hep-th]}\ } (\bibinfo {year}
  {2015})}\BibitemShut {NoStop}%
\bibitem [{\citenamefont {Calabrese}\ and\ \citenamefont
  {Cardy}(2007)}]{calabrese_quantum_2007}%
  \BibitemOpen
  \bibfield  {author} {\bibinfo {author} {\bibfnamefont {P.}~\bibnamefont
  {Calabrese}}\ and\ \bibinfo {author} {\bibfnamefont {J.}~\bibnamefont
  {Cardy}},\ }\bibfield  {title} {\bibinfo {title} {Quantum quenches in
  extended systems},\ }\href {\doibase 10.1088/1742-5468/2007/06/p06008}
  {\bibfield  {journal} {\bibinfo  {journal} {J. Stat. Mech.}\ }\textbf
  {\bibinfo {volume} {2007}},\ \bibinfo {pages} {P06008} (\bibinfo {year}
  {2007})}\BibitemShut {NoStop}%
\bibitem [{\citenamefont {Calabrese}\ and\ \citenamefont
  {Cardy}(2016)}]{calabrese_quantum_2016-1}%
  \BibitemOpen
  \bibfield  {author} {\bibinfo {author} {\bibfnamefont {P.}~\bibnamefont
  {Calabrese}}\ and\ \bibinfo {author} {\bibfnamefont {J.}~\bibnamefont
  {Cardy}},\ }\bibfield  {title} {\bibinfo {title} {Quantum quenches in 1 + 1
  dimensional conformal field theories},\ }\href {\doibase
  10.1088/1742-5468/2016/06/064003} {\bibfield  {journal} {\bibinfo  {journal}
  {J. Stat. Mech.}\ }\textbf {\bibinfo {volume} {2016}},\ \bibinfo {pages}
  {064003} (\bibinfo {year} {2016})}\BibitemShut {NoStop}%
\bibitem [{\citenamefont {Kim}\ and\ \citenamefont
  {Huse}(2013)}]{kim_ballistic_2013}%
  \BibitemOpen
  \bibfield  {author} {\bibinfo {author} {\bibfnamefont {H.}~\bibnamefont
  {Kim}}\ and\ \bibinfo {author} {\bibfnamefont {D.~A.}\ \bibnamefont {Huse}},\
  }\bibfield  {title} {\bibinfo {title} {Ballistic spreading of entanglement in
  a diffusive nonintegrable system},\ }\href {\doibase
  10.1103/physrevlett.111.127205} {\bibfield  {journal} {\bibinfo  {journal}
  {Phys. Rev. Lett.}\ }\textbf {\bibinfo {volume} {111}},\ \bibinfo {pages}
  {127205} (\bibinfo {year} {2013})}\BibitemShut {NoStop}%
\bibitem [{\citenamefont {Nahum}\ \emph {et~al.}(2018)\citenamefont {Nahum},
  \citenamefont {Vijay},\ and\ \citenamefont {Haah}}]{nahum_operator_2018}%
  \BibitemOpen
  \bibfield  {author} {\bibinfo {author} {\bibfnamefont {A.}~\bibnamefont
  {Nahum}}, \bibinfo {author} {\bibfnamefont {S.}~\bibnamefont {Vijay}}, \ and\
  \bibinfo {author} {\bibfnamefont {J.}~\bibnamefont {Haah}},\ }\bibfield
  {title} {\bibinfo {title} {Operator spreading in random unitary circuits},\
  }\href {\doibase 10.1103/physrevx.8.021014} {\bibfield  {journal} {\bibinfo
  {journal} {Phys. Rev. X}\ }\textbf {\bibinfo {volume} {8}},\ \bibinfo {pages}
  {021014} (\bibinfo {year} {2018})}\BibitemShut {NoStop}%
\bibitem [{\citenamefont {von Keyserlingk}\ \emph {et~al.}(2018)\citenamefont
  {von Keyserlingk}, \citenamefont {Rakovszky}, \citenamefont {Pollmann},\ and\
  \citenamefont {Sondhi}}]{von_keyserlingk_operator_2018}%
  \BibitemOpen
  \bibfield  {author} {\bibinfo {author} {\bibfnamefont {C.~W.}\ \bibnamefont
  {von Keyserlingk}}, \bibinfo {author} {\bibfnamefont {T.}~\bibnamefont
  {Rakovszky}}, \bibinfo {author} {\bibfnamefont {F.}~\bibnamefont {Pollmann}},
  \ and\ \bibinfo {author} {\bibfnamefont {S.~L.}\ \bibnamefont {Sondhi}},\
  }\bibfield  {title} {\bibinfo {title} {Operator hydrodynamics, {OTOCs,} and
  entanglement growth in systems without conservation laws},\ }\href {\doibase
  10.1103/physrevx.8.021013} {\bibfield  {journal} {\bibinfo  {journal} {Phys.
  Rev. X}\ }\textbf {\bibinfo {volume} {8}},\ \bibinfo {pages} {021013}
  (\bibinfo {year} {2018})}\BibitemShut {NoStop}%
\bibitem [{\citenamefont {Rakovszky}\ \emph {et~al.}(2018)\citenamefont
  {Rakovszky}, \citenamefont {Pollmann},\ and\ \citenamefont {von
  Keyserlingk}}]{rakovszky_diffusive_2018}%
  \BibitemOpen
  \bibfield  {author} {\bibinfo {author} {\bibfnamefont {T.}~\bibnamefont
  {Rakovszky}}, \bibinfo {author} {\bibfnamefont {F.}~\bibnamefont {Pollmann}},
  \ and\ \bibinfo {author} {\bibfnamefont {C.~W.}\ \bibnamefont {von
  Keyserlingk}},\ }\bibfield  {title} {\bibinfo {title} {Diffusive
  hydrodynamics of out-of-time-ordered correlators with charge conservation},\
  }\href {\doibase 10.1103/physrevx.8.031058} {\bibfield  {journal} {\bibinfo
  {journal} {Phys. Rev. X}\ }\textbf {\bibinfo {volume} {8}},\ \bibinfo {pages}
  {031058} (\bibinfo {year} {2018})}\BibitemShut {NoStop}%
\bibitem [{\citenamefont {Khemani}\ \emph {et~al.}(2018)\citenamefont
  {Khemani}, \citenamefont {Vishwanath},\ and\ \citenamefont
  {Huse}}]{khemani_operator_2018}%
  \BibitemOpen
  \bibfield  {author} {\bibinfo {author} {\bibfnamefont {V.}~\bibnamefont
  {Khemani}}, \bibinfo {author} {\bibfnamefont {A.}~\bibnamefont {Vishwanath}},
  \ and\ \bibinfo {author} {\bibfnamefont {D.~A.}\ \bibnamefont {Huse}},\
  }\bibfield  {title} {\bibinfo {title} {Operator spreading and the emergence
  of dissipative hydrodynamics under unitary evolution with conservation
  laws},\ }\href {\doibase 10.1103/physrevx.8.031057} {\bibfield  {journal}
  {\bibinfo  {journal} {Phys. Rev. X}\ }\textbf {\bibinfo {volume} {8}},\
  \bibinfo {pages} {031057} (\bibinfo {year} {2018})}\BibitemShut {NoStop}%
\bibitem [{\citenamefont {Nahum}\ \emph {et~al.}(2017)\citenamefont {Nahum},
  \citenamefont {Ruhman}, \citenamefont {Vijay},\ and\ \citenamefont
  {Haah}}]{nahum_quantum_2017}%
  \BibitemOpen
  \bibfield  {author} {\bibinfo {author} {\bibfnamefont {A.}~\bibnamefont
  {Nahum}}, \bibinfo {author} {\bibfnamefont {J.}~\bibnamefont {Ruhman}},
  \bibinfo {author} {\bibfnamefont {S.}~\bibnamefont {Vijay}}, \ and\ \bibinfo
  {author} {\bibfnamefont {J.}~\bibnamefont {Haah}},\ }\bibfield  {title}
  {\bibinfo {title} {Quantum entanglement growth under random unitary
  dynamics},\ }\href {\doibase 10.1103/physrevx.7.031016} {\bibfield  {journal}
  {\bibinfo  {journal} {Phys. Rev. X}\ }\textbf {\bibinfo {volume} {7}},\
  \bibinfo {pages} {031016} (\bibinfo {year} {2017})}\BibitemShut {NoStop}%
\bibitem [{\citenamefont {Bertini}\ \emph
  {et~al.}(2019{\natexlab{a}})\citenamefont {Bertini}, \citenamefont {Kos},\
  and\ \citenamefont {Prosen}}]{bertini_exact_2019-1}%
  \BibitemOpen
  \bibfield  {author} {\bibinfo {author} {\bibfnamefont {B.}~\bibnamefont
  {Bertini}}, \bibinfo {author} {\bibfnamefont {P.}~\bibnamefont {Kos}}, \ and\
  \bibinfo {author} {\bibfnamefont {T.}~\bibnamefont {Prosen}},\ }\bibfield
  {title} {\bibinfo {title} {Exact {{Correlation Functions}} for {{Dual-Unitary
  Lattice Models}} in \$1+1\$ {{Dimensions}}},\ }\href {\doibase
  10.1103/PhysRevLett.123.210601} {\bibfield  {journal} {\bibinfo  {journal}
  {Physical Review Letters}\ }\textbf {\bibinfo {volume} {123}},\ \bibinfo
  {pages} {210601} (\bibinfo {year} {2019}{\natexlab{a}})}\BibitemShut
  {NoStop}%
\bibitem [{\citenamefont {Dalzell}\ \emph {et~al.}(2020)\citenamefont
  {Dalzell}, \citenamefont {Harrow}, \citenamefont {Koh},\ and\ \citenamefont
  {La~Placa}}]{napp_efficient_2020}%
  \BibitemOpen
  \bibfield  {author} {\bibinfo {author} {\bibfnamefont {A.~M.}\ \bibnamefont
  {Dalzell}}, \bibinfo {author} {\bibfnamefont {A.~W.}\ \bibnamefont {Harrow}},
  \bibinfo {author} {\bibfnamefont {D.~E.}\ \bibnamefont {Koh}}, \ and\
  \bibinfo {author} {\bibfnamefont {R.~L.}\ \bibnamefont {La~Placa}},\
  }\bibfield  {title} {\bibinfo {title} {How many qubits are needed for quantum
  computational supremacy?}\ }\href {\doibase 10.22331/q-2020-05-11-264}
  {\bibfield  {journal} {\bibinfo  {journal} {Quantum}\ }\textbf {\bibinfo
  {volume} {4}},\ \bibinfo {pages} {264} (\bibinfo {year} {2020})}\BibitemShut
  {NoStop}%
\bibitem [{\citenamefont {Chan}\ \emph {et~al.}(2018)\citenamefont {Chan},
  \citenamefont {De~Luca},\ and\ \citenamefont {Chalker}}]{chan_solution_2018}%
  \BibitemOpen
  \bibfield  {author} {\bibinfo {author} {\bibfnamefont {A.}~\bibnamefont
  {Chan}}, \bibinfo {author} {\bibfnamefont {A.}~\bibnamefont {De~Luca}}, \
  and\ \bibinfo {author} {\bibfnamefont {J.~T.}\ \bibnamefont {Chalker}},\
  }\bibfield  {title} {\bibinfo {title} {Solution of a {{Minimal Model}} for
  {{Many-Body Quantum Chaos}}},\ }\href {\doibase 10.1103/PhysRevX.8.041019}
  {\bibfield  {journal} {\bibinfo  {journal} {Physical Review X}\ }\textbf
  {\bibinfo {volume} {8}},\ \bibinfo {pages} {041019} (\bibinfo {year}
  {2018})}\BibitemShut {NoStop}%
\bibitem [{\citenamefont {Bertini}\ \emph
  {et~al.}(2019{\natexlab{b}})\citenamefont {Bertini}, \citenamefont {Kos},\
  and\ \citenamefont {Prosen}}]{bertini_entanglement_2018}%
  \BibitemOpen
  \bibfield  {author} {\bibinfo {author} {\bibfnamefont {B.}~\bibnamefont
  {Bertini}}, \bibinfo {author} {\bibfnamefont {P.}~\bibnamefont {Kos}}, \ and\
  \bibinfo {author} {\bibfnamefont {T.}~\bibnamefont {Prosen}},\ }\bibfield
  {title} {\bibinfo {title} {Entanglement spreading in a minimal model of
  maximal many-body quantum chaos},\ }\href {\doibase
  10.1103/physrevx.9.021033} {\bibfield  {journal} {\bibinfo  {journal} {Phys.
  Rev. X}\ }\textbf {\bibinfo {volume} {9}} (\bibinfo {year}
  {2019}{\natexlab{b}}),\ 10.1103/physrevx.9.021033}\BibitemShut {NoStop}%
\bibitem [{\citenamefont {Tyson}(2003)}]{tyson_operator-schmidt_2003}%
  \BibitemOpen
  \bibfield  {author} {\bibinfo {author} {\bibfnamefont {J.}~\bibnamefont
  {Tyson}},\ }\bibfield  {title} {\bibinfo {title} {Operator-{Schmidt}
  decomposition of the quantum {Fourier} transform on n1 n2},\ }\href {\doibase
  10.1088/0305-4470/36/24/317} {\bibfield  {journal} {\bibinfo  {journal} {J.
  Phys. A: Math. Gen.}\ }\textbf {\bibinfo {volume} {36}},\ \bibinfo {pages}
  {6813} (\bibinfo {year} {2003})}\BibitemShut {NoStop}%
\bibitem [{\citenamefont {Bertini}\ \emph {et~al.}(2018)\citenamefont
  {Bertini}, \citenamefont {Kos},\ and\ \citenamefont
  {Prosen}}]{bertini_exact_2018}%
  \BibitemOpen
  \bibfield  {author} {\bibinfo {author} {\bibfnamefont {B.}~\bibnamefont
  {Bertini}}, \bibinfo {author} {\bibfnamefont {P.}~\bibnamefont {Kos}}, \ and\
  \bibinfo {author} {\bibfnamefont {T.}~\bibnamefont {Prosen}},\ }\bibfield
  {title} {\bibinfo {title} {Exact {{Spectral Form Factor}} in a {{Minimal
  Model}} of {{Many-Body Quantum Chaos}}},\ }\href {\doibase
  10.1103/PhysRevLett.121.264101} {\bibfield  {journal} {\bibinfo  {journal}
  {Physical Review Letters}\ }\textbf {\bibinfo {volume} {121}},\ \bibinfo
  {pages} {264101} (\bibinfo {year} {2018})}\BibitemShut {NoStop}%
\bibitem [{\citenamefont {Prosen}(2021)}]{prosen_many_2021}%
  \BibitemOpen
  \bibfield  {author} {\bibinfo {author} {\bibfnamefont {T.}~\bibnamefont
  {Prosen}},\ }\bibfield  {title} {\bibinfo {title} {Many-body quantum chaos
  and dual-unitarity round-a-face},\ }\href {\doibase 10.1063/5.0056970}
  {\bibfield  {journal} {\bibinfo  {journal} {Chaos}\ }\textbf {\bibinfo
  {volume} {31}},\ \bibinfo {pages} {093101} (\bibinfo {year}
  {2021})}\BibitemShut {NoStop}%
\bibitem [{\citenamefont {Gopalakrishnan}\ and\ \citenamefont
  {Lamacraft}(2019)}]{gopalakrishnan_unitary_2019}%
  \BibitemOpen
  \bibfield  {author} {\bibinfo {author} {\bibfnamefont {S.}~\bibnamefont
  {Gopalakrishnan}}\ and\ \bibinfo {author} {\bibfnamefont {A.}~\bibnamefont
  {Lamacraft}},\ }\bibfield  {title} {\bibinfo {title} {Unitary circuits of
  finite depth and infinite width from quantum channels},\ }\href {\doibase
  10.1103/physrevb.100.064309} {\bibfield  {journal} {\bibinfo  {journal}
  {Phys. Rev. B}\ }\textbf {\bibinfo {volume} {100}} (\bibinfo {year} {2019}),\
  10.1103/physrevb.100.064309}\BibitemShut {NoStop}%
\bibitem [{\citenamefont {Piroli}\ \emph {et~al.}(2020)\citenamefont {Piroli},
  \citenamefont {Bertini}, \citenamefont {Cirac},\ and\ \citenamefont
  {Prosen}}]{piroli_exact_2020}%
  \BibitemOpen
  \bibfield  {author} {\bibinfo {author} {\bibfnamefont {L.}~\bibnamefont
  {Piroli}}, \bibinfo {author} {\bibfnamefont {B.}~\bibnamefont {Bertini}},
  \bibinfo {author} {\bibfnamefont {J.~I.}\ \bibnamefont {Cirac}}, \ and\
  \bibinfo {author} {\bibfnamefont {T.}~\bibnamefont {Prosen}},\ }\bibfield
  {title} {\bibinfo {title} {Exact dynamics in dual-unitary quantum circuits},\
  }\href {\doibase 10.1103/PhysRevB.101.094304} {\bibfield  {journal} {\bibinfo
   {journal} {Physical Review B}\ }\textbf {\bibinfo {volume} {101}},\ \bibinfo
  {pages} {094304} (\bibinfo {year} {2020})}\BibitemShut {NoStop}%
\bibitem [{\citenamefont {Bensa}\ and\ \citenamefont {{\v Z}nidari{\v
  c}}(2021)}]{bensa_fastest_2021}%
  \BibitemOpen
  \bibfield  {author} {\bibinfo {author} {\bibfnamefont {J.}~\bibnamefont
  {Bensa}}\ and\ \bibinfo {author} {\bibfnamefont {M.}~\bibnamefont {{\v
  Z}nidari{\v c}}},\ }\bibfield  {title} {\bibinfo {title} {Fastest {{Local
  Entanglement Scrambler}}, {{Multistage Thermalization}}, and a
  {{Non-Hermitian Phantom}}},\ }\href {\doibase 10.1103/PhysRevX.11.031019}
  {\bibfield  {journal} {\bibinfo  {journal} {Physical Review X}\ }\textbf
  {\bibinfo {volume} {11}},\ \bibinfo {pages} {031019} (\bibinfo {year}
  {2021})}\BibitemShut {NoStop}%
\bibitem [{\citenamefont {Bertini}\ \emph {et~al.}(2020)\citenamefont
  {Bertini}, \citenamefont {Kos},\ and\ \citenamefont
  {Prosen}}]{bertini_operator_2020}%
  \BibitemOpen
  \bibfield  {author} {\bibinfo {author} {\bibfnamefont {B.}~\bibnamefont
  {Bertini}}, \bibinfo {author} {\bibfnamefont {P.}~\bibnamefont {Kos}}, \ and\
  \bibinfo {author} {\bibfnamefont {T.}~\bibnamefont {Prosen}},\ }\bibfield
  {title} {\bibinfo {title} {Operator {{Entanglement}} in {{Local Quantum
  Circuits I}}: {{Chaotic Dual-Unitary Circuits}}},\ }\href {\doibase
  10.21468/SciPostPhys.8.4.067} {\bibfield  {journal} {\bibinfo  {journal}
  {SciPost Physics}\ }\textbf {\bibinfo {volume} {8}},\ \bibinfo {pages} {067}
  (\bibinfo {year} {2020})}\BibitemShut {NoStop}%
\bibitem [{\citenamefont {Kos}\ \emph {et~al.}(2021)\citenamefont {Kos},
  \citenamefont {Bertini},\ and\ \citenamefont
  {Prosen}}]{kos_correlations_2021}%
  \BibitemOpen
  \bibfield  {author} {\bibinfo {author} {\bibfnamefont {P.}~\bibnamefont
  {Kos}}, \bibinfo {author} {\bibfnamefont {B.}~\bibnamefont {Bertini}}, \ and\
  \bibinfo {author} {\bibfnamefont {T.}~\bibnamefont {Prosen}},\ }\bibfield
  {title} {\bibinfo {title} {Correlations in {{Perturbed Dual-Unitary
  Circuits}}: {{Efficient Path-Integral Formula}}},\ }\href {\doibase
  10.1103/PhysRevX.11.011022} {\bibfield  {journal} {\bibinfo  {journal}
  {Physical Review X}\ }\textbf {\bibinfo {volume} {11}},\ \bibinfo {pages}
  {011022} (\bibinfo {year} {2021})}\BibitemShut {NoStop}%
\bibitem [{\citenamefont {Bertini}\ and\ \citenamefont
  {Piroli}(2020)}]{bertini_scrambling_2020}%
  \BibitemOpen
  \bibfield  {author} {\bibinfo {author} {\bibfnamefont {B.}~\bibnamefont
  {Bertini}}\ and\ \bibinfo {author} {\bibfnamefont {L.}~\bibnamefont
  {Piroli}},\ }\bibfield  {title} {\bibinfo {title} {Scrambling in random
  unitary circuits: {{Exact}} results},\ }\href {\doibase
  10.1103/PhysRevB.102.064305} {\bibfield  {journal} {\bibinfo  {journal}
  {Physical Review B}\ }\textbf {\bibinfo {volume} {102}},\ \bibinfo {pages}
  {064305} (\bibinfo {year} {2020})}\BibitemShut {NoStop}%
\bibitem [{\citenamefont {Bertini}\ \emph {et~al.}(2021)\citenamefont
  {Bertini}, \citenamefont {Kos},\ and\ \citenamefont
  {Prosen}}]{bertini_random_2021}%
  \BibitemOpen
  \bibfield  {author} {\bibinfo {author} {\bibfnamefont {B.}~\bibnamefont
  {Bertini}}, \bibinfo {author} {\bibfnamefont {P.}~\bibnamefont {Kos}}, \ and\
  \bibinfo {author} {\bibfnamefont {T.}~\bibnamefont {Prosen}},\ }\bibfield
  {title} {\bibinfo {title} {Random {{Matrix Spectral Form Factor}} of
  {{Dual-Unitary Quantum Circuits}}},\ }\href {\doibase
  10.1007/s00220-021-04139-2} {\bibfield  {journal} {\bibinfo  {journal}
  {Communications in Mathematical Physics}\ }\textbf {\bibinfo {volume}
  {387}},\ \bibinfo {pages} {597} (\bibinfo {year} {2021})}\BibitemShut
  {NoStop}%
\bibitem [{\citenamefont {Claeys}\ and\ \citenamefont
  {Lamacraft}(2020)}]{claeys_maximum_2020}%
  \BibitemOpen
  \bibfield  {author} {\bibinfo {author} {\bibfnamefont {P.~W.}\ \bibnamefont
  {Claeys}}\ and\ \bibinfo {author} {\bibfnamefont {A.}~\bibnamefont
  {Lamacraft}},\ }\bibfield  {title} {\bibinfo {title} {Maximum velocity
  quantum circuits},\ }\href {\doibase 10.1103/physrevresearch.2.033032}
  {\bibfield  {journal} {\bibinfo  {journal} {Phys. Rev. Research}\ }\textbf
  {\bibinfo {volume} {2}},\ \bibinfo {pages} {033032} (\bibinfo {year}
  {2020})}\BibitemShut {NoStop}%
\bibitem [{FN()}]{FN}%
  \BibitemOpen
  \href@noop {} {}\bibinfo {note} {This is weaker than requiring the R\'enyi
  entanglement velocity (with R\'enyi index $n \ge 1$) to be equal to $1$, as
  von Neumann entropy is always greater than or equal to R\'enyi entropies with
  a positive integer index.}\BibitemShut {Stop}%
\bibitem [{\citenamefont {Bennett}\ \emph {et~al.}(2003)\citenamefont
  {Bennett}, \citenamefont {Harrow}, \citenamefont {Leung},\ and\ \citenamefont
  {Smolin}}]{bennett_capacities_2003}%
  \BibitemOpen
  \bibfield  {author} {\bibinfo {author} {\bibfnamefont {C.}~\bibnamefont
  {Bennett}}, \bibinfo {author} {\bibfnamefont {A.}~\bibnamefont {Harrow}},
  \bibinfo {author} {\bibfnamefont {D.}~\bibnamefont {Leung}}, \ and\ \bibinfo
  {author} {\bibfnamefont {J.}~\bibnamefont {Smolin}},\ }\bibfield  {title}
  {\bibinfo {title} {On the capacities of bipartite hamiltonians and unitary
  gates},\ }\href {\doibase 10.1109/tit.2003.814935} {\bibfield  {journal}
  {\bibinfo  {journal} {IEEE Trans. Inform. Theory}\ }\textbf {\bibinfo
  {volume} {49}},\ \bibinfo {pages} {1895} (\bibinfo {year}
  {2003})}\BibitemShut {NoStop}%
\bibitem [{\citenamefont {Bravyi}(2007)}]{bravyi_upper_2007}%
  \BibitemOpen
  \bibfield  {author} {\bibinfo {author} {\bibfnamefont {S.}~\bibnamefont
  {Bravyi}},\ }\bibfield  {title} {\bibinfo {title} {Upper bounds on entangling
  rates of bipartite hamiltonians},\ }\href {\doibase
  10.1103/physreva.76.052319} {\bibfield  {journal} {\bibinfo  {journal} {Phys.
  Rev. A}\ }\textbf {\bibinfo {volume} {76}},\ \bibinfo {pages} {052319}
  (\bibinfo {year} {2007})}\BibitemShut {NoStop}%
\bibitem [{\citenamefont {Lieb}\ and\ \citenamefont
  {Vershynina}(2013)}]{lieb_upper_2013}%
  \BibitemOpen
  \bibfield  {author} {\bibinfo {author} {\bibfnamefont {E.~H.}\ \bibnamefont
  {Lieb}}\ and\ \bibinfo {author} {\bibfnamefont {A.}~\bibnamefont
  {Vershynina}},\ }\bibfield  {title} {\bibinfo {title} {Upper bounds on mixing
  rates},\ }\href {\doibase 10.26421/qic13.11-12-5} {\bibfield  {journal}
  {\bibinfo  {journal} {QIC}\ }\textbf {\bibinfo {volume} {13}},\ \bibinfo
  {pages} {986} (\bibinfo {year} {2013})}\BibitemShut {NoStop}%
\bibitem [{\citenamefont {Van~Acoleyen}\ \emph {et~al.}(2013)\citenamefont
  {Van~Acoleyen}, \citenamefont {Mariën},\ and\ \citenamefont
  {Verstraete}}]{van_acoleyen_entanglement_2013}%
  \BibitemOpen
  \bibfield  {author} {\bibinfo {author} {\bibfnamefont {K.}~\bibnamefont
  {Van~Acoleyen}}, \bibinfo {author} {\bibfnamefont {M.}~\bibnamefont
  {Mariën}}, \ and\ \bibinfo {author} {\bibfnamefont {F.}~\bibnamefont
  {Verstraete}},\ }\bibfield  {title} {\bibinfo {title} {Entanglement rates and
  area laws},\ }\href {\doibase 10.1103/physrevlett.111.170501} {\bibfield
  {journal} {\bibinfo  {journal} {Phys. Rev. Lett.}\ }\textbf {\bibinfo
  {volume} {111}},\ \bibinfo {pages} {170501} (\bibinfo {year}
  {2013})}\BibitemShut {NoStop}%
\bibitem [{\citenamefont {Mariën}\ \emph {et~al.}(2016)\citenamefont
  {Mariën}, \citenamefont {Audenaert}, \citenamefont {Van~Acoleyen},\ and\
  \citenamefont {Verstraete}}]{marien_entanglement_2016}%
  \BibitemOpen
  \bibfield  {author} {\bibinfo {author} {\bibfnamefont {M.}~\bibnamefont
  {Mariën}}, \bibinfo {author} {\bibfnamefont {K.~M.~R.}\ \bibnamefont
  {Audenaert}}, \bibinfo {author} {\bibfnamefont {K.}~\bibnamefont
  {Van~Acoleyen}}, \ and\ \bibinfo {author} {\bibfnamefont {F.}~\bibnamefont
  {Verstraete}},\ }\bibfield  {title} {\bibinfo {title} {Entanglement rates and
  the stability of the area law for the entanglement entropy},\ }\href
  {\doibase 10.1007/s00220-016-2709-5} {\bibfield  {journal} {\bibinfo
  {journal} {Commun. Math. Phys.}\ }\textbf {\bibinfo {volume} {346}},\
  \bibinfo {pages} {35} (\bibinfo {year} {2016})}\BibitemShut {NoStop}%
\bibitem [{\citenamefont {Avery}\ and\ \citenamefont
  {Paulos}(2014)}]{avery_universal_2014}%
  \BibitemOpen
  \bibfield  {author} {\bibinfo {author} {\bibfnamefont {S.~G.}\ \bibnamefont
  {Avery}}\ and\ \bibinfo {author} {\bibfnamefont {M.~F.}\ \bibnamefont
  {Paulos}},\ }\bibfield  {title} {\bibinfo {title} {Universal bounds on the
  time evolution of entanglement entropy},\ }\href {\doibase
  10.1103/physrevlett.113.231604} {\bibfield  {journal} {\bibinfo  {journal}
  {Phys. Rev. Lett.}\ }\textbf {\bibinfo {volume} {113}},\ \bibinfo {pages}
  {231604} (\bibinfo {year} {2014})}\BibitemShut {NoStop}%
\bibitem [{\citenamefont {Wilde}(2013)}]{Wilde-book}%
  \BibitemOpen
  \bibfield  {author} {\bibinfo {author} {\bibfnamefont {M.~M.}\ \bibnamefont
  {Wilde}},\ }\href@noop {} {\emph {\bibinfo {title} {Quantum Information
  Theory}}}\ (\bibinfo  {publisher} {Cambridge University Press},\ \bibinfo
  {year} {2013})\ \Eprint {http://arxiv.org/abs/1106.1445} {1106.1445}
  \BibitemShut {NoStop}%
\bibitem [{\citenamefont {Abeyesinghe}\ \emph {et~al.}(2009)\citenamefont
  {Abeyesinghe}, \citenamefont {Devetak}, \citenamefont {Hayden},\ and\
  \citenamefont {Winter}}]{abeyesinghe_mother_2009}%
  \BibitemOpen
  \bibfield  {author} {\bibinfo {author} {\bibfnamefont {A.}~\bibnamefont
  {Abeyesinghe}}, \bibinfo {author} {\bibfnamefont {I.}~\bibnamefont
  {Devetak}}, \bibinfo {author} {\bibfnamefont {P.}~\bibnamefont {Hayden}}, \
  and\ \bibinfo {author} {\bibfnamefont {A.}~\bibnamefont {Winter}},\
  }\bibfield  {title} {\bibinfo {title} {The mother of all protocols:
  {Restructuring} quantum information's family tree},\ }\href {\doibase
  10.1098/rspa.2009.0202} {\bibfield  {journal} {\bibinfo  {journal} {Proc. R.
  Soc. A.}\ }\textbf {\bibinfo {volume} {465}},\ \bibinfo {pages} {2537}
  (\bibinfo {year} {2009})}\BibitemShut {NoStop}%
\bibitem [{\citenamefont {Jonay}\ \emph {et~al.}(2018)\citenamefont {Jonay},
  \citenamefont {Huse},\ and\ \citenamefont
  {Nahum}}]{jonay_coarse-grained_2018}%
  \BibitemOpen
  \bibfield  {author} {\bibinfo {author} {\bibfnamefont {C.}~\bibnamefont
  {Jonay}}, \bibinfo {author} {\bibfnamefont {D.~A.}\ \bibnamefont {Huse}}, \
  and\ \bibinfo {author} {\bibfnamefont {A.}~\bibnamefont {Nahum}},\ }\bibfield
   {title} {\bibinfo {title} {Coarse-grained dynamics of operator and state
  entanglement},\ }\href@noop {} {\bibfield  {journal} {\bibinfo  {journal}
  {arXiv:1803.00089 [cond-mat, physics:hep-th, physics:nlin,
  physics:quant-ph]}\ } (\bibinfo {year} {2018})}\BibitemShut {NoStop}%
\bibitem [{\citenamefont {Zhou}\ and\ \citenamefont
  {Nahum}(2019)}]{zhou_emergent_2019}%
  \BibitemOpen
  \bibfield  {author} {\bibinfo {author} {\bibfnamefont {T.}~\bibnamefont
  {Zhou}}\ and\ \bibinfo {author} {\bibfnamefont {A.}~\bibnamefont {Nahum}},\
  }\bibfield  {title} {\bibinfo {title} {Emergent statistical mechanics of
  entanglement in random unitary circuits},\ }\href {\doibase
  10.1103/PhysRevB.99.174205} {\bibfield  {journal} {\bibinfo  {journal}
  {Physical Review B}\ }\textbf {\bibinfo {volume} {99}},\ \bibinfo {pages}
  {174205} (\bibinfo {year} {2019})}\BibitemShut {NoStop}%
\bibitem [{\citenamefont {Skinner}\ \emph {et~al.}(2019)\citenamefont
  {Skinner}, \citenamefont {Ruhman},\ and\ \citenamefont
  {Nahum}}]{skinner_measurement-induced_2019}%
  \BibitemOpen
  \bibfield  {author} {\bibinfo {author} {\bibfnamefont {B.}~\bibnamefont
  {Skinner}}, \bibinfo {author} {\bibfnamefont {J.}~\bibnamefont {Ruhman}}, \
  and\ \bibinfo {author} {\bibfnamefont {A.}~\bibnamefont {Nahum}},\ }\bibfield
   {title} {\bibinfo {title} {Measurement-{{Induced Phase Transitions}} in the
  {{Dynamics}} of {{Entanglement}}},\ }\href {\doibase
  10.1103/PhysRevX.9.031009} {\bibfield  {journal} {\bibinfo  {journal}
  {Physical Review X}\ }\textbf {\bibinfo {volume} {9}},\ \bibinfo {pages}
  {031009} (\bibinfo {year} {2019})}\BibitemShut {NoStop}%
\bibitem [{\citenamefont {Li}\ \emph {et~al.}(2018)\citenamefont {Li},
  \citenamefont {Chen},\ and\ \citenamefont {Fisher}}]{li_quantum_2018}%
  \BibitemOpen
  \bibfield  {author} {\bibinfo {author} {\bibfnamefont {Y.}~\bibnamefont
  {Li}}, \bibinfo {author} {\bibfnamefont {X.}~\bibnamefont {Chen}}, \ and\
  \bibinfo {author} {\bibfnamefont {M.~P.~A.}\ \bibnamefont {Fisher}},\
  }\bibfield  {title} {\bibinfo {title} {Quantum {{Zeno}} effect and the
  many-body entanglement transition},\ }\href {\doibase
  10.1103/PhysRevB.98.205136} {\bibfield  {journal} {\bibinfo  {journal}
  {Physical Review B}\ }\textbf {\bibinfo {volume} {98}},\ \bibinfo {pages}
  {205136} (\bibinfo {year} {2018})}\BibitemShut {NoStop}%
\bibitem [{\citenamefont {Chan}\ \emph {et~al.}(2019)\citenamefont {Chan},
  \citenamefont {Nandkishore}, \citenamefont {Pretko},\ and\ \citenamefont
  {Smith}}]{chan_unitary-projective_2019}%
  \BibitemOpen
  \bibfield  {author} {\bibinfo {author} {\bibfnamefont {A.}~\bibnamefont
  {Chan}}, \bibinfo {author} {\bibfnamefont {R.~M.}\ \bibnamefont
  {Nandkishore}}, \bibinfo {author} {\bibfnamefont {M.}~\bibnamefont {Pretko}},
  \ and\ \bibinfo {author} {\bibfnamefont {G.}~\bibnamefont {Smith}},\
  }\bibfield  {title} {\bibinfo {title} {Unitary-projective entanglement
  dynamics},\ }\href {\doibase 10.1103/PhysRevB.99.224307} {\bibfield
  {journal} {\bibinfo  {journal} {Physical Review B}\ }\textbf {\bibinfo
  {volume} {99}},\ \bibinfo {pages} {224307} (\bibinfo {year}
  {2019})}\BibitemShut {NoStop}%
\bibitem [{\citenamefont {Choi}\ \emph {et~al.}(2020)\citenamefont {Choi},
  \citenamefont {Bao}, \citenamefont {Qi},\ and\ \citenamefont
  {Altman}}]{choi_quantum_2020}%
  \BibitemOpen
  \bibfield  {author} {\bibinfo {author} {\bibfnamefont {S.}~\bibnamefont
  {Choi}}, \bibinfo {author} {\bibfnamefont {Y.}~\bibnamefont {Bao}}, \bibinfo
  {author} {\bibfnamefont {X.-L.}\ \bibnamefont {Qi}}, \ and\ \bibinfo {author}
  {\bibfnamefont {E.}~\bibnamefont {Altman}},\ }\bibfield  {title} {\bibinfo
  {title} {Quantum {{Error Correction}} in {{Scrambling Dynamics}} and
  {{Measurement-Induced Phase Transition}}},\ }\href {\doibase
  10.1103/PhysRevLett.125.030505} {\bibfield  {journal} {\bibinfo  {journal}
  {Physical Review Letters}\ }\textbf {\bibinfo {volume} {125}},\ \bibinfo
  {pages} {030505} (\bibinfo {year} {2020})}\BibitemShut {NoStop}%
\bibitem [{\citenamefont {Gullans}\ and\ \citenamefont
  {Huse}(2020)}]{gullans_dynamical_2020-1}%
  \BibitemOpen
  \bibfield  {author} {\bibinfo {author} {\bibfnamefont {M.~J.}\ \bibnamefont
  {Gullans}}\ and\ \bibinfo {author} {\bibfnamefont {D.~A.}\ \bibnamefont
  {Huse}},\ }\bibfield  {title} {\bibinfo {title} {Dynamical {{Purification
  Phase Transition Induced}} by {{Quantum Measurements}}},\ }\href {\doibase
  10.1103/PhysRevX.10.041020} {\bibfield  {journal} {\bibinfo  {journal}
  {Physical Review X}\ }\textbf {\bibinfo {volume} {10}},\ \bibinfo {pages}
  {041020} (\bibinfo {year} {2020})}\BibitemShut {NoStop}%
\bibitem [{\citenamefont {Wilde}\ \emph {et~al.}(2014)\citenamefont {Wilde},
  \citenamefont {Winter},\ and\ \citenamefont {Yang}}]{wilde_strong_2014}%
  \BibitemOpen
  \bibfield  {author} {\bibinfo {author} {\bibfnamefont {M.~M.}\ \bibnamefont
  {Wilde}}, \bibinfo {author} {\bibfnamefont {A.}~\bibnamefont {Winter}}, \
  and\ \bibinfo {author} {\bibfnamefont {D.}~\bibnamefont {Yang}},\ }\bibfield
  {title} {\bibinfo {title} {Strong converse for the classical capacity of
  entanglement-breaking and {{Hadamard}} channels via a sandwiched {{Renyi}}
  relative entropy},\ }\href {\doibase 10.1007/s00220-014-2122-x} {\bibfield
  {journal} {\bibinfo  {journal} {Communications in Mathematical Physics}\
  }\textbf {\bibinfo {volume} {331}},\ \bibinfo {pages} {593} (\bibinfo {year}
  {2014})}\BibitemShut {NoStop}%
\bibitem [{\citenamefont {{M{\"u}ller-Lennert}}\ \emph
  {et~al.}(2013)\citenamefont {{M{\"u}ller-Lennert}}, \citenamefont {Dupuis},
  \citenamefont {Szehr}, \citenamefont {Fehr},\ and\ \citenamefont
  {Tomamichel}}]{muller-lennert_quantum_2013}%
  \BibitemOpen
  \bibfield  {author} {\bibinfo {author} {\bibfnamefont {M.}~\bibnamefont
  {{M{\"u}ller-Lennert}}}, \bibinfo {author} {\bibfnamefont {F.}~\bibnamefont
  {Dupuis}}, \bibinfo {author} {\bibfnamefont {O.}~\bibnamefont {Szehr}},
  \bibinfo {author} {\bibfnamefont {S.}~\bibnamefont {Fehr}}, \ and\ \bibinfo
  {author} {\bibfnamefont {M.}~\bibnamefont {Tomamichel}},\ }\bibfield  {title}
  {\bibinfo {title} {On quantum {{Renyi}} entropies: A new generalization and
  some properties},\ }\href {\doibase 10.1063/1.4838856} {\bibfield  {journal}
  {\bibinfo  {journal} {Journal of Mathematical Physics}\ }\textbf {\bibinfo
  {volume} {54}},\ \bibinfo {pages} {122203} (\bibinfo {year}
  {2013})}\BibitemShut {NoStop}%
\bibitem [{\citenamefont {Beigi}(2013)}]{beigi_sandwiched_2013}%
  \BibitemOpen
  \bibfield  {author} {\bibinfo {author} {\bibfnamefont {S.}~\bibnamefont
  {Beigi}},\ }\bibfield  {title} {\bibinfo {title} {Sandwiched
  {{R}}\textbackslash 'enyi {{Divergence Satisfies Data Processing
  Inequality}}},\ }\href {\doibase 10.1063/1.4838855} {\bibfield  {journal}
  {\bibinfo  {journal} {Journal of Mathematical Physics}\ }\textbf {\bibinfo
  {volume} {54}},\ \bibinfo {pages} {122202} (\bibinfo {year}
  {2013})}\BibitemShut {NoStop}%
\bibitem [{\citenamefont {Devetak}\ \emph {et~al.}(2008)\citenamefont
  {Devetak}, \citenamefont {Harrow},\ and\ \citenamefont
  {Winter}}]{devetak_resource_2008}%
  \BibitemOpen
  \bibfield  {author} {\bibinfo {author} {\bibfnamefont {I.}~\bibnamefont
  {Devetak}}, \bibinfo {author} {\bibfnamefont {A.~W.}\ \bibnamefont {Harrow}},
  \ and\ \bibinfo {author} {\bibfnamefont {A.~J.}\ \bibnamefont {Winter}},\
  }\bibfield  {title} {\bibinfo {title} {A resource framework for quantum
  {Shannon} theory},\ }\href {\doibase 10.1109/tit.2008.928980} {\bibfield
  {journal} {\bibinfo  {journal} {IEEE Trans. Inform. Theory}\ }\textbf
  {\bibinfo {volume} {54}},\ \bibinfo {pages} {4587} (\bibinfo {year}
  {2008})}\BibitemShut {NoStop}%
\bibitem [{\citenamefont {Rather}\ \emph {et~al.}(2020)\citenamefont {Rather},
  \citenamefont {Aravinda},\ and\ \citenamefont
  {Lakshminarayan}}]{rather_creating_2020}%
  \BibitemOpen
  \bibfield  {author} {\bibinfo {author} {\bibfnamefont {S.~A.}\ \bibnamefont
  {Rather}}, \bibinfo {author} {\bibfnamefont {S.}~\bibnamefont {Aravinda}}, \
  and\ \bibinfo {author} {\bibfnamefont {A.}~\bibnamefont {Lakshminarayan}},\
  }\bibfield  {title} {\bibinfo {title} {Creating ensembles of dual unitary and
  maximally entangling quantum evolutions},\ }\href {\doibase
  10.1103/physrevlett.125.070501} {\bibfield  {journal} {\bibinfo  {journal}
  {Phys. Rev. Lett.}\ }\textbf {\bibinfo {volume} {125}},\ \bibinfo {pages}
  {070501} (\bibinfo {year} {2020})}\BibitemShut {NoStop}%
\bibitem [{\citenamefont {Khaneja}\ and\ \citenamefont
  {Glaser}(2000)}]{khaneja_cartan_2000}%
  \BibitemOpen
  \bibfield  {author} {\bibinfo {author} {\bibfnamefont {N.}~\bibnamefont
  {Khaneja}}\ and\ \bibinfo {author} {\bibfnamefont {S.}~\bibnamefont
  {Glaser}},\ }\bibfield  {title} {\bibinfo {title} {Cartan {{Decomposition}}
  of {{{SU}}(2\^n),} {{Constructive Controllability}} of {{Spin}} systems and
  {{Universal Quantum Computing}}},\ }\href@noop {} {\bibfield  {journal}
  {\bibinfo  {journal} {arXiv:quant-ph/0010100}\ } (\bibinfo {year}
  {2000})}\BibitemShut {NoStop}%
\bibitem [{\citenamefont {Zhou}\ and\ \citenamefont
  {Luitz}(2017)}]{zhou_operator_2017}%
  \BibitemOpen
  \bibfield  {author} {\bibinfo {author} {\bibfnamefont {T.}~\bibnamefont
  {Zhou}}\ and\ \bibinfo {author} {\bibfnamefont {D.~J.}\ \bibnamefont
  {Luitz}},\ }\bibfield  {title} {\bibinfo {title} {Operator entanglement
  entropy of the time evolution operator in chaotic systems},\ }\href {\doibase
  10.1103/PhysRevB.95.094206} {\bibfield  {journal} {\bibinfo  {journal}
  {Physical Review B}\ }\textbf {\bibinfo {volume} {95}},\ \bibinfo {pages}
  {094206} (\bibinfo {year} {2017})}\BibitemShut {NoStop}%
\bibitem [{\citenamefont {Nadal}\ \emph {et~al.}(2011)\citenamefont {Nadal},
  \citenamefont {Majumdar},\ and\ \citenamefont
  {Vergassola}}]{nadal_statistical_2011-1}%
  \BibitemOpen
  \bibfield  {author} {\bibinfo {author} {\bibfnamefont {C.}~\bibnamefont
  {Nadal}}, \bibinfo {author} {\bibfnamefont {S.~N.}\ \bibnamefont {Majumdar}},
  \ and\ \bibinfo {author} {\bibfnamefont {M.}~\bibnamefont {Vergassola}},\
  }\bibfield  {title} {\bibinfo {title} {Statistical {{Distribution}} of
  {{Quantum Entanglement}} for a {{Random Bipartite State}}},\ }\href {\doibase
  10.1007/s10955-010-0108-4} {\bibfield  {journal} {\bibinfo  {journal}
  {Journal of Statistical Physics}\ }\textbf {\bibinfo {volume} {142}},\
  \bibinfo {pages} {403} (\bibinfo {year} {2011})}\BibitemShut {NoStop}%
\end{thebibliography}%

\end{document}